\newcommand {\imp} {\rightarrow}
\newcommand {\sx} {\langle}
\newcommand {\dx} {\rangle}
\newcommand {\emme} {\mathcal{M}}
\newcommand {\tc} {\mid}
\newcommand {\vuoto} {\emptyset}
\newcommand{\U}{{\cal U}}
\newcommand{\alc}{\mathcal{ALC}}
\newcommand{\al}{\mathcal{LC}}
\newcommand{\alcio}{\mathcal{ALCOI}}
\newcommand{\shiq}{\mathcal{SHIQ}}
\newcommand{\shoiq}{\mathcal{SHOIQ}}
\newcommand{\sroiq}{\mathcal{SROIQ}}
\newcommand{\be}{\begin{enumerate}}
\newcommand{\ee}{\end{enumerate}}
\newcommand{\hide}[1]{}
\newcommand{\HF}{\mathsf{HF}}
\def \cases{\left \{\begin{array}{l}}
\def \endcases{\end{array}\right .}
\newcommand {\bes} {\begin{description}}
\newcommand{\ens} {\end{description}}
\newcommand {\beq} {\begin{quote}}
\newcommand {\enq} {\end{quote}}
\newcommand {\bit} {\begin{itemize}}
\newcommand {\enit} {\end{itemize}}
\newcommand {\window}{ 
	\begin{tikzpicture}
	\draw rectangle (0.2,0.2);
	\draw rectangle (0.1,0.2);
	\end{tikzpicture} \  }
\newcommand{\nats}{\mathbb{N}}
\newenvironment{pozz}{\color{black}}{\color{black}}
\title{Adding the Power-Set to Description Logics}
\author{Laura Giordano \inst{1} \and Alberto Policriti \inst{2}}
\institute{DISIT - Universit\`a del Piemonte Orientale - 
 Alessandria, Italy - \email{\small laura.giordano@uniupo.it} \and
 Dipartimento di Scienze Matematiche, Informatiche e Fisiche, Universit\`a di Udine\\
Istituto di Genomica Applicata, Parco Scientifico e Tecnologico ``L. Danieli",  Italy - \\ \email{\small alberto.policriti@uniud.it}}
\begin{document}

 \maketitle


\begin{abstract} 
We explore the relationships between Description Logics and  Set Theory. The study is carried on using, on the set-theoretic side, a very rudimentary axiomatic set theory $ \Omega $, consisting of only four axioms characterizing binary union, set difference, inclusion, and the power-set. 
An extension of $\alc$, $\alc^\Omega$,  is then defined 
in which concepts are naturally interpreted as sets living in $\Omega$-models.
In $\alc^\Omega$ not only membership between concepts is allowed---even admitting circularity---but also the power-set construct is exploited 
to add metamodeling capabilities. 
We investigate translations of $\alc^\Omega$ into standard description logics as  well as a set-theoretic translation.
A polynomial encoding of  $\alc^\Omega$ in $\alcio$ proves the validity of the finite model property as well as 
an \textsc{ExpTime} upper bound on the complexity of concept satisfiability. 
We develop a set-theoretic translation of $\alc^\Omega$ in the theory $\Omega$, exploiting a technique 
proposed 
for translating normal modal and polymodal logics into $\Omega$. Finally, we show that the fragment $ \al^{\Omega} $ of $\alc^\Omega$, which does not admit roles and individual names, is as expressive as $\alc^\Omega$
\end{abstract}

 \section{Introduction}
 
 Concept and concept constructors in Description Logics (DLs)  allow  
 to manage information built-up and stored as collections of elements of a given domain.
In this paper we would like to take the above statement ``seriously'' and put forward a DL
 doubly linked with a (very simple, axiomatic) set theory. Such a logic will be suitable to manipulate concepts (also called classes in OWL  \cite{OWL}) as first-class citizens, in the sense that it will allow the possibility to have concepts as instances (a.k.a. elements) of other concepts. From the set-theoretic point of view this is \emph{the} way to proceed, as stated in the following quotation from the celebrated \emph{Naive Set Theory} (\cite{Halmos}):
 \begin{quote}
\emph{Sets, as they are usually conceived, have {elements} or {members}. An element of a set may be a wolf, a grape, or a pidgeon. It is important to know that a set itself may also be an element of some other set. [...] What may be surprising is not so much that sets may occur as elements, but that for mathematical purposes no other elements need ever be considered.}
 
 \hfill {\sc P. Halmos}
 \end{quote}

Also in the Description Logic arena the idea of enhancing the language of description logics with statements of the form $ C \in D $, with $C$ and $D$ concepts, is not new, as assertions of the form $D(A)$, with $A$ a concept name, are already allowed  in OWL-Full \cite{OWL}.
Here,  we do not consider roles, i.e. relations among individuals (also called properties in OWL),  as possible instances of concepts. However,
we would like to push the usage of membership among concepts a little forward, allowing not only the possibility of stating that an arbitrary concept $C$ can be thought of as an instance of another one ($ C \in D $)---or even as an instance of itself ($ C \in C $)---but also opening to the possibility of talking about \emph{all possible} sub-concepts of $ C $, that is adding memberships to the  \textit{power-set}  $ \texttt{Pow}(C) $ of $ C $.

In order to realize our plan we introduce a DL, to be dubbed $ \alc^{\Omega} $, whose two parents are $ \alc $ and a rudimentary (finitely axiomatized) set theory $ \Omega $.

Considering an example taken from \cite{Welty1994,Motik05}, 
using membership axioms, we can represent the fact that eagles are { in the red list of endangered species}
by the axiom $\mathit{Eagle \in RedListSpecies}$ and that Harry is an eagle, by the assertion $\mathit{Eagle(harry)}$.  
We could further consider a concept $ \mathit{ModifiableList}$, consisting of those lists that can be modified 
and, for example, it would be reasonable to ask $\mathit{RedListSpecies \in  Modifiable}$ $\mathit{List }$ but, more interestingly, we would also clearly want  $\mathit{ModifiableList \in ModifiableList}$. 

The power-set concept, $ \texttt{Pow}(C) $, 
allows to capture in a natural way the interactions between concepts and metaconcepts.
Considering again the example above,
the statement ``all  instances of species in the red list are not allowed to be hunted",
can be represented by the  concept inclusion axiom:
$\mathit{RedListSpecies \sqsubseteq \texttt{Pow}(CannotHunt)}$, meaning that
all the instances in the $\mathit{RedListSpecies}$ (as the class $\mathit{Eagle}$) are collections of individuals of the class $\mathit{CannotHunt}$. 
Notice, however, that $\texttt{Pow}(CannotHunt) $ is not limited to include $\mathit{RedListSpecies}$ but can include a much larger universe of sets (e.g. anything belonging to  $ \texttt{Pow}(\mathit{Humans}) $).

Motik has shown in \cite{Motik05} that the semantics of metamodeling adopted in OWL-Full leads to undecidability already for $\alc$-Full, due to the free mixing of logical and metalogical symbols. In \cite{Motik05}, limiting this free mixing but allowing atomic names to be interpreted as concepts and to occur as instances of other concepts,  two alternative semantics (the Contextual $\pi$-semantics and the Hilog $\nu$-semantics) are proposed for metamodeling.  Decidability of $\shoiq$ extended with metamodeling is proved under either one of the two proposed semantics. 
Many other approaches to metamodeling have been proposed in the literature, including membership among concepts. 
Most of them  \cite{Badea1997,DeGiacomo2011,HomolaDL14,KubincovaDL15,Gu2016} 
are based on a Hilog semantics, while \cite{Pan2005,Motz2015} define extensions of OWL DL and of $\shiq$ (respectively), 
based on semantics interpreting concepts as well-founded sets---i.e. sets with no cycles or infinite descending chains of $ \in $-related sets.  
None of these proposals includes the power-set concept constructor in the language
apart from \cite{Badea1997}, where a way of representing the power-set 
in description logics was suggested.

Here, we propose an extension of $\alc$ with power-set concepts and membership axioms among concepts, whose semantics is naturally defined using sets (not necessarily well-founded) living in $\Omega$-models. 
We first prove that $ \alc^{\Omega} $ is decidable by defining, for any $ \alc^{\Omega} $ knowledge base $K$, a polynomial translation $K^T$ into $\alcio$,
exploiting a technique---originally proposed and studied in \cite{DAgostino1995} for defining a set-theoretic translation of normal modal logics ---consisting in identifying the membership relation $ \in $ with the accessibility relation of a normal modality.  
Such an identification naturally leads to a correspondence between the power-set operator and the modal necessity operator $ \Box $, a correspondence used here to translate power-set concepts into $ \forall R.C $-type concepts.
We show that the translation $K^T$ enjoys the finite model property and exploit it in the proof of completeness of the translation. 
From the translation in $\alcio$ we also get an \textsc{ExpTime} upper bound on the complexity of concept satisfiability in $ \alc^{\Omega} $.
Interestingly enough, our translation has strong relations with the first-order reductions in \cite{GlimmISWC2010,HomolaDL14,KubincovaDL15}.

We further exploit 
the correspondence between $\in$ and the accessibility relation of a normal modality
in another direction (the direction considered in  \cite{DAgostino1995}), to provide a polynomial set-theoretic translation of $\alc^\Omega$ in the set theory $\Omega$.
Our aim is to understand the real nature of the power-set concept in $\alc^\Omega$, as well as showing that a description logic with just the power-set concept, but no roles and no individual names, is as expressive as $\alc^\Omega$.
 
We proceed step by step by first defining a set-theoretic translation of $\alc$  with empty ABox  (in Section 5.1), directly exploiting Schild's correspondence result \cite{Schild91} and 
the set-theoretic translation for normal polymodal logics in  \cite{DAgostino1995}.
Then, we extend the translation to $\alc^\Omega$, first considering (in Section 5.2) the fragment of $\alc^\Omega$ containing union, intersection, (set-)difference, complement, and power-set (but neither roles nor named individuals) and we  show that this fragment,  that we call $\al^\Omega$, has an immediate set-theoretic translation into $\Omega$, where the power-set concept is translated to the power-set in $\Omega$.
Finally, (in Section 5.3) we provide an encoding of the whole $\alc^\Omega$ into $\al^\Omega$. This encoding shows that $\al^\Omega$ is as expressive as $\alc^\Omega$ and also provides, as a by-product, a set-theoretic translation of $\alc^\Omega$
where the membership relation $\in$ is used to capture both the roles $R_i$ 
and the membership relation in $\alc^\Omega$. The full path leads to a set-theoretic translation of both
 the universal restriction and power-set concept  of $\alc^\Omega$ in the theory $\Omega$ using the single relational symbol $ \in $.
 
 The outline of the paper is the following. Section 2 recalls the definition of the description logics $\alc$ and $\alcio$, and of the set theory $\Omega$.
  Section 3  introduces the logic $\alc^\Omega$. 
  Section 4  provides a translation of the logic $\alc^\Omega$ into the description logic $\alcio$.
  Section 5 develops set-theoretic translations for $\alc$ and $\al^\Omega$ and an encoding of $\alc^\Omega$ into $\al^\Omega$.
 Section 6 contains some discussion, Section 7 describes related work and Section 8 concludes the paper.

\section{Preliminaries}

\subsection{The description logics $\alc$ and $\alcio$}\label{sec:ALC}

Let ${N_C}$ be a set of concept names, ${N_R}$ a set of role names
  and ${N_I}$ a set of individual names.  
The set  ${\cal C}$ of $\alc$ \emph{concepts} can be
defined inductively as follows:

-  $A \in N_C$, $\top$ and $\bot$ are {concepts} in ${\cal C}$;
    
-  if $C, D \in {\cal C}$ and $R \in N_R$, then $C
\sqcap D, C \sqcup D, \neg C, \forall R.C, \exists R.C$ are
{concepts} in ${\cal C}$.

\noindent  
A knowledge base (KB) $K$ is a pair $({\cal T}, {\cal A})$, where ${\cal T}$ is a TBox and
${\cal A}$ an ABox.
${\cal T}$ is  a set of concept inclusions (or subsumptions) $C \sqsubseteq D$, where $C,D$ are concepts in ${\cal C}$.
${\cal A}$ is  a set of assertions of the form $C(a)$ 
and $R(a,b)$ where $C$ is a  concept, $R \in
N_R$, and $a, b \in N_I$.

An {\em interpretation} for $\alc$ (see \cite{handbook}) is a pair $I=\langle \Delta, \cdot^I \rangle$ where:
$\Delta$ is a domain---a set whose elements are denoted by $x, y, z, \dots$---and 
$\cdot^I$ is an extension function that maps each
concept name $C\in N_C$ to a set $C^I \subseteq  \Delta$, each role name $R \in N_R$
to  a binary relation $R^I \subseteq  \Delta \times  \Delta$,
and each individual name $a\in N_I$ to an element $a^I \in  \Delta$.
It is extended to complex concepts  as follows:
$\top^I=\Delta$, $\bot^I=\vuoto$, $(\neg C)^I=\Delta \backslash C^I$, $(C \sqcap D)^I =C^I \cap D^I$, $(C \sqcup D)^I=C^I \cup D^I$, and	
\begin{align*}
	&(\forall R.C)^I =\{x \in \Delta \tc \forall y. (x,y) \in R^I \imp y \in C^I\} \\ 
	&(\exists R.C)^I =\{x \in \Delta \tc \exists y.(x,y) \in R^I \ \& \ y \in C^I\}.
\end{align*}
%
%
%
%
The notion of satisfiability of a KB  in an interpretation is defined as follows:
\begin{definition}[Satisfiability and entailment] \label{satisfiability}
Given an $\alc$ interpretation $I=\langle \Delta, \cdot^I \rangle$: 

-	 $I$  satisfies an inclusion $C \sqsubseteq D$ if   $C^I \subseteq D^I$;

-	   $I$ satisfies an assertion $C(a)$ if $a^I \in C^I$ and an assertion $R(a,b)$ if $(a^I,b^I) \in R^I$. 

\noindent
 Given  a KB $K=({\cal T}, {\cal A})$, 
 an interpretation $I$  satisfies ${\cal T}$ (resp. ${\cal A}$) if $I$ satisfies all  inclusions in ${\cal T}$ (resp. all assertions in ${\cal A}$);
 $I$ is a \emph{model} of $K$ if $I$ satisfies ${\cal T}$ and ${\cal A}$.

 Let a {\em query} $F$ be either an inclusion $C \sqsubseteq D$ (where $C$ and $D$ are concepts) 
or an assertion $C(a)$:
 {\em $F$ is entailed by $K$}, written $K \models F$, if for all models $I=$$\sx \Delta,  \cdot^I\dx$ of $K$,
$I$ satisfies $F$.
\end{definition}
Given a knowledge base $K$,
the {\em subsumption} problem is the problem of deciding whether a given  inclusion $C \sqsubseteq D$ is entailed by  $K$.
The {\em instance checking} problem is the problem of deciding whether a given assertion $C(a)$ is entailed by $K$.
The {\em concept satisfiability} problem w.r.t. a knowledge base $K$ is the problem of deciding, for a given concept $C$, whether $C$ is \emph{consistent} with $K$ 
(i.e., whether there exists a model $I$ of $K$, such that $C^I \neq \emptyset$).

\medskip

In the following we will also consider the description logic $\alcio$ allowing inverse roles and nominals.
For a role $R \in N_R$, its {\em inverse} is a role, denoted by $R^-$, which can be used in existential and universal restrictions
with the following semantics:
$(x,y) \in (R^-)^I $ {\em if and only if}  $(y,x) \in R^I.$
For a named individual $a \in N_I$, the {\em nominal} $\{a\}$ is the concept such that:
$(\{a\})^I= \{a^I\}$.

\subsection{The theory $\Omega$}

The first-order axiomatic set theory $\Omega$ at the ground of our translation, consists of four extremely simple axioms (partially) characterizing the binary constructors \emph{union} and \emph{set-difference}, as well as the \emph{power-set} constructor. The underlying language is reduced to the relation symbols denoting \emph{membership} and \emph{inclusion}. More formally: 

\begin{definition} Consider a first order language with two binary relational symbols denoted by $ \in $ and $ \subseteq $, (to be used in infix notation). Let  $\cup$ and $\backslash$ two binary functional symbols (also  used in the customary infix notation) and let $\mathit{Pow}$ be a unary function symbol. 

The axiomatic set theory $ \Omega $ consists of the following collection of four axioms: 
\begin{enumerate}
\item $x \in y \cup  z   \leftrightarrow  x \in y \vee x \in z$; 
\item $x \in y \backslash z  \leftrightarrow  x \in y \wedge  x \not \in z$;
\item $x \subseteq y   \leftrightarrow  \forall z (z \in x \rightarrow z \in y)$; 
\item $x \in \mathit{Pow}(y)  \leftrightarrow  x \subseteq y$,
\end{enumerate}
completed with the standard deduction rules of \emph{generalization} and \emph{modus-ponens}.
\end{definition}
The above theory must be intended as a minimal Hilbert-style axiomatic system for set theory.
When thinking of specific models of $ \Omega $, however, we can clearly think of structures satisfying extra axioms. In particular, for example, the familiar \emph{well-founded} models of set theories, are perfectly legitimate models of $ \Omega $, in which the extra axiom of well-foundedness---implying that $ \in $ cannot form cycles or infinite descending chains---holds.
For instance let $x=\{\emptyset, \{\emptyset\} \}$.
$x$ is a finite well-founded set, and the sets $\emptyset$ and $\{\emptyset\}$ are elements of $x$. 
Instead, the set $y= \{\emptyset,\{\emptyset,\{\emptyset,\{ \ldots\} \} \}$ is finite but not well-founded. 

Whatever the axioms satisfied by the $\Omega$-model under consideration are, however,   \emph{everything} in the domain of such a model is supposed to be a \emph{set}. As a consequence, a set will have (only) sets as its elements. 
Moreover, as observed, circular definitions of sets are not forbidden. That is, for example, there are models of $ \Omega $ in which there are sets admitting themselves as elements. 
 For instance, the set $y$ above could simply be defined as  $y= \{\emptyset,y\}$ and has elements $\emptyset$ and $y$  itself. 

Finally, not postulating in $ \Omega $ any explicit  ``axiomatic link'' between membership $ \in $ and equality---more precisely: having no \textit{extensionality} axiom---, there exist $ \Omega $-models in which there are different sets with equal collection of elements. 
One (elementary) consequence of the extensionality axiom is the familiar fact that if $ a\subseteq b $ and $ b\subseteq a $, then $ a=b $. In non-extensional models, 
instead, there can be pairwise distinct sets included in each other.
The set $x'=\{a,b, \{b,c\} \}$ with $a,b$ and $c$ pairwise distinct and such that $a,b,c \subseteq \emptyset$, does not satisfy extensionality as $a$, $b$ and $c$ are different sets with the same (empty) extension.

\begin{definition}
$\Omega$-models are first order interpretations $\emme=({\cal U}, \cdot^\emme)$ satisfying the axioms of the theory $\Omega$.
The universe  ${\cal U}$ is the domain of interpretation of $\emme$ and,  $\cdot^\emme$ is an interpretation function mapping  each  symbol $\mathit{Pow}$, $\cup$, $\backslash$ of the language to a function over ${\cal U} $  (that is, $\mathit{Pow^\emme}: {\cal U} \rightarrow {\cal U}$,  \ 
$\cup^\emme: {\cal U} \times {\cal U} \rightarrow {\cal U}$ and  $\backslash^\emme: {\cal U} \times {\cal U} \rightarrow {\cal U}$)
and each predicate symbol $\in$ and $\subseteq$ to a binary relation over ${\cal U} $ (that is, $\in^\emme$ and $\subseteq^\emme$).
\end{definition}
Below, for sake of readability, we will avoid superscripts in  $\in^\emme$, $\subseteq^\emme$, $\mathit{Pow^\emme}$, $\cup^\emme$, $\backslash^\emme$.

\medskip 
Observe that the universe ${\cal U}$ of any $\Omega$-model $\emme$ must be infinite, as any element in ${\cal U}$ must have its power-set in ${\cal U}$
and, 
as an elementary consequence of Cantor's Theorem (see \cite{Jech:2003ly}), $|Pow(x)| > |x|$ when $|x|$ is finite. 
This closure with respect to the use of the power-set operator produces the most natural $ \Omega $-model, a well-founded one in which extensionality holds---and hence different sets are, in fact, extensionally different. 
\begin{definition}
The \emph{hereditarily finite well-founded sets} $ \HF $ denote the $ \Omega $-model $\emme=({\cal U}, \cdot^\emme)$ such that

- \ $ \cal U $ is  $ \HF=\bigcup_{n\in \nats} \HF_n$, where $\HF_0  =\emptyset;$ and $\HF_{n+1}  =\mathit{Pow}(\HF_n)$, for all $ i \in \nats $;

-  \  $ \cdot^{\emme} $ it the natural interpretations of $ \in $, $ \subseteq $, $\mathit{Pow}$, $\cup$ and $\backslash$ in $ \HF $.
\end{definition}


\noindent
By the above observation  $HF$ is minimal among the well-founded models of $\Omega$ in that it can be {\em embedded}  in any model of $\Omega$.
In $ \HF $ (sometimes denoted also as $ \HF^0 $) every system of set-theoretic equations of the form:
\[\left\{
\begin{array}{ccc}
x_1 & = &  \{x_{1,1}, \ldots , x_{1,m_1}\}; \\
x_2 & = & \{x_{2,1}, \ldots , x_{2,m_2}\}; \\
\vdots &  & \vdots \\
x_n & = &  \{x_{n,1}, \ldots , x_{n,m_n}\}, \\
\end{array}
\right. \]
where $n$ is finite, and $ x_{i,j}$  is one among $ x_1, \ldots, x_{i-1}$  for  $i = 1,\ldots,n$ and $ j= 1, \ldots, m_i $, finds a unique  solution. 
 Hence, $ \HF $  can be even identified with the collection of such systems of equations which, taken individually, are actually in bijective correspondence with the adjacency matrices of finite graphs. 

Insisting that $ x_{i,j}$ must be one of the left-hand side of equations defining an $ x_{ k}$ with $ k <i $, guarantees that a solution can be found in an ordered manner. In fact, it can be easily proved (even by an elementary graph-theoretic argument), that whenever a solution exists, every $ x_{i} $ can be found in $ \HF_{i+1} $.
As an example, the set $x=\{\emptyset, \{\emptyset\} \}$ above is in $\HF^0$, and can be defined by the system of equations: 
$x_{1}=\{x_{2},x_{3}\}; x_{2}= \{x_{3}\}; x_{3}=\{ \}$.

If we drop the above mentioned index-ordering restriction (thereby allowing, for instance, such an equation as $ x = \{x\} $), in order to guarantee the existence of solutions in the model  we need to work with   universes \textit{richer} than $ \HF $. The most natural (and minimal) among such universes is a close relative of $ \HF^0 $, goes under the name of  $ \HF^{1/2} $, the universe of (rational) \emph{hyper}sets (see \cite{Acz88,OmodeoPT2017}), and can be defined as the extension of $ \HF $ obtained postulating unique solution to \emph{all} finite systems of equations of the above form---no constraint on the indexes of the $ x_{i,j}$'s, that now can be one among $ x_1, \ldots, x_{n}$.  
An example of hypersets in $\HF^{1/2}$ is the set $y= \{\emptyset,\{\emptyset,\{\emptyset,\{ \ldots\} \} \}$ above, obtained as the solution of the system of equations $\{y= \{\emptyset,y\}\}$.

The universe $ \HF^{1/2} $ of rational hypersets  is the one we will mostly use. The elements in $ \HF^{1/2} $ are called \emph{rational} in analogy to rational numbers, and $ \HF^{1/2} $ can be further extended (to $ \HF^{1} $), admitting even hypersets characterised by infinite systems of set-theoretic equations (with a unique solution) only. 

A complete discussion relative to universes of sets that can be used as models of $ \Omega $ goes beyond the scope of this paper. However, it is convenient to point out that, in all cases of interest for us here, an especially simple view of  $ \Omega $-models can be given using \emph{finite graphs}. Actually, $ \HF^0 $ or $ \HF^{1/2} $ can be identified as the collection of finite graphs---either acyclic or cyclic, respectively---, where sets are nodes and arcs depict the membership relation among sets (see \cite{OmodeoPT2017}). Given one such  \textit{membership} graph $ G $, its nodes represent a (hyper)set $ s $ together with the elements of the \emph{transitive closure} of $ s $ (i.e. the elements of $ s $, the elements of the elements of $ s $, the elements of the elements of the elements $ s $, ... . See Definition \ref{def-transitive}). 

More precisely, an hereditarily finite set can be uniquely represented by a finite, acyclic, oriented and \emph{extensional} (different nodes  have different collections of  successors) graph, with the edge relation $ h \rightarrow h' $ standing for $ h \ni h' $.  Any such graph as a single source, called the \emph{point}  of the graph, and a single sink, that is the empty set $ \emptyset $. This gives us the alternative---more graph-theoretic---view of the model $ \HF^{0} $, whose domain is now the collection of finite, well-founded, oriented and extensional graphs.   

A formal definition of $ \HF^{1/2} $ can be given using the above outlined graph-theoretic rendering of $ \HF^{0} $:  simply drop  acyclicity  and replace  extensionality with the requirement that no two nodes of the graph are \emph{bisimilar} (see \cite{OmodeoPT2017} for the definition of bisimilarity relation and recall that $ \HF^{0} \subset \HF^{1/2}$).  We keep also the requirement that a \emph{point}---i.e. a node from which every other node is reachable---of the graph is provided.

\begin{definition}
The hereditarily finite hypersets $ \HF^{1/2} $ denote the $ \Omega $-model   $\emme=({\cal U}, \cdot^\emme)$ such that

- \ $ \cal U $ is the collection of pointed, finite, oriented graphs whose only bisimulation relation is the identity;

-  \  $ \cdot^{\emme} $ is defined for $ \in $ as follows: $ h'\in^{\mathcal M} h $ holds  when  $ h' $ is  the sub-graph whose point is  one of the successor of the point of $ h $. The remaining operators are interpreted following their definition.
\end{definition}

A final further enrichment of both $ \HF^0 $ and $ \HF^{1/2} $ is obtained by adding \emph{atoms} (sometimes called \emph{urelements}) to their domain universes. Atoms can be thought as pairwise distinct copies of the empty set, are going to be denoted by $ \mathbf{a}_1, \mathbf{a}_2, \ldots  $, and collectively represented by $ \mathbb{A}= \{\mathbf{a}_1, \mathbf{a}_2, \ldots  \} $. The resulting universes will be denoted by $ \HF^0(\mathbb{A}) $ and $ \HF^{1/2}(\mathbb{A}) $.
When considering a model $\emme$ of $\Omega$  over the atoms in $\mathbb{A}$, we mean that $\mathbb{A} \subseteq {\cal U} $. 

While $HF^0$, $ \HF^{1/2} $ and $ \HF^{1} $ only contain finite sets, other models of $\Omega$ may also admit infinite sets, such as the infinite set of natural numbers.

In the next section, we will regard the domain $\Delta$ of a DL interpretation as a (finite or infinite) transitive set in a universe of an $\Omega$-model, 
i.e. $\Delta$ will be a set of sets in (a universe of a model of) the theory $\Omega$ rather than as a set of individuals, as customary in description logics.

\begin{definition}\label{def-transitive}
A element $ x $ in an $ \Omega $-model is said to be a \emph{transitive set} if it satisfies the formula: $(\forall y\in x)(y \subseteq x)$.
\end{definition}
For example, the set $x=\{ a,b,c, \{a,b\}\}$  over $\mathbb{A}$ (with $\mathbb{A} = \{a, b, c,\ldots\}$) is transitive, while the set $x'=\{ a,c, \{a,b\}\}$ is not transitive, as $\{a,b\} \in x'$ but $\{a,b\} \not \subseteq x'$.
Both $x$ and $x'$ are well-founded sets, instead the set of equations:  
$\{y=\{a,b,x\}$,  $x=\{b,y\} \}$ defines a collection of hypersets represented by the graph in Figure \ref{hypersets-example},
including the hyperset $y= \{ a,b, \{b,y\}\}$, which is neither well-founded nor transitive ($y$ is an hyperset in $\HF^{1/2}(\mathbb{A}) $).
%

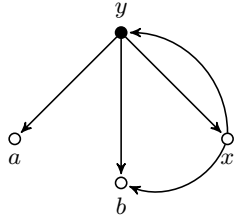
\begin{figure}
\begin{center}

\begin{tikzpicture}[->,>=stealth', shorten >=1pt, node distance=4cm, semithick, auto]
  \tikzstyle{every state}=[scale=.5]

  \node[state, fill, minimum size=1pt, label=above:{$ y $}] (Y)		 		{};
  \node[state, minimum size=1pt, label=below:{$ a $}] (A) [below left of=Y] 		{};
  \node[state, minimum size=1pt, label=below:{$ b $}] (B)  [below of =Y] 		{};
  \node[state, minimum size=1pt, label=below:{$ x $}] (X)  [below right of =Y] 		{};
  
 \path (Y) edge (A)
 (Y) edge (B)
 (Y) edge (X)
 (X) [bend left=45]  edge (B)
 (X) [bend right=45] edge (Y);
 
 \end{tikzpicture}
   
\caption{Example of a hyperset $ y $, built from atoms $ a $ and $ b $, which is neither well-founded nor transitive: $ x $ is an element of $ y $ but is not included in $ y $, since $ y $ (that belongs to the transitive closure of $ y $)  is not among the elements of $ y $.}\label{hypersets-example}
\end{center}
\end{figure}

\section{The description logic $\alc^\Omega$ } \label{sec:DL+Omega}

We start from the observation that 
in $\alc$  concepts are interpreted as sets (namely, sets of domain elements) 
and we generalize $\alc$ by allowing concepts to be interpreted as sets living in a model of the set theory $\Omega$.
In addition, we extend the language of $\alc$  by introducing  the power-set 
as a new concept constructor, and 
 allowing membership relations among concepts 
 in the knowledge base.
We call $\alc^\Omega$ the resulting extension of $\alc$.

As before, let $N_I$ , $N_C$, and $N_R$ be
the set of individual names, concept names, and role names in the language, respectively.
In building complex concepts, in  addition to the constructs of $\alc$, 
we also consider the difference $\backslash$ and the power-set $\texttt{Pow}$ constructs.
\begin{definition}The set of $\alc^{\Omega}$ \emph{concepts} are
defined inductively as follows:

- $A \in N_C$, $\top$ and $\bot$ are $\alc^{\Omega}$ \emph{concepts};
    
- if $C, D$ are $\alc^{\Omega}$ concepts and $R \in N_R$, then the following are $\alc^{\Omega}$ \emph{concepts}:
  \begin{center}
  $C \sqcap D, C \sqcup D, \neg C, C\backslash D, \texttt{Pow}(C),  \forall R.C, \exists R.C$ 
  \end{center}
\end{definition}
While the concept $C\backslash D$ can be easily defined as $C \sqcap \neg D$ in $\alc$,
this is not the case for the concept $\texttt{Pow}(C)$.
Informally, the instances of concept $\texttt{Pow}(C)$ are all the subsets of the instances of concept $C$, 
 which are ``visible'' in (i.e. which belong to) $\Delta$.
 
 Besides usual assertions of the forms $C(a)$ and $R(a,b)$ with $ a,b \in N_I $, $\alc^{\Omega}$ allows in the ABox
{\em concept membership axioms} and  {\em role membership axioms} of the forms $C \in D$ and $(C,D) \in R$, respectively, 
where $C$ and $D$ are $\alc^{\Omega}$ concepts and $R$ is a role name.

Considering again the example in Section 1, 
the additional expressivity of the language, in which general concepts (and not only concept names) can be instances of other concepts, allows for instance to represent the statement that  bears which are polar 
are in the  red list of endangered species, 
by axiom $\mathit{PolarCreature \sqcap Bear \in RedListSpecies}$.
We can further represent the fact the polar bears are more endangered than eagles by adding a role $\mathit{moreEndangered}$ and the role membership axiom
$\mathit{(PolarCreature \sqcap }$ $\mathit{ Bear, Eagle) \in moreEndangered}$.
The inclusion $\mathit{RedListSpecies \sqsubseteq}$ $\mathit{  \texttt{Pow}(CannotHunt)}$ means that any element of $\mathit{RedListSpecies}$ (such as $\mathit{Eagle}$) is a subset of $\mathit{CannotHunt}$, i.e., each single eagle cannot be hunted.
As shown in \cite{Motik05},  the meaning of the sentence 
``all the instances of species in the Red List are not allowed to be hunted"
could be captured by combining the $\nu$-semantics with 
SWRL \cite{Horrocks04}, but not by  the $\nu$-semantics alone. \normalcolor

We define a semantics for $\alc^\Omega$ by  extending the $\alc$ semantics in Section \ref{sec:ALC} 
to capture the meaning of concepts  (including concept $\texttt{Pow}(C)$) as elements (sets) of the domain ${\Delta}$, 
chosen to be a \textit{transitive} set (i.e. a set $ x $ such that $ x $'s elements are also $ x $'s subsets, see Definition \ref{def-transitive}).
Roles are interpreted as binary relations over the domain $\Delta$, concepts as subsets of $ \Delta $, and
individual names as elements of a set of atoms $\mathbb{A}$---from which the sets in $\Delta$ are built.

\begin{definition}
An interpretation for $\alc^\Omega$ is a pair $I=\langle \Delta, \cdot^I \rangle$ over a set of atoms $\mathbb{A}$ where:
\vspace{-0.2cm}
\begin{itemize}

\item the non-empty domain $\Delta$ is a transitive set chosen in the universe ${\cal U}$ of a model $\emme$ of $\Omega$  over the atoms in $\mathbb{A}$;

\item 
the extension function $\cdot^I$ maps each concept name $A\in N_C$ to a subset $A^I \subseteq \Delta$,\footnote{Observe that condition $A^I \subseteq \Delta$ is the usual one  for concept names in $\alc$ semantics, and it is weaker than the semantic condition $A^I \in \Delta$ required in  \cite{ICTCS_2018}  and in \cite{Jelia_2019}. This allows a more uniform treatment of all concepts and slightly simplifies the set-theoretic translations.}
each role name $R \in N_R$ to  a binary relation $R^I \subseteq  \Delta \times  \Delta$,
and each individual name $a \in N_I$ to an element $a^I \in \mathbb{A} \cap \Delta$.
\end{itemize}
The function $\cdot^I$ is extended to complex concepts of $\alc^\Omega$, as in Section \ref{sec:ALC} for $\alc$,
but for the two  additional cases:
$(\texttt{Pow}(C))^I = \mathit{Pow}(C^I) \cap \Delta$ \ and \  $(C \backslash D)^I= (C^I \backslash D^I) $. 
%
%
%
%
%
%
%
%
%
\end{definition}
Observe that $ \mathbb{A} \cap \Delta$ consists of the 
atoms in $\mathbb{A} $ necessary for the interpretation of individual names.
Moreover, even though ${\cal U}$ is closed under  union, power-set, etc., the set $\Delta$ is not guaranteed to be so.
In particular, the interpretation $C^I$ of a concept $C$ is not necessarily  an element of  $\Delta$. 
However, given the interpretation above of the power-set concept 
as the portion of the (set-theoretic) power-set {\em visible in $\Delta$}, it easy to see by induction that, for each $C$, the extension of $C^I$ is a subset of $\Delta$ (i.e., $C^I \subseteq \Delta$).

The requirement that the set $\Delta$ is transitive is needed to guarantee that, when we consider any set $x$ in $\Delta$, all the instances of $x$ are elements of $\Delta$ as well. For example, 
 if $Eagle^I$ is  an element of $\Delta$, then any specific element (any eagle) in $Eagle^I$ must be an element of $\Delta$. 
 
 As we will see later, the choice of $\Delta$ being equal to the universe ${\cal U}$ (rather than being a transitive set in  ${\cal U}$) is not viable when one wants to include in the language a concept  $\top$,  
as usual in description logics. Indeed, the universe ${\cal U}$ is not, in general, a set, while the interpretation of $\top$ must be a set (and all set theoretic operations, including the power-set, can be applied to it).

While ${\cal U}$  is always infinite, $\Delta$ is not necessarily an infinite set.  
Also, the interpretation of $\texttt{Pow}(\top)$ is not always the same as  that of $\top$.
Consider the following example with $\Delta$ a finite transitive set:
\[
\Delta =\{a,b,c,\{a,b\},\{a,c\}\}.
\]
By definition, $\top^I= \Delta$ and 
$\texttt{Pow}(\top)^I= Pow(\Delta) \cap \Delta =\{\{a,b\},\{a,c\}\}$.
Hence, in this example, $Pow(\top)^I \neq \top^I$
and, furthermore, $\top^I \not \in \Delta$ .

As a further observation, since extensionality does not hold in $\alc^\Omega$ (as it does not hold in $\Omega$) two concepts  $Eagle$ and $Aquila$ with the same extension may be interpreted as different sets in the models of the knowledge base (i.e., $Eagle^I \neq Aquila^I$), although they have the same elements. As a consequence, for instance, from  $Eagle^I \in RedListSpecies^I$, one cannot conclude $Aquila^I \in RedListSpecies^I$.

Given an interpretation $I$, the satisfiability of inclusions and assertions is defined as in  $\alc$ interpretations
(Definition \ref{satisfiability}). 
Satisfiability of (concept and role) membership axioms in an interpretation $I$ is defined as follows: {\em $I$ satisfies $C \in D$} if  $C^I \in D^I$; 
 {\em $I$ satisfies $(C,D) \in R$} if  $(C^I, D^I) \in R^I$. 
With this addition, the notions of satisfiability of a KB and of
entailment in $\alc^\Omega$ (denoted  $\models_{\alc^\Omega}$) can be defined as in Section \ref{sec:ALC}.

The problem of instance checking in $\alc^\Omega$ includes both the problem of verifying whether an assertion $C(a)$ is a logical consequence of the KB
and the problem of verifying whether a membership $C \in D$ is a logical consequence of the KB (i.e., whether $C$ is an instance of $D$).
\begin{example} \label{exa:readingGroup}
Let $K= ({\cal T}, {\cal A})$ be a knowledge base, where ${\cal T}$ is the set of inclusions:
\begin{quote}
(1) $\mathit{ReadingGroup \sqsubseteq \texttt{Pow}(Person)}$ \\
(2) $\mathit{Meeting \sqsubseteq \texttt{Pow}(ReadingGroup)}$ \\
(3) $\mathit{Meeting \sqsubseteq \texttt{Pow}(\exists has\_leader. Person)}$ \\
(4) $\mathit{SummerMeeting \sqsubseteq \texttt{Pow}( \exists has\_paid. Fee)}$ 
\end{quote}
and ${\cal A}$ contains the assertions (for conciseness, we write $A,B \in C$ for $A \in C$ and $B \in C$):
\begin{quote}
$\mathit{HistoryGroup, FantasyGroup, ScienceGroup \in ReadingGroup;  }$\\ 
$\mathit{SummerMeeting, WinterMeeting \in Meeting; }$\\ 
$\mathit{ScienceGroup, FantasyGroup \in SummerMeeting; }$\\ 
$\mathit{bob \in FantasyGroup; \; alice, bob \in ScienceGroup; \; carl \in HistoryGroup. \;}$
\end{quote}
Each reading group is a set of persons (1) and, in particular, the history, fantasy and science groups are reading groups.
Bob is in the Fantasy group Carl is in the History group, while Alice and Bob are in the Science group.
Each meeting is a set of reading groups (2). In particular, the $\mathit{SummerMeeting}$ and the $\mathit{WinterMeeting}$ are meetings.
Both the Science group and the Fantasy group participate to the $\mathit{SummerMeeting}$.
Each reading group in a meeting has a leader, who is a person (3).
All  participants to the $\mathit{SummerMeeting}$ have paid the registration fee (4).

From this specification, we can conclude that both the science and the fantasy groups have some leader who is a person, and that Alice and Bob
participate to the summer meeting and  have paid the registration fee.
For instance,
as $\mathit{SummerMeeting \in }$ $\mathit{ Meeting}$, by inclusion (3),  $\mathit{SummerMeeting \in \texttt{Pow}(\exists has\_leader. Person)}$, i.e.,
$\mathit{Sum}$- $\mathit{merMeeting  \sqsubseteq \exists has\_leader. Person}$.
Now, as $ScienceGroup \in SummerMeeting$, then $\mathit{ScienceGroup  \in \exists has\_leader. Person}$, and we can conclude that the Science group has a leader who is a person.

To see that Bob has paid the registration fee, consider that 
$\mathit{ bob \in ScienceGroup}$. As $\mathit{Science}$- $\mathit{Group \in SummerMeeting }$, by (4), 
$\mathit{ScienceGroup \in  \texttt{Pow}( \exists has\_paid. Fee) }$.
Then, $\mathit{ScienceGroup }$ $\mathit{ \sqsubseteq \exists has\_paid. Fee}$ and,
therefore,  $\mathit{bob \in \exists has\_paid. Fee}$.
Notice that, instead of axiom (4), we could have introduced the inclusion axiom 
$\mathit{Meeting \sqsubseteq}$ $\mathit{(\texttt{Pow}(\texttt{Pow}(\exists has\_paid. Fee))}$
meaning that, for any meeting, all participants have paid the registration fee.
\end{example}
\normalcolor
In the next section, we define a polynomial encoding of the language $\alc^\Omega$ 
into $\alcio$.

\section{Translation of $\alc^\Omega$ into $\alcio$} \label{sec:ALC^Omega to ALCIO}

To provide a proof method for $\alc^\Omega$, we define a translation of  $\alc^\Omega$ into the description logic $\alcio$,
including inverse roles and nominals. 
In \cite{DAgostino1995}  the membership relation $\in$ is used to represent the accessibility relation $R$ of a normal modal logic.    
In this section, vice-versa, we exploit the correspondence between $\in$ and the accessibility relation of a modality, by introducing
a new (reserved) role $e$ in $N_R$ to represent the inverse of the membership relation:
in any interpretation $I$,  $(x,y) \in e^I$ will stand for $y \in x$. 
The idea underlying the translation is that each element $u$ of the domain $\Delta$ in an $\alcio$ interpretation $I=\langle \Delta, \cdot^I \rangle$ 
can be regarded as the set of all the elements $v$ such that $(u,v) \in e^I$.

\medskip

The translation of a knowledge base $K= ({\cal T} ,{\cal A})$ of $\alc^\Omega$ into $\alcio$ can be defined as  follows.
First,  we associate each concept $C$ of $\alc^\Omega$ to a concept $C^T$ of $\alcio$ by replacing all occurrences of the power-set constructor
$\texttt{Pow}$ with a concept involving the universal restriction $\forall e$ (see below). 
More formally, we inductively define the translation $ C^T $  of $ C $ by simply recursively replacing every  subconcept
$\texttt{Pow}(D)$ appearing in $C$ by $\forall e. D^T$, while the translation $T $ commutes with concept constructors in all other cases  
 (and $ B^T= B $, for any concept name $B$). 

Semantically this will result in interpreting any (sub)concept $(\texttt{Pow}(D))^I$ by 
\[(\forall e.D)^I=\{x \in \Delta  \tc \forall y( (x,y) \in e^I  \imp y \in D^I)\},\]
which, recalling that $ (x,y)\in e^I $ stands for $ y \in x $,  characterises the collection of  subsets of $ D^I $ \emph{visible in $\Delta$} (i.e. subsets of  $D^I$ that are also elements of $\Delta$): $(\forall e.D)^I=\{x \in \Delta \tc \forall y( y \in x  \imp y \in D^I)\}$,
that is,  $(\forall e.D)^I=\{x \in \Delta \tc x \subseteq D^I)\}=\texttt{Pow}(D^I)\cap \Delta =(\texttt{Pow}(D))^I$, 
as expected.

\vspace{-0.2cm}
\subsection{Translating TBox, ABox, and queries} \label{sec: translation}
\vspace{-0.1cm}

We define a new TBox, ${\cal T}^T$, by introducing,
for each inclusion $C \sqsubseteq D$ in ${\cal T}$,
the inclusion $C^T \sqsubseteq D^T$ in ${\cal T}^T$.
Additionally,
for each (complex) concept $C$ occurring in the knowledge base $K$ (or in the query) on the l.h.s. of a membership axiom 
$C \in D$ or $(C,D) \in R$,
we extend $N_I$ with  
{\em a new individual name\footnote{The symbol $ e_C $ should remind the \emph{e}-xtension of $ C $.} $e_C $} and we add in ${\cal T}^T$  the concept equivalence: 
\begin{equation} \label{axiom:e_c}
C^T \equiv \exists e^-. \{e_C \}.  
\end{equation} 
From now on, new individual names such as $ e_C $ will be called \textit{concept individual names}. 
This equivalence is intended to capture the property that, in all the models $I=\langle \Delta, \cdot^I \rangle$ of $K^T$,
$e_C ^I$ is in relation $e^I$ with all and only the instances of concept $C^T$, i.e., for all $y \in \Delta$, $(e_C ^I,y) \in e^I$ if and only if $y \in (C^T)^I$.

As in the case of the power-set constructor, this fact can be verified by analyzing the semantics of $ \exists e^-. \{e_C \} $:
\[
(\exists e^-. \{e_C \})^I=\{x  \in \Delta \tc \exists y( (x,y) \in (e^-)^I  \wedge y \in (\{e_C \})^I\},
\]
which, recalling that $ e $ stands for $ \ni $ and interpreting the nominal, will stand for 
\[
(\exists e^-. \{e_C \})^I=\{x \in \Delta \tc \exists y( x \in y  \wedge y \in \{e_C ^I\}\}=\{x \in \Delta \tc  x \in e_C ^I \},
\]
which,
  by  concept equivalence (\ref{axiom:e_c}),
 is as to say that $ e_C ^I $ and $ (C^T)^I $  have the same extension. 

\begin{remark}
	It is important to notice that every  concept individual name  of the sort $ e_C  $ introduced above---that is, every individual name whose  purpose is that of providing a name to the extension
of  $ C^I $---, in general turns out to be in relation $ e $ with other elements of the domain $ \Delta $ of $ I $ 
(unless $C$ is an inconsistent  
concept and its extension is empty). 
This is in contrast with the  assumption relative to other ``standard'' individual names $ a \in N_I $, for which we will require $(\neg \exists e. \top)(a)$ (see below) as they are interpreted as atoms. 

\end{remark}
We define  ${\cal A}^T$ as the set of assertions containing: 

	- for each concept membership axiom $C \in D$ in ${\cal A}$,  the assertion $D^T(e_C )$,
	
	- for each role membership axiom $(C,D) \in R$ in ${\cal A}$,  the assertion $R(e_C,e_D)$, 
	
	- for each assertion $D(a)$ in ${\cal A}$,  the assertion $D^T(a)$,
	
	- for each assertion $R(a,b)$ in ${\cal A}$,  the assertion $R(a,b)$ and, finally,
	
	- for each (standard) individual name $a \in N_I$, 
	the assertion $(\neg \exists e. \top)(a)$.

\noindent
As noticed above, the last requirement  forces all named individuals (in the language of 
$K$) to be interpreted as domain elements which
are not in relation $e$ with any other element.

Let $K^T = ({\cal T}^T ,{\cal A}^T)$ be the knowledge base obtained by translating $K$ into $\alcio$. 

\begin{example}\label{exa:Eagle}

Let $K= ({\cal T} ,{\cal A})$ be the knowledge base considered above:
\\
${\cal T}= \{ \mathit{RedListSpecies \sqsubseteq \texttt{Pow}(CannotHunt)}\}$ and 
\\
${\cal A}=$ $\{\mathit{Eagle(harry), Eagle \in RedListSpecies, \; PolarCreature \sqcap Bear \in RedListSpecies}\} $.

\noindent
By the translation above, we obtain:
\\
${\cal T}^T= \{ \mathit{RedListSpecies \sqsubseteq \forall e. CannotHunt, }$

$\;$ \mbox{\ \ \ \ } $  \mathit{Eagle \equiv \exists e^-. \{e_{Eagle}\}  , \; PolarCreature \sqcap Bear  \equiv \exists e^-. \{e_{PolarCreature \sqcap Bear}\} 
\;}\}$  
\medskip

\noindent
${\cal A}^T=$ $\{\mathit{Eagle(harry),   RedListSpecies(e_{Eagle}), RedListSpecies(e_{PolarCreature \sqcap Bear}) ,}$ 

$\;$ \mbox{\ \ \ } $\mathit{ \; (\neg \exists e. \top)(harry)} \; \} $

\vspace{0.4mm}
\noindent
$K^T$ entails  $ \mathit{ CannotHunt}$$\mathit{(harry)}$ in  $\alcio$.
In fact, from $\mathit{RedListSpecies(e_{Eagle})}$ and $\mathit{Red}$- $\mathit{ListSpecies \sqsubseteq \forall e. CannotHunt}$, 
it follows that, in all  models of $K^T$, $e^I_{Eagle}$ $ \in\mathit{(\forall e. Cannot}$- $\mathit{Hunt)^I}$.
Furthermore, from $\mathit{Eagle \equiv \exists e^-. \{e_{Eagle}\}}$ \  and the assertion \  
$\mathit{Eagle(harry)}$, it follows that
$\mathit{(e^I_{Eagle}, harry^I) \in e^I}$ holds. Hence, $ \mathit{ harry^I} \in $  $\mathit{CannotHunt^I}$.
As this holds in all models of $K^T$, $ \mathit{ CannotHunt(harry)}$ is a logical consequence of $K^T$.
It is easy to see that $ \mathit{Eagle \sqsubseteq CannotHunt}$ follows from $K^T$ as well.
\end{example}

Let $F$ be a {\em query} of the form $C \sqsubseteq D$, $C(a)$ or $C \in D$ 
We assume that all the individual names, concept names and role names occurring in $F$ also occur in $K$ and we
define a translation $F^T$ of the query $F$ as follows:
\begin{quote}
	- if $F$ is a subsumption  $C \sqsubseteq D$, then $F^T$ is  the subsumption $C^T \sqsubseteq D^T$;
	
	- if $F$ is an assertion $C(a)$, then $F^T$ is the assertion $C^T(a)$;
	
	- if $F$ is a membership axiom $\mathit{C \in D}$ ($\mathit{(C,D) \in R}$, respectively), then $F^T$ is the assertion $D^T(e_C )$ ($\mathit{R(e_C,e_D )}$, respectively). 
	
\end{quote}
In the following we prove the  soundness and completeness of the translation of an ${\alc^\Omega}$ knowledge base into $\alcio$.


\begin{proposition}[Soundness of the translation]\label{Prop:Soundness-translation}
The translation of an $\alc^\Omega$  knowledge base $K= ({\cal T} ,{\cal A})$  into $\alcio$ is sound,  that is, 
for any query $F$:
\begin{align*}
	K^T \models_{\alcio} F^T & \Rightarrow  K \models_{\alc^\Omega} F.
\end{align*}

\end{proposition}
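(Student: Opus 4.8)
The plan is to argue by contraposition, turning a counterexample to $K \models_{\alc^\Omega} F$ into one to $K^T \models_{\alcio} F^T$. So I start from an $\alc^\Omega$ model $I=\langle \Delta, \cdot^I\rangle$ of $K$ with $I \not\models F$, and construct from it an $\alcio$ interpretation $\widehat I$ that is a model of $K^T$ and falsifies $F^T$; this is exactly the contrapositive of the claimed implication.

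The interpretation $\widehat I$ keeps the same domain $\Delta$ and reads the reserved role $e$ as the inverse of membership: $(x,y)\in e^{\widehat I}$ iff $y \in x$. Since $\Delta$ is a transitive set living in an $\Omega$-model, this relation is contained in $\Delta \times \Delta$, so $\widehat I$ is well defined. Concept names, ordinary role names and standard individual names are interpreted exactly as in $I$, and each concept individual name $e_C$ is interpreted as the element $C^I$ itself. The crux is then a translation-correctness lemma, proved by structural induction on concepts: for every $\alc^\Omega$ concept $C$ one has $(C^T)^{\widehat I}=C^I$. The Boolean, difference and role cases are routine because $T$ commutes with those constructors; the only substantive case is the power-set, where $C^T$ has the shape $\forall e.(\cdot)^T$. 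Unfolding the semantics gives $(\forall e.D^T)^{\widehat I}=\{x\in\Delta \mid \forall y\,(y\in x \imp y\in (D^T)^{\widehat I})\}$, which by the induction hypothesis and the transitivity of $\Delta$ equals $\{x\in\Delta \mid x\subseteq D^I\}=\mathit{Pow}(D^I)\cap\Delta=(\texttt{Pow}(D))^I$, exactly as in the informal computation motivating the translation.

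With the lemma in hand, checking that $\widehat I$ is a model of $K^T$ is a clause-by-clause verification. Each translated inclusion $C^T\sqsubseteq D^T$ holds because $C^I\subseteq D^I$; each equivalence $C^T\equiv\exists e^-.\{e_C\}$ holds because $(\exists e^-.\{e_C\})^{\widehat I}=\{x\in\Delta \mid x\in e_C^{\widehat I}\}=C^I=(C^T)^{\widehat I}$; the ABox assertions $D^T(e_C)$, $R(e_C,e_D)$, $D^T(a)$ and $R(a,b)$ transfer directly from satisfaction of the corresponding membership and assertion axioms in $I$; and each $(\neg\exists e.\top)(a)$ holds because $a^I$ is an atom of $\mathbb A$, hence has no members and therefore no $e$-successor. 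Finally, the same lemma transfers $I\not\models F$ to $\widehat I\not\models F^T$ in each of the three query shapes $C\sqsubseteq D$, $C(a)$ and $C\in D$.

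The step I expect to be the real obstacle is interpreting the concept individual names $e_C$ as genuine domain elements whose $e$-successors are exactly the instances of $C$. For a concept $C$ occurring on the left-hand side of a membership axiom $C\in D$ or $(C,D)\in R$ of $K$, satisfaction of that axiom in $I$ forces $C^I\in D^I\subseteq\Delta$, so $C^I$ is already an element of $\Delta$ and may simply be named by $e_C$. The delicate case is a query $C\in D$: then $e_C$ and its equivalence axiom belong to $K^T$, yet nothing guarantees $C^I\in\Delta$, and because $\Delta$ is transitive no other element of $\Delta$ can have $e$-successors equal to $C^I$. The natural fix is to enlarge $\Delta$ by adjoining the set $C^I$ (a subset of $\Delta$, so the membership structure of the old elements and transitivity are untouched) and to extend $\cdot^I$ to this fresh element so that $I$ still models $K$ and still falsifies $F$, then rerun the construction on the enlarged interpretation. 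Verifying that such an enlargement preserves both modelhood of $K$ and falsification of $F$ --- in particular that the concept-name type assigned to the new element is compatible with the TBox and keeps $C^I$ outside $D^I$ --- is where the argument needs the most care.
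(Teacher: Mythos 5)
Your construction is exactly the paper's: argue by contraposition, keep the same domain $\Delta$, read $e$ as inverse membership, copy concept names, roles and standard individuals from $I$, set $e_C^{\widehat I}=C^I$, prove $(C^T)^{\widehat I}=C^I$ by structural induction (with the power-set/$\forall e$ case resting on transitivity of $\Delta$, just as in the paper), and then verify the translated TBox, the equivalences $C^T\equiv\exists e^-.\{e_C\}$, the ABox clauses including $(\neg\exists e.\top)(a)$, and the falsification of $F^T$ clause by clause. On all of this you and the paper coincide.

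Your closing paragraph, however, goes beyond the paper, and the point you raise is real: the paper's own well-definedness argument for $e_C^{\widehat I}\in\Delta$ appeals only to a membership axiom $C\in D$ (or $(C,D)\in R$) \emph{in $K$}, even though the translation also introduces $e_C$ for concepts occurring on the l.h.s.\ of a membership only in the query; in that case falsification of $F$ gives $C^I\notin D^I$ but nothing forces $C^I\in\Delta$. So you have correctly isolated a case the paper's proof silently skips. Be aware, though, that your sketched repair is more delicate than you indicate, in two ways. First, enlarging the $\alc^\Omega$ model requires $\Delta\cup\{C^I\}$ to be a transitive set in the universe of an $\Omega$-model, and $\Omega$ has no singleton operator, so this is not automatic (and $C^I$, being merely a subset of $\Delta$, need not even be an element of the universe under the paper's weak semantics for concept names). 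Second, if instead you adjoin a fresh point $d^*$ at the $\alcio$ level with $e$-successors exactly the members of $C^I$, that point may land in the extension of other translated concepts---typically $(\forall e.D^T)^{\widehat I}$ whenever $C^I\subseteq D^I$---and then a pre-existing equivalence $C'^T\equiv\exists e^-.\{e_{C'}\}$ with $C'=\texttt{Pow}(D)$ forces $d^*$ to be an $e$-successor of $e_{C'}^{\widehat I}=C'^I\subseteq\Delta$, which it cannot be without disturbing the membership reading of $e$ on $\Delta$ and hence your induction lemma. So the corner case needs either a genuinely different construction or an argument that a countermodel with $C^I\in\Delta$ can always be found; your proposal flags the difficulty honestly but does not yet close it---and neither, strictly speaking, does the paper.
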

\begin{proof} (Sketch)
By contraposition, assume  $K \not \models_{\alc^\Omega} F$ and let 
$I=\langle \Delta, \cdot^I \rangle$ be a model of $K$ in ${\alc^\Omega} $ that falsifies $F$.
$\Delta$ is a transitive set living in a model of $\Omega$ with universe ${\cal U}$.
We build an $\alcio$ interpretation $I'=\langle \Delta', \cdot^{I'} \rangle$, which is going to be a model of $K^T$ falsifying $F$ in $\alcio$, by letting:

-  $\Delta'=\Delta$;  

- for all $B \in N_C$, $B^{I'}=B^{I}$;    

-  for all roles $R \in N_R$, $R^{I'}= R^I$;

-  for all $x,y \in \Delta'$, $(x,y) \in e^{I'}$ if and only if $y \in x$;

-  for all (standard) individual name $a \in N_I$, $a^{I'}=a^I \in \mathbb{A} \cap \Delta$;

-  for all $e_C  \in N_I$, $e_C ^{I'}=C^I.  $ \\
The interpretation $I'$ is well defined. First, the interpretation $B^{I'}$ of a named concept $B$ is a subset of $\Delta'$ as expected. 
In fact,  for each $x\in B^{I'}$,  $x\in B^I \subseteq \Delta = \Delta'$.
Also, $a^{I'}=a^I \in \Delta= \Delta'$.
It is easy to see that the interpretation of constant $e_C$, $e_C^{I'}$ is in $\Delta'$. In fact, 
as the named individual $e_C$ has been added by the translation to the language of $K^T$, there must be some membership axiom $C \in D$ (or $(C,D) \in R$) in $K$, for some $D$ (respectively, for some $D$ and $ R$).
Considering the case that axiom $C \in D$ is in $K$, as $I$ is a model of $K$, $I$ satisfies $C \in D$, so that $C^I \in D^I$ must hold.
However, as $D^I \subseteq \Delta$, it must be $C^I \in \Delta$.
Hence, by construction, $e_C ^{I'} = C^I \in \Delta'=\Delta$.
In case $(C,D) \in R$, it must hold that $(C^I,D^I) \in R^I$. 
As $R^I \subseteq \Delta \times \Delta$, then $C^I, D^I \in \Delta$.
In particular, $e_C ^{I'} = C^I \in \Delta'=\Delta$.

We can prove by induction on the structural complexity of the concepts that, 
for all $x \in \Delta'$,
\begin{equation} 
 x \in (C^T)^{I'} \mbox{ if and only if } x \in C^I
 \end{equation}
and show that all the axioms and assertions in $K^T$ are satisfied in $I'$, and 
$F^T$ is falsified in $I'$.
\hfill $\Box$
\end{proof}

\medskip

\noindent 
Before proving completeness of the translation of $\alc^\Omega$ into $\alcio$,
we show that, if the translation $K^T$ of a knowledge base $K$ 
has a model in $\alcio$, then 
it has a finite model.

\begin{proposition}\label{finite} \label{prop:finite_model_property}
	Let $K$ be a knowledge base in $\alc^\Omega$ and let $K^T$ be its translation in $\alcio$.
	If $K^T$ has a model in $\alcio$, then it has a finite model.
\end{proposition}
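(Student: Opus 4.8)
The plan is to establish the finite model property of $K^T$ by a standard \emph{filtration} argument for $\alcio$, the only real care being required by the inverse role $e^-$ and by the nominals. Starting from an arbitrary $\alcio$ model $I=\langle \Delta,\cdot^I\rangle$ of $K^T$, I would first fix the set $S$ of all subconcepts of the concepts occurring in $\mathcal{T}^T$ and $\mathcal{A}^T$ (so $S$ is closed under subconcepts), making sure that $S$ contains $\top$, the concept $\exists e.\top$, and every nominal $\{e_C\}$ and $\{a\}$ mentioned in $K^T$. Since $K^T$ is finite, $S$ is finite.

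Next I would define the equivalence $x\sim y$ iff $x$ and $y$ belong to exactly the same concepts of $S$, and take the quotient domain $\Delta'=\Delta/\!\sim$, which has at most $2^{|S|}$ elements and is therefore finite. The filtered interpretation $I'=\langle\Delta',\cdot^{I'}\rangle$ is defined by $A^{I'}=\{[x] : x\in A^I\}$ for concept names $A\in S$, by $a^{I'}=[a^I]$ and $e_C^{I'}=[e_C^I]$ for the (standard and concept) individual names, and by the \emph{minimal} filtration of the roles: $([x],[y])\in R^{I'}$ iff $(x',y')\in R^I$ for some $x'\sim x$ and $y'\sim y$. Note that the inverse role needs no separate stipulation, since $(e^-)^{I'}$ is forced by the semantics of $\alcio$ to be $(e^{I'})^{-1}$.

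The core is the \emph{Truth Lemma}: for every $C\in S$ and every $x\in\Delta$, one has $x\in C^I$ iff $[x]\in C^{I'}$, proved by induction on $C$. The delicate cases are the restrictions along $e$ and along $e^-$. For the ``up'' direction one uses that the witness $x'\sim x$ has the same $S$-type as $x$, hence satisfies the same universal restriction, together with the data $x'\sim x$, $y'\sim y$, $(x',y')\in e^I$ supplied by the minimal filtration; for the ``down'' direction one uses that the minimal filtration lifts every $e$-edge of $I$ to $I'$. Reading the very same two observations along $e^{-1}$ settles the cases $\forall e^-.D$ and $\exists e^-.\{e_C\}$ --- this is exactly the point where a single relation $e^{I'}$ must validate the restrictions in \emph{both} directions. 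The nominals are handled by having $\{e_C\}\in S$: any $x\sim e_C^I$ then satisfies $\{e_C\}$ and hence equals $e_C^I$, so the class $[e_C^I]$ is a singleton and $\{e_C\}^{I'}=\{e_C^{I'}\}$ is correctly a singleton (and likewise for the $\{a\}$).

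Finally I would check that $I'$ is a model of $K^T$. Each TBox inclusion $C^T\sqsubseteq D^T$ and each equivalence $C^T\equiv\exists e^-.\{e_C\}$ is preserved because, by the Truth Lemma, $(C^T)^{I'}=\{[x]:x\in (C^T)^I\}$; each ABox assertion is preserved because it concerns the named individuals $a,e_C$, whose interpretations are the classes of $a^I,e_C^I$, and because the minimal filtration lifts the role assertions $R(e_C,e_D)$, while $(\neg\exists e.\top)(a)$ survives since $\exists e.\top\in S$. As $\Delta'$ is finite, $I'$ is the desired finite model. I expect the main obstacle to be precisely forcing the single filtered relation $e^{I'}$ to validate the universal and existential restrictions simultaneously along $e$ and along $e^-$ while keeping every nominal class a singleton; choosing the minimal filtration and closing $S$ under subconcepts and nominals is what makes both requirements go through at once.
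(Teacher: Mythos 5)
Your proof is correct, but it takes a genuinely different route from the paper's. You filtrate the $\alcio$ model of $K^T$ directly: the minimal filtration is symmetric in its two representatives, so the standard tense-logic argument validates the Truth Lemma along $e$ and $e^-$ simultaneously; closing the filter set $S$ under the nominals $\{e_C\}$ forces each nominal-denoted element into a singleton equivalence class, which is exactly what the Truth Lemma needs for $\{e_C\}$ and hence for the axioms $C^T \equiv \exists e^-.\{e_C\}$; and putting $\exists e.\top$ in $S$ guarantees that the classes of the standard individuals $a$ acquire no outgoing $e$-edges, so the assertions $(\neg\exists e.\top)(a)$ survive. The paper avoids doing filtration in the presence of nominals and inverse roles altogether: it re-translates $K$ into a knowledge base $K^{T(\neg)}$ in $\alc(\neg)$ (a single negated role $\neg e$), replacing each equivalence $C^T \equiv \exists e^-.\{e_C\}$ by the two assertions $(\forall e.\, C^T)(e_C)$ and $(\forall(\neg e).(\neg C^T))(e_C)$, shows that models of $K^{T(\neg)}$ and of $K^T$ correspond, and then imports the finite model property of the negated-role (``window modality'') logic from Gargov et al., extended with the remaining K-modalities as in Blackburn et al. What each approach buys: the paper's detour delegates the filtration entirely to cited results and yields, as an explicitly recorded byproduct, a nominal-free and inverse-free reformulation $K^{T(\neg)}$ of the translation; your argument is self-contained and avoids the auxiliary logic, at the cost of verifying the Truth Lemma yourself in the two directions of $e$ and at nominals---which, as you correctly identify, is where all the care lies, and your choices (minimal filtration, subconcept-plus-nominal closure of $S$) do make both requirements go through at once.
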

\begin{proof}
	We prove this result by providing an alternative (but equivalent) translation $K^{T(\neg)}$ of $K$ in the description logic  $\alc (\neg)$,
	using a single negated role $\neg e$.
	
	$\alc(\neg)$ extends $\alc$ with role complement operator, where, for any role $R$, 
	the role $\neg R$ is  {\em  the negation of role $R$}, where $(x,y) \in (\neg R)^I$ if and only if $(x,y) \not \in R^I$.
	In the translation, we exploit $\neg e$ to capture non-membership, where
	$(x,y) \in (\neg e)^I$ if and only if $(x,y) \not \in  e^I$ (i.e., in  set terms, $y \not \in x$).
	Decidability of concept satisfiability in $\alc(\neg)$ has been proved by Lutz and Sattler in \cite{LutzSattler2001}.
	The finite model property of a language with a single negated role $\neg e$ can be proved as done in  \cite{Gargov_etal_1987} (Section 2)
	for a logic with the ``window modality",
	by standard filtration, extended to deal with additional K-modalities (for the other roles) as in the proof in \cite{Blackburn2001}.
	Indeed, as observed in \cite{LutzSattler2001}, the ``window operator" $\window$ studied in \cite{Gargov_etal_1987} is strongly related to a negated modality, as
	$\window \phi$ can be written as $[\neg R] \neg \phi$.

	The translation $K^{T(\neg)}$ can be defined modifying $K^T$ by replacing the concept equivalence $C^T \equiv \exists e^-. \{e_C \}$ with the  assertions:
	$ (\forall e. C^T)(e_C )$ and $ (\forall (\neg e).( \neg C^T))(e_C )$.
	
	One can show that from any model $I=(\Delta, \cdot^I)$ of $K^{T(\neg)}$  we can easily define a model of $K^T$ in $ \alcio$, and vice-versa---considering the usual interpretation of negated roles, inverse roles and nominals. 
	In fact,
	the semantics of the assertion $ (\forall e. C^T)$ $(e_C )$ is the following: 
	for all $x \in \Delta$, $(e_C ^I,x) \in e^I \Rightarrow x \in (C^T)^I$, 
	which is equivalent to  $\exists e^-.\{e_C \} \sqsubseteq C^T$.

	The semantics of the assertion $ (\forall (\neg e).( \neg C^T))(e_C )$ is: 
	for all $x \in \Delta$,  $(e_C ^I,x) \not \in e^I \Rightarrow x \not \in (C^T)^I$, i.e., 
	for all $x \in \Delta$, $ x \in (C^T)^I \Rightarrow (e_C ^I,x) \in e^I $,
	which is the semantics of  $C^T \sqsubseteq \exists e^-.\{e_C \}$.
	
	We conclude the proof by observing that, if $K^T$ has a model in $\alcio$, there is a model of $K^{T(\neg)}$.
	Then, by the finite model property of $\alc(\neg)$, $K^{T(\neg)}$  must have a finite model, from which  a finite model of $K^T$ can be defined. 
	\hfill $\Box$
\end{proof}
As a byproduct of the above proposition, we have that any $\alc^\Omega$ knowledge base $K$ has a translation $K^{T(\neg)}$ 
in the description logic  $\alc (\neg)$, which uses a single negated role $\neg e$,
and that each model of $K^{T(\neg)}$ can be mapped to a corresponding $\alcio$ model of  $K^{T}$, and vice-versa.

%
%
To conclude our analysis we now prove the completeness of the translation $K^{T}$ of a knowledge base $K$ in $\alcio$.

\begin{proposition}[Completeness of the translation]\label{Prop:Completeness-translation}
The translation of an $\alc^\Omega$  knowledge base $K= ({\cal T} ,{\cal A})$  into $\alcio$ is complete,  that is, 
for any query $F$:
\begin{align*}
	K \models_{\alc^\Omega} F & \Rightarrow K^T \models_{\alcio} F^T.
\end{align*}
\end{proposition}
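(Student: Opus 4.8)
The plan is to argue by contraposition, running the soundness construction of Proposition~\ref{Prop:Soundness-translation} backwards. Assume $K^T \not\models_{\alcio} F^T$, so there is an $\alcio$ interpretation $I'=\langle \Delta', \cdot^{I'}\rangle$ that models $K^T$ and falsifies $F^T$. By the finite model property of Proposition~\ref{prop:finite_model_property}, I may take $\Delta'$ to be finite. From $I'$ I will manufacture an $\alc^\Omega$ model $I=\langle\Delta,\cdot^I\rangle$ of $K$ falsifying $F$, which by Definition~\ref{satisfiability} gives $K\not\models_{\alc^\Omega}F$.

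The construction reads the reserved role $e$ as an inverted membership edge and realizes the finite graph $(\Delta',e^{I'})$ as a genuine collection of sets. Concretely, I consider the finite system of set-theoretic equations $x_u = \{\,x_v \mid (u,v)\in e^{I'}\,\}$, one per $u\in\Delta'$, and solve it inside a model $\emme$ of $\Omega$ over atoms $\mathbb{A}$, obtaining an injective map $\sigma\colon\Delta'\to{\cal U}$ with $\sigma(v)\in\sigma(u)$ iff $(u,v)\in e^{I'}$. Because the assertion $(\neg\exists e.\top)(a)$ in ${\cal A}^T$ forces every standard individual $a$ to have no $e$-successor, each $\sigma(a^{I'})$ has empty extension and I take it to be a distinct atom of $\mathbb{A}$. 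I then set $\Delta=\sigma(\Delta')$ and define $A^I=\sigma(A^{I'})$ for $A\in N_C$, $R^I=\{(\sigma(u),\sigma(v))\mid (u,v)\in R^{I'}\}$ for the ordinary roles $R$, and $a^I=\sigma(a^{I'})$ for standard individuals. Since every edge stays inside $\Delta'$, the elements of any $\sigma(u)$ are again of the form $\sigma(v)$ with $v\in\Delta'$, so $\Delta$ is a transitive set, as required.

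The heart of the argument is a correspondence lemma, proved by structural induction on an $\alc^\Omega$ concept $C$: for every $u\in\Delta'$, $\sigma(u)\in C^I$ iff $u\in (C^T)^{I'}$. The concept-name, Boolean, difference, and ordinary $\forall R/\exists R$ cases are routine, using injectivity of $\sigma$ and the fact that $T$ commutes with these constructors. The crucial case is $\texttt{Pow}(C)$, whose translation is $\forall e.C^T$: here $\sigma(u)\in(\texttt{Pow}(C))^I$ means $\sigma(u)\subseteq C^I$, i.e.\ every element of $\sigma(u)$ lies in $C^I$; since those elements are exactly the $\sigma(v)$ with $(u,v)\in e^{I'}$, the inductive hypothesis turns this into ``every $e$-successor $v$ of $u$ satisfies $v\in(C^T)^{I'}$'', which is precisely $u\in(\forall e.C^T)^{I'}$. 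With the lemma in hand I verify $I\models K$: TBox inclusions follow because $(C^T)^{I'}\subseteq(D^T)^{I'}$ pushes through $\sigma$ to $C^I\subseteq D^I$; for a concept membership $C\in D$ I use axiom~(\ref{axiom:e_c}), which makes the $e$-extension of $e_C^{I'}$ equal to $(C^T)^{I'}$, so that $\sigma(e_C^{I'})$ is the element whose extension is $C^I$, while $D^T(e_C)\in{\cal A}^T$ places it in $D^I$, yielding $C^I\in D^I$; role memberships and ordinary assertions are handled the same way. A short case analysis on the shape of $F$ (inclusion, assertion, or membership) then transfers the falsification of $F^T$ in $I'$ to a falsification of $F$ in $I$.

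The main obstacle, and the step deserving the most care, is the set-theoretic realization $\sigma$. The graph $(\Delta',e^{I'})$ may contain $e$-cycles (reflecting, for instance, an axiom $C\in C$), so the equations need not be well-founded and their solution must be sought among hypersets; finiteness of $\Delta'$ is what keeps us within rational $\HF^{1/2}(\mathbb{A})$-style universes of Section~2. More delicate is that the correspondence lemma needs $\sigma$ to be injective---two $e$-bisimilar nodes carrying different concept-name labels must stay distinct sets---yet the power-set case forbids padding the sets with tagging atoms, since any spurious element of $\sigma(u)$ would corrupt the equivalence $\sigma(u)\subseteq C^I$. The two demands are reconciled only because $\Omega$ has no extensionality axiom: a non-extensional model of $\Omega$ admits distinct sets with identical extensions, so $\sigma$ can be injective without altering any extension. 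I must therefore be explicit that $\emme$ is chosen non-extensional (and non-well-founded), and that the concept extensions occurring on the left of membership axioms are genuinely realized as the elements $\sigma(e_C^{I'})$ of $\Delta$.
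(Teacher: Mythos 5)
Your proposal is correct and follows essentially the same route as the paper's proof: contraposition, an appeal to Proposition~\ref{prop:finite_model_property} to get a finite $\alcio$ model, realization of the graph $(\Delta',e^{I'})$ as hypersets in $\HF^{1/2}(\mathbb{A})$ by solving the induced system of set-theoretic equations (the paper's map $M(\cdot)$, with individuals sent to atoms), injectivity secured by duplicating extensionally equal sets in a non-extensional $\Omega$-model, and the same structural-induction correspondence lemma whose key case matches $\texttt{Pow}(C)$ with $\forall e.C^T$. Your explicit remarks on why injectivity cannot be obtained by tagging atoms and why non-extensionality is what saves the construction make precise a point the paper handles only in passing, but they do not constitute a different argument.
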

\begin{proof}
We prove the completeness of the translation by contraposition. 
Let  $K^T \not \models_{\alcio} F^T$.  
Then there is a model 
$I=\langle \Delta, \cdot^I \rangle$ of $K^T$ in $\alcio$ such that $I$ falsifies $F$.
We show that we can build a model $J=\langle \Lambda, \cdot^J \rangle$ of $K$ in $\alc^\Omega$,
where
the domain $\Lambda$ is a transitive set
in the universe $\HF^{1/2}(\mathbb{A})$ consisting of all the hereditarily finite rational hypersets
built from atoms in $\mathbb{A}=\{{\bf a_0}, {\bf a_1},\ldots  \}$.
As a matter of fact, the domain $ \Lambda $ is to be extended possibly duplicating sets representing extensionally equal but pairwise distinct sets/elements in $\Delta$.

We define $ \Lambda $ starting from the graph\footnote{Strictly speaking the graph $ G $ introduced here is not really necessary: it is just mentioned to single out the membership relation $ \in $  from $ e^I $ more clearly.} $ G=\langle \Delta, e^I \rangle$, whose nodes are the elements of $ \Delta $ and whose arcs are the pairs $ (x,y)\in e^I $. Notice that, by Proposition \ref{finite}, the graph $ G $ can be assumed to be finite. 
Intuitively, an arc from $ x $
to $ y $ in $ G $ stands for the fact that $ y \in x $.

At this point, let $\Delta_0= \{ d_1, \ldots , d_m\} $ be the elements of $\Delta$ which, in the model $I=\langle \Delta, \cdot^I \rangle$, are not in relation $e^I$ with any other element in $\Delta$ and are non equal to the interpretation of any concept individual name $e_C$ 
(that is, $d_j \in \Delta_0 $ iff there is no $y$ such that $(d_j, y) \in e^I$ and there is no concept $C$ such that $d_j= e_C^I$).
We define the $ M(d) $'s for $ d \in \Delta $,  as the hypesets satisfying the following collection of set-theoretic equations:
\begin{align}\label{def_M(d)}
	M(d) & = \left\{\begin{array}{ll}
						{\bf a_k} & \mbox{ if } d = d_k \in \Delta_0,\\
						\left\{M(d') \tc (d,d')\in e^I\right\} & \mbox{ otherwise. } 
					\end{array}\right.
\end{align} 
Observe that,  for the concepts $C$ occurring as l.h.s. of membership axioms, as axiom $ C^T= \exists e^-.\{e_C\} $ is satisfied in the model $I$ of $K^T$, it holds that $d' \in (C^T)^I$ iff $(e_C^I, d') \in e^I$. Therefore, 
for $d=e_C^I$, $M(d)= M(e_C^I)= \left\{M(d') \tc (e_C^I,d')\in e^I\right\} $ $ = \left\{M(d') \tc d'\in (C^T)^I\right\}$.

The above definition uniquely determines hypersets in $ \HF^{1/2}(\mathbb{A}) $. This follows from the fact that all finite systems of (finite) set-theoretic equations have a solution in $ \HF^{1/2}(\mathbb{A}). $\footnote{When  $ e^I $ is a well-founded relation, $ M(\cdot) $ is its inductively defined set-theoretic ``rendering'' going under the name of \emph{Mostowski collapse} of $ e^I $ (see \cite{Jech:2003ly}).  As a consequence of the duplication of extensionally equal sets, not only we have the trivial property that, for $d,d' \in \Delta$, $d=d'$ implies $M(d)=M(d')$, but also the converse implication, i.e., 
$M(d)=M(d')$  implies $d=d'$.}

To complete the definition of $ J= \langle \Lambda, \cdot^J\rangle $ in such a way to prove that $ J $ is a model of $ K $ in $\alc^\Omega$ falsifying $ F $, we put:

	-  $\Lambda=\{M(d) \tc d \in \Delta\}$;    
	
	-  for all $B \in N_C$, $B^{J}=\{M(d) \tc  d \in B^I\};$
	
	-  for all roles $R \in N_R$ such that $ R \neq e $, 
		$R^{J}= \{(M(d),M(d')) \tc (d,d')\in R^I\};$
	
	-   for all standard named individuals $a \in N_I$ such that $a^I= d_k$, let $a^J= M(d_k) ={\bf a_k}\in \mathbb{A}$.\\
By construction,  $\Lambda$ is transitive set in a model $\emme$ of $\Omega$ 
(in fact, for all $M(d) \in \Lambda$, if $M(d') \in M(d)$, then $(d', d) \in e^I$ and then $d' \in \Delta$; therefore, $M(d') \in \Lambda$).
Moreover, it can be proved, by induction on the structural complexity of concepts, that, for all $x \in \Delta$: 
\begin{align}\label{set-eq_testo}
	M(x) \in C^J & \mbox{ if and only if } x \in (C^T)^I.
\end{align}
Let us consider the interesting case: $ C=\texttt{Pow}(D)  $. By definition of $ ~T $ , we have that: 
\begin{align*}
(C^T)^I & = ((\texttt{Pow}(D))^T)^I = (\forall e.D^T)^I  =\{x \in \Delta \tc \forall y( (x,y) \in e^I  \imp  y \in (D^T)^I\}
\end{align*}
and 
$C^J  =(\texttt{Pow}(D))^J=\mathit{Pow}(D^J) \cap \Lambda$.
Consider, for $ x \in \Delta $, $ M(x)\in \mathit{Pow}(D^J) \cap \Lambda $, which is as to say that $ M(x)\subseteq D^J$. 
All the elements of $ M(x) $ are of the form $ M(y) $ for some $ y\in \Delta $, therefore we have that: 
\begin{align*}
\forall M(y)(M(y)\in M(x) \rightarrow M(y) \in D^J), 
\end{align*}
which, by definition of $ M(\cdot) $ and by inductive hypothesis, means that: 
\begin{align*}
\forall y((x,y) \in e^I \rightarrow y \in (D^T)^I), 
\end{align*}
which means $x \in (\forall e.D^T)^I =((\texttt{Pow}(D))^T)^I$
and proves (\ref{set-eq_testo}) in this case.

From  (\ref{set-eq_testo}) it is easy to prove that  all axioms and assertions in $ K $ are satisfied in $ J $.
\hfill $\Box$
\end{proof}
As the translation of $\alc^\Omega$ into $\alcio$  is polynomial (actually, linear) in the size of the knowledge base (and of the query) the following complexity result follows. 

\begin{proposition}
Concept satisfiability in $\alc^\Omega$  is an  \textsc{ExpTime}-complete problem.
\end{proposition}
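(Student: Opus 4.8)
The plan is to establish the two matching bounds separately: membership in \textsc{ExpTime} by means of the translation into $\alcio$ developed in this section, and \textsc{ExpTime}-hardness by inheritance from plain $\alc$.

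For the upper bound, I would reduce concept satisfiability in $\alc^\Omega$ to concept satisfiability in $\alcio$. Given a knowledge base $K$ and a concept $C$, observe that $C$ is consistent with $K$ if and only if $K \not\models_{\alc^\Omega} C \sqsubseteq \bot$. Taking $F$ to be the query $C \sqsubseteq \bot$, Propositions~\ref{Prop:Soundness-translation} and~\ref{Prop:Completeness-translation} together yield $K \models_{\alc^\Omega} F$ iff $K^T \models_{\alcio} F^T$; since $\bot^T = \bot$ and $F^T$ is the subsumption $C^T \sqsubseteq \bot$, consistency of $C$ with $K$ is equivalent to consistency of $C^T$ with $K^T$ in $\alcio$. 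As already remarked, the map $K \mapsto K^T$ (together with $C \mapsto C^T$) is linear in the size of the input, and concept satisfiability with respect to a general TBox in $\alcio$ is decidable in \textsc{ExpTime}. Composing the polynomial reduction with the \textsc{ExpTime} decision procedure for $\alcio$ therefore gives an \textsc{ExpTime} algorithm for $\alc^\Omega$.

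For the lower bound, I would argue that $\alc^\Omega$ conservatively contains $\alc$. Syntactically, every $\alc$ concept is an $\alc^\Omega$ concept and every $\alc$ knowledge base is an $\alc^\Omega$ knowledge base (one simply uses neither the power-set constructor nor membership axioms). Semantically, the two notions of satisfiability agree on this fragment: on the one hand, forgetting the set-theoretic structure of the domain turns any $\alc^\Omega$ model into an ordinary $\alc$ model; on the other hand, any $\alc$ model can be realized as an $\alc^\Omega$ model by taking its domain elements to be pairwise distinct atoms of $\mathbb{A}$, since a set of atoms is vacuously transitive and lives in, say, $\HF^{1/2}(\mathbb{A})$, with roles interpreted as arbitrary binary relations. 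Hence $\alc$ concept satisfiability with respect to a general TBox reduces to $\alc^\Omega$ concept satisfiability, and since the former is already \textsc{ExpTime}-hard (a classical result, see e.g.\ \cite{Schild91}), so is the latter. Combining the two bounds gives \textsc{ExpTime}-completeness.

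The genuinely new content of the argument is confined entirely to the translation proved in the previous propositions, so I do not expect a serious obstacle. The two points requiring a moment of care are, first, the semantic embedding used for the lower bound---verifying that restricting the domain to a transitive set of atoms faithfully recovers standard $\alc$ satisfiability---and, second, the appeal to the known \textsc{ExpTime} upper bound for $\alcio$ with general TBoxes, which is invoked rather than reproved.
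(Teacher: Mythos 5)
Your proposal is correct and follows essentially the same route as the paper: the \textsc{ExpTime} upper bound via the linear translation into $\alcio$ (the paper invokes the known bound for $\mathcal{SHOI}$, which subsumes $\alcio$), and hardness inherited from $\alc$ concept satisfiability with general TBoxes. The only difference is that you spell out details the paper leaves implicit---the reduction via the query $C \sqsubseteq \bot$ and the verification that $\alc$ models embed faithfully as atom-only $\alc^\Omega$ models (where, for the latter, one should note that finiteness via the finite model property of $\alc$ is what licenses realizing the domain inside $\HF^{1/2}(\mathbb{A})$)---which is sound and harmless.
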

The hardness comes from the \textsc{ExpTime}-hardness of concept satisfiability in $\alc$ with general TBox \cite{Schild91,handbook}. 
The upper bound comes from the \textsc{ExpTime} upper bound for ${\cal SHOI}$ \cite{HladikIJCAR2004}.

	\section{A set theoretic translation of  $\alc^\Omega$} \label{sec:tr_DL to sets}

To translate $\alc^\Omega$ in  $\Omega$, we exploit the polymodal version of 
the correspondence between $\in$ and the accessibility relation of a normal modality
studied in \cite{DAgostino1995} and used above. We start modifying our previously introduced translation of $\alc$ along the line used by D'Agostino et al. 
to deal with normal, complete finitely axiomatizable polymodal logics \cite{DAgostino1995}.
Then, we use the well known correspondence between description logics and modal logics studied by Schild \cite{Schild91}, 
where concepts (sets of elements) play the role of propositions (sets of worlds) in the polymodal  logic, while 
universal and existential restrictions $\forall R$ and $\exists R$
play the role of universal and existential modalities $\Box_i$ and $\Diamond_i$.

In Section \ref{sec:tr_ALC^Omega-R to sets} we focus on the fragment of $\alc^\Omega$ admitting no roles, no individual names and no existential and universal restrictions, that we call $\al^\Omega$. We show that $\al^\Omega$ can be given a simple set-theoretic translation 
in $\Omega$. 
Finally, in Section  \ref{sec:encoding}, we see that this set-theoretic translation 
can  be naturally extended to the full $\alc^\Omega$.
In particular, we encode $\alc^\Omega$ into its fragment $\al^\Omega$,
showing that $\al^\Omega$ is as expressive as $\alc^\Omega$ and providing a set-theoretic translation of $\alc^\Omega$
in which $\forall R_i.C$ and the power-set concept $\texttt{Pow}(C)$  are encoded in a uniform way.

	\subsection{A set theoretic translation of  $\alc$ with empty ABox } \label{sec:tr_ALC to sets}

Let $R_1,\ldots,R_k$ be the roles occurring in the knowledge base  $K= ({\cal T}, {\cal A})$ 
and let $A_1, \ldots, A_n$ be the concept names occurring in $K$.
Given a concept $C$ of $\alc$, built from the concept names and role names in $K$, 
its set-theoretic translation is a set-theoretic term $C^S(x,y_1,\ldots, y_k, x_1,\ldots,$ $ x_n)$, where $x,y_1,\ldots, y_k, x_1,\ldots, x_n$ are set-theoretic variables, inductively defined as follows:
\begin{quote}
   $\top^S=x$;  \ \ \ \ \ \ \ \ \ \ \ \ \ \ \ \ \ \ \ \ \ \ \ \ \ \ \ \ \ \ \ \ \ \ \ \ \ \ \ \ \ \ $\bot^S=\emptyset$;
   
   $A_i^S= x_i$ , for $A_i $  in $K$;   \ \ \ \ \ \ \ \ \ \ \ \ \ \ \ \ \ \ \  $(\neg C)^S=x \backslash C^S$;
  
   $(C \sqcap D)^S=C^S \cap D^S$;   \ \ \ \ \ \ \ \ \ \ \ \ \ \ \ \ \ \    $(C \sqcup D)^S=C^S \cup D^S$;
  
   $(\forall R_i.C)^S= Pow(((x \cup y_1 \cup \ldots \cup y_k) \backslash y_i ) \cup Pow(C^S))$, \ \ \ for $R_i$  in $K$;
 
\end{quote}
$(\exists R_i.C)^S$ is translated to the set-theoretic term $(\neg \forall R_i. \neg C)^S$.
Each $\alc$ concept $C$ is represented by a set-theoretic term $C^S$ and interpreted as a set in each model of $\Omega$.
Membership is used to give an interpretation of roles, as for modalities in the polymodal logics in \cite{DAgostino1995}.

For a single role $R$, by \textit{imitating} the relation $R^I$ with $\in$ (where  $v \in u$ corresponds to $(u,v) \in R^I$), we naturally obtain that $\texttt{Pow}(C)$ corresponds to the universal restriction $\forall R.C$.
For multiple roles, in order to encode the different relations $R_1,\ldots,R_k$, $k$ sets $U_i$ 
are considered.
Informally, each set $U_i$ (represented by the variable $y_i$) is such that $(v,v') \in R_i^I$ iff there is some $u_i \in U_i$ such that $u_i \in v$ and  $v' \in u_i$.

Given an $\alc$ knowledge base $K=({\cal T}, {\cal A})$ with $ {\cal A}= \emptyset$, we define the translation of the TBox axioms as follows:
\begin{quote}
$\mathit{TBox}_{\cal T}(x,y_1,\ldots,y_k, x_1, \ldots, x_k)= \{  C_1^S \cap x \subseteq C_2^S \mid C_1 \sqsubseteq C_2  \in {\cal T} \}$
 \end{quote}
We can then establish a correspondence between subsumption in $\alc^\Omega$ and derivability in the set theory $\Omega$,
instantiating the result of Theorem 5 in \cite{DAgostino1995} as follows:

\begin{proposition}  \label{set-th-translation-alc} 
For all concepts  $C$ and $D$ on the language of the theory $K$:

$K \models_{\alc} C \sqsubseteq D$  { if and only if }

$\Omega \vdash \forall x \forall y_1 \ldots \forall y_k (Trans^2(x)$
$\rightarrow \forall x_1, \ldots, \forall x_n (  \bigwedge \mathit{TBox}_{\cal T}  \rightarrow C^S \cap x \subseteq D^S))$\\
where $Trans^2(x)$ stands for $\forall y \forall z (y \in z \wedge z \in  x  \rightarrow y \subseteq x)$, that is, $x \subseteq \texttt{Pow}(\texttt{Pow}(x))$.\end{proposition}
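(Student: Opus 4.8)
The plan is to obtain the statement as the composition of two already established correspondences, so that the displayed equivalence becomes a direct \emph{instantiation} of Theorem 5 of \cite{DAgostino1995}. On one side I would use Schild's translation of $\alc$ into polymodal modal logic \cite{Schild91}, reading an inclusion problem with TBox ${\cal T}$ as a global-consequence problem in the minimal normal logic $\mathbf{K}_k$ with $k$ modalities $\Box_1,\ldots,\Box_k$, one per role $R_i$. On the other side I would use the set-theoretic translation of such logics into $\Omega$ from \cite{DAgostino1995}, and then check, clause by clause, that the term $C^S(x,y_1,\ldots,y_k,x_1,\ldots,x_n)$ defined above is exactly the translation produced by that machinery under Schild's reading.

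First I would make the modal reading explicit. An $\alc$ interpretation $I=\langle\Delta,\cdot^I\rangle$ \emph{is} a $k$-ary Kripke model: $\Delta$ is the set of worlds, each $R_i^I$ an accessibility relation, and each concept name $A_i$ the proposition true at the worlds of $A_i^I$. Schild's translation $\tau$ sends $\sqcap,\sqcup,\neg$ to $\wedge,\vee,\neg$ and sends $\forall R_i.C$, $\exists R_i.C$ to $\Box_i\tau(C)$, $\Diamond_i\tau(C)$. Since the ABox is empty, $K\models_{\alc} C\sqsubseteq D$ says that $C^I\subseteq D^I$ holds in every model of ${\cal T}$, which is exactly the statement that $\tau(C)\to\tau(D)$ holds at every world of every Kripke model in which each $C_1\sqsubseteq C_2\in{\cal T}$ holds globally, i.e.\ the \emph{global consequence} of $\mathbf{K}_k$ from the global axioms $\{\tau(C_1)\to\tau(C_2): C_1\sqsubseteq C_2\in{\cal T}\}$.

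Then I would apply Theorem 5 of \cite{DAgostino1995}, which characterises validity in a complete, finitely axiomatised normal polymodal logic by derivability in $\Omega$. Under its translation, worlds are the elements of a distinguished set (the variable $x$), a proposition is a subset of $x$ (the variables $x_1,\ldots,x_n$ for the concept names), and the $i$-th accessibility relation is imitated through ``access sets'' recorded by $y_1,\ldots,y_k$: one reads $(v,v')\in R_i$ as the existence of a token $u_i\in y_i$ with $u_i\in v$ and $v'\in u_i$. With this dictionary one verifies inductively that $\tau(C)$ translates precisely to $C^S$; in particular the box clause $\Box_i$ yields exactly $\mathit{Pow}(((x\cup y_1\cup\cdots\cup y_k)\backslash y_i)\cup \mathit{Pow}(C^S))$, the global axioms become the conjunction $\bigwedge\mathit{TBox}_{\cal T}$ of the restricted inclusions $C_1^S\cap x\subseteq C_2^S$, and the outer quantifiers $\forall x\,\forall y_1\cdots\forall y_k$ and $\forall x_1\cdots\forall x_n$ range respectively over candidate world-domains, access sets and proposition-valuations.

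Finally, the hypothesis $Trans^2(x)$, i.e.\ $x\subseteq\texttt{Pow}(\texttt{Pow}(x))$, is exactly the closure needed for soundness of this two-level encoding of accessibility: if $v\in x$ is a world and $u_i\in v$ an access token, the successor $v'\in u_i$ must again be a world of $x$, and $Trans^2(x)$ gives precisely $u_i\subseteq x$ whenever $u_i\in v\in x$, hence $v'\in x$. The hard part will be the bookkeeping of this alignment: checking that the encoding via the $y_i$ faithfully separates the $k$ relations, so that the $\Box_i$ clause sees the $R_i$-successors and not those of the other roles, and that Theorem 5, originally phrased for provability in the logic, transfers to the global-consequence-with-hypotheses form needed here once the TBox inclusions are internalised as the antecedent $\bigwedge\mathit{TBox}_{\cal T}$. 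The residual inductive matching of $C^S$ with $\tau(C)$ and the transport of the two universally quantified implications across the equivalence are then routine.
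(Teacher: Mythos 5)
Your proposal takes essentially the same route as the paper, which justifies this proposition exactly by instantiating Theorem 5 of \cite{DAgostino1995} through Schild's correspondence between $\alc$ and the normal polymodal logic $K_m$ \cite{Schild91}, and appeals to Theorems 17 and 18 of \cite{DAgostino1995} for soundness and completeness of the set-theoretic translation. Your extra bookkeeping---the access-set reading of the variables $y_i$, the role of $Trans^2(x)$ in ensuring that successors reached through the two-level membership encoding stay inside $x$, and the internalisation of the TBox inclusions as the antecedent $\bigwedge \mathit{TBox}_{\cal T}$---is a correct and somewhat more explicit spelling-out of the same argument, which the paper leaves implicit.
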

The validity of $Trans^2(x)$ on the set $x$, which here represents the domain $\Delta$ of an $\alc$ interpretation,
 is required, as in the polymodal case  in \cite{DAgostino1995},  to guarantee that elements accessible through $R_i$ turn out to be themselves in $x$.

{ Roughly speaking, the meaning of Proposition  \ref{set-th-translation-alc} is that, for all  instances of $x$ representing 
the domain $\Delta$ and 
for all the instances $U_1, \ldots, U_k$ of the set variables $y_1,\ldots,y_k$,
any choice for the  interpretation $ x_1, \ldots, x_n$ of the concept names $A_1, \ldots, A_n$ in $K$ which satisfies the TBox axioms over the elements in $x$ (i.e., over the domain $\Delta$),
also satisfies the inclusion $C^S\subseteq D^S$ over $\Delta$.}

From the correspondence of the logic $\alc$ with the normal polymodal logic $K_m$ in \cite{Schild91}
and from the soundness and completeness of the set-theoretic translation for normal polymodal logics (Theorems 17 and 18 in \cite{DAgostino1995}),
we can conclude that, for $\alc$, the set-theoretic translation above is sound and complete.

This set-theoretic translation can then  be 
extended to other constructs of description logics 
including a set of axioms
$Axiom_H(x,y_1,\ldots,y_k)$, as  in \cite{DAgostino1995} to provide the translation of the specific axioms 
of a polymodal logic, as follows:

$\;$ \ \ \ \  $\forall x \forall y_1 \ldots \forall y_k (Trans^2(x) \wedge Axiom_H(x,y_1,\ldots,y_k) $

$\;$ \ \ \ \ \ \ \ \ \ \ \ \ \ \ \ \  \ \ \ \ \ \ \ \ \  \ \ \ \ \ \ \ \ \ \ \ \ \ \ \ \ \ \ \ \ \ \ \ \ 
$\rightarrow \forall x_1, \ldots, \forall x_n (  \bigwedge \mathit{TBox}_{\cal T}  \rightarrow C^S \cap x \subseteq D^S))$\\
In the following, we consider a few examples and we let for future work the development of a 
set-theoretic characterizations for more expressive DLs.


{\em  Role hierarchy} axioms have the form $R_j \sqsubseteq R_i$, and semantic condition $R_j^I \subseteq R_i^I$. 
They can be captured by adding in $Axiom_H(x,y_1,\ldots,y_k)$ the condition $y_j \subseteq y_i$.

{\em Transitivity of a role $R_i$},  $Trans(R_i)$, 
which corresponds  to the role inclusion axiom $R_i \circ R_i \sqsubseteq R_i$,  
can be captured 
adding the following axiom in $Axiom_H(x,y_1,\ldots,y_k)$:

$\forall y, u, v, u', z ( y \in x \rightarrow (( u\in y  \wedge u \in y_i \wedge v \in u  \wedge  u' \in v \wedge u' \in y_i \wedge z \in u')$

$\;$\ \ \ \ \ \ \ \ \ \ \ \ \ \ \ \ \ \ \ \ \ \ \ \ \ \ \ \ \ \ \ \ \ \ \ \ \ \ \ \ \ \ \ \ \ \ \ \ \ \ \ \ \ \ \ \ \ \ \  $\rightarrow \exists  u'' (  u''\in y  \wedge u'' \in y_i \wedge z \in u'')) )$

\noindent
encoding the semantic property 
$\forall y,v,z ( (y,v) \in R_i^I \wedge (v,z) \in R_i^I \rightarrow (y,z) \in R_i^I)$.

{\em Inverse roles}: 
Let a role $R_j$  be the inverse of $R_i$ (i.e., $R_j= R_i^-$).  
The semantic condition 
$(v,y) \in R_j^I$ if and only if $(y,v) \in R_i^I$ can be encoded by the axiom:

$\forall y, v (y \in x \rightarrow ( \exists u (u\in y  \wedge u \in y_j \wedge v \in u ) \leftrightarrow \exists u'( u'\in v  \wedge u' \in y_i \wedge y \in u' )))$\\
A similar axiom can be defined for complex role inclusions. A direct translation of nominals, $\{a\}$, would require a set theory with singleton operators.

	\subsection{A set-theoretic translation of the fragment $ \al^\Omega $} \label{sec:tr_ALC^Omega-R to sets}  

In this section we focus on the fragment $ \al^\Omega $ of $\alc^\Omega$ 
without roles, individual names, universal and existential restrictions and role assertions,
and we show that it can be given a simple set-theoretic translation 
in  $\Omega$. This translation provides some insight on the nature of the power-set construct in $\alc^\Omega$.

We call $\al^\Omega$ the fragment,  
whose concepts are defined inductively as follows: 

- $A \in N_C$, $\top$ and $\bot$ are $\al^{\Omega}$ \emph{concepts};
    
- if $C, D$ are $\al^{\Omega}$ concepts, then the following are $\al^{\Omega}$ \emph{concepts}:
  \begin{center}
  $C \sqcap D, C \sqcup D, \neg C, C\backslash D, \texttt{Pow}(C)$ 
  \end{center}

\noindent
The semantics of concept constructs in $\al^\Omega$ is the same as in $\alc^\Omega$.
An $\al^\Omega$ knowledge base $K$ is a pair $({\cal T}, {\cal A})$, where the TBox ${\cal T}$ is a set of concept inclusions $C \sqsubseteq D$, and the ABox ${\cal A}$ is a set of membership axioms $C \in D$. 
The notions of satisfiability of a knowledge base $K$ and entailment form $K$ are defined as in  $\alc^\Omega$.

Given an $\al^\Omega$ knowledge base  $K= ({\cal T}, {\cal A})$,
let $A_1, \ldots, A_n$ be the concept names occurring in $K$. 
We define a translation of an $\al^\Omega$ concept $C$ over the language of $K$ to a set-theoretic term  
$C^S(x, x_1,\ldots, x_n)$, where $x, x_1,\ldots, x_n$ are set-theoretic variables,
by induction on the structure of concepts, as follows:
\begin{center}
$\top^S=x;  ~ \bot^S=\emptyset; ~ A_i^S= x_i, \text{ for } i=1,\ldots, n;$\\
$	  (\neg C)^S=x \backslash C^S; ~   (C \sqcap D)^S=C^S \cap D^S;  ~ (C \sqcup D)^S=C^S \cup D^S; ~ (C \backslash D)^S=C^S \backslash D^S;$\\
$	    (\texttt{Pow}(C))^S = \mathit{Pow}( C^S). $
\end{center}
Let $K=({\cal T}, {\cal A})$.
The translation for the TBox ${\cal T}$ and  ABox ${\cal A}$ is defined as  follows:

$\mathit{TBox}_{\cal T}(x, x_1, \ldots, x_n)=  \{  C_1^S \cap x \subseteq C_2^S \mid C_1 \sqsubseteq C_2  \in {\cal T} \}$ 

$ \mathit{ABox}_{\cal A}(x,x_1, \ldots, x_n)=  \{C_1^S \in C_2^S \cap x \mid (C_1 \in C_2)  \in {\cal A} \} $\\
We can now establish a correspondence between subsumption in $\al^\Omega$ and derivability in 
$\Omega$.
\begin{proposition}[Soundness and Completeness of the translation of $\al^\Omega$]  \label{set-th-translation} 
For all concepts  $C$ and $D$ on the language of the knowledge base $K$: 

$K \models_{\al^\Omega} C \sqsubseteq D$  { if and only if } 

$\Omega \models  \forall x  (  Trans(x) $ 
$\rightarrow \forall x_1, \ldots, \forall x_n ( \bigwedge \mathit{ABox}_{\cal A} \wedge   \bigwedge \mathit{TBox}_{\cal T}  \rightarrow C^S \cap x \subseteq D^S))$

\noindent
where $Trans(x)$ stands for $\forall y (y \in x  \rightarrow y \subseteq x)$, that is, $x \subseteq \texttt{Pow}(x)$.

\end{proposition}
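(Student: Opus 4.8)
The plan is to reduce the biconditional to a single structural \emph{correspondence lemma} relating the set-theoretic term $C^S$ to the interpretation $C^I$, and then to read off the two directions by contraposition. First I would prove, by induction on the structure of an $\al^\Omega$ concept $C$, that under the canonical assignment $x\mapsto\Delta$ and $x_i\mapsto A_i^I$ (so $\Delta$ is transitive and $A_i^I\subseteq\Delta$) one has
\[ C^S\cap\Delta \;=\; C^I . \]
The Boolean cases ($\top,\bot,A_i,\neg,\sqcap,\sqcup,\backslash$) are immediate, since intersecting with $\Delta$ commutes with the Boolean operations and $A_i^I\subseteq\Delta$. The only case that really uses transitivity is $\texttt{Pow}$: for $t\in\Delta$ transitivity gives $t\subseteq\Delta$, so $t\subseteq C^S$ iff $t\subseteq C^S\cap\Delta=C^I$, whence $\mathit{Pow}(C^S)\cap\Delta=\mathit{Pow}(C^I)\cap\Delta=(\texttt{Pow}(C))^I$. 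This lemma also matches the TBox translation with subsumption: since $C_1^I\subseteq\Delta$, the clause $C_1^S\cap\Delta\subseteq C_2^S$ holds iff $C_1^I\subseteq C_2^I$, and likewise the conclusion $C^S\cap\Delta\subseteq D^S$ is equivalent to $C^I\subseteq D^I$.

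For the $\Rightarrow$ direction (soundness) I would argue by contraposition: an $\Omega$-model with a witnessing assignment $x\mapsto s$ (with $Trans(s)$), $x_i\mapsto s_i$ that falsifies the right-hand side yields, via the lemma with $A_i^I:=s_i\cap s$, an $\al^\Omega$ interpretation on the transitive domain $\Delta=s$. Here the ABox clause behaves well: from $C_1^S\in C_2^S\cap s$ we get $C_1^S\in s$, so $Trans(s)$ forces $C_1^S\subseteq s$ and hence $C_1^S=C_1^S\cap s=C_1^I$; the set-theoretic membership $C_1^S\in C_2^S\cap s$ therefore collapses to $C_1^I\in C_2^I$, and every membership axiom of ${\cal A}$ is satisfied. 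Combined with the TBox matching this gives a model of $K$ refuting $C\sqsubseteq D$.

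For the $\Leftarrow$ direction (completeness) I would start from an $\al^\Omega$ model $I$ of $K$ with $C^I\not\subseteq D^I$ and instantiate $x\mapsto\Delta$, $x_i\mapsto A_i^I$ in the ambient $\Omega$-model. The lemma at once supplies $Trans(\Delta)$, the TBox clauses, and the failure $C^S\cap\Delta\not\subseteq D^S$; the delicate point—and the main obstacle—is the ABox clause $C_1^S\in C_2^S\cap\Delta$. The axiom $C_1\in C_2$ only guarantees the \emph{weaker} fact $C_1^I=\mathit{Pow}(\cdots)\cap\Delta\in C_2^I$, whereas the translation demands that the \emph{full} term $C_1^S$ lie in $\Delta$; the two disagree exactly when a top-level $\texttt{Pow}$ in $C_1$ has subsets that are not \emph{visible} in $\Delta$ (so $C_1^S\not\subseteq\Delta$), and this really can occur.

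I expect to close this gap by a saturation step. Using the finite model property (Proposition~\ref{prop:finite_model_property}) I would first take $I$ finite and work inside $\HF^{1/2}(\mathbb{A})$, then pass to a model in which $\Delta$ is enlarged so that, for each membership-axiom left-hand side $C_1$, the power-sets occurring at its top level become fully visible (equivalently $C_1^S\subseteq\Delta$), while re-interpreting the concept names on the newly added elements so as to preserve the TBox and the refutation of $C\sqsubseteq D$. Once full visibility holds we again have $C_1^S=C_1^I$, the strong ABox clause coincides with the weak one, and the canonical instantiation falsifies the right-hand side. As in the $\alc$ case of Proposition~\ref{set-th-translation-alc}, the correctness of the underlying set-theoretic rendering can alternatively be viewed through the polymodal correspondence of \cite{DAgostino1995} via Schild's result \cite{Schild91}; but since $\al^\Omega$ has no roles the direct structural argument above suffices, with $Trans(x)$ replacing $Trans^2(x)$ precisely because no accessibility (role) modalities need be absorbed into the domain.
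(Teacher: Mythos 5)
Your correspondence lemma and your $(\Rightarrow)$ direction coincide with the paper's proof almost verbatim: the paper proves the same lemma in elementwise form (for all $w\in\Delta$, $w\in C^I$ iff $w\in (C^S)^\emme_\beta$, i.e.\ $C^I=(C^S)^\emme_\beta\cap\Delta$, with transitivity used exactly once, in the $\texttt{Pow}$ case), and for the ABox clause in that direction it uses precisely your argument: the clause itself gives $(C_1^S)^\emme_\beta\in u$, whence $Trans(u)$ yields $(C_1^S)^\emme_\beta\subseteq u$ and so $(C_1^S)^\emme_\beta=(C_1^S)^\emme_\beta\cap u=C_1^I$.

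Where you diverge is the $(\Leftarrow)$ direction, and here the comparison cuts both ways. The paper performs no saturation: it instantiates $x\mapsto\Delta$, $x_i\mapsto A_i^I$ in the ambient $\Omega$-model and justifies the ABox clause by asserting $(C_1^S)^\emme_\beta=C_1^I$ ``again by the correspondence.'' You are right that the correspondence alone only yields $(C_1^S)^\emme_\beta\cap\Delta=C_1^I$; the extra inclusion $(C_1^S)^\emme_\beta\subseteq\Delta$, which in the converse direction came for free from the ABox clause plus $Trans$, is simply not available here, and it genuinely fails when $C_1$ has a positive top-level $\texttt{Pow}$: in the ambient universe, $\mathit{Pow}((A^S)_\beta)$ contains every $\U$-subset of $A^I$ (e.g.\ $\emptyset$, and the sets $A^I\backslash w$), most of which need not lie in $\Delta$, so $(C_1^S)^\emme_\beta$ can properly exceed $(\texttt{Pow}(A))^I$ and the unrelativized clause $(C_1^S\in C_2^S\cap x)\beta$ can fail even though $I\models C_1\in C_2$. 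So your diagnosis targets a step the paper's own text glosses over. However, your repair is where your proposal stops being a proof: the saturation step is announced (``I expect to close this gap'') but not constructed, and it is the entire mathematical content of the gap. Enlarging $\Delta$ changes $\top^I$, $(\neg E)^I$ and $(\texttt{Pow}(E))^I$ for every subconcept, so ``re-interpreting the concept names so as to preserve the TBox and the refutation'' is the open problem restated, not solved; moreover top-level visibility is not enough (if $C_1=\texttt{Pow}(D)$ and $D$ itself contains $\texttt{Pow}$, then $(D^S)_\beta$ can already exceed $D^I$, so the saturation must be carried out hereditarily), and full visibility of $\mathit{Pow}(v)$ in the ambient model threatens the finiteness you secured via Proposition~\ref{prop:finite_model_property}. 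A cleaner route, if you want a complete argument, is to observe that your lemma alone proves both directions for the \emph{relativized} ABox translation $C_1^S\cap x\in C_2^S\cap x$ (under which $(C_1^S\cap x)_\beta=C_1^I$ always); reconciling that with the literal clause $C_1^S\in C_2^S\cap x$ of the statement is exactly the residue your saturation step would have to discharge.
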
 
%
\begin{proof}(Sketch)
($\Rightarrow$) By contraposition, suppose there is a model $\emme$ of $\Omega$, with universe $\U$ over $\mathbb{A}$, which falsifies the formula:
$\forall x  (  Trans(x) 
\rightarrow \forall x_1, \ldots, \forall x_n ( \bigwedge \mathit{ABox}_{\cal A} \wedge   \bigwedge \mathit{TBox}_{\cal T}  \rightarrow C^S \cap x \subseteq D^S))$.
Then there must be some $u \in \U$, such that 
$Trans(x)$ $[u/x]$ is satisfied in $\emme$, while 
$( \forall x_1, \ldots,$ $ \forall x_n ( \bigwedge \mathit{ABox}_{\cal A} \wedge   \bigwedge \mathit{TBox}_{\cal T}  \rightarrow C^S \cap x \subseteq D^S)))[u/x]$ is falsified in $\emme$.

Hence, there must be  $v_1, \ldots, v_n$ in $\U$, such that 
$( \bigwedge \mathit{ABox}_{\cal A} \wedge   \bigwedge \mathit{TBox}_{\cal T} )[u/x,  \overline{v}/ \overline{x}]$
is satisfied in $\emme$, while $ (C^S \cap x \subseteq D^S) [u/x, \overline{v}/ \overline{x}]$ is falsified in $\emme$.
Let $\beta=[u/x, \overline{v}/ \overline{x}]$.

We define an $\al^\Omega$ interpretation $I=(\Delta, \cdot^I)$, as follows:
%
 $\Delta=u$;  
%
$A_i^I=  v_i \cap u$, for all $i=1, \ldots, n$ such that $A_i$ occurs in $K$; and $A^I=  \emptyset$ for all other $A \in N_C$.  

\noindent
$I$ is well-defined.
By construction, $\Delta$ is a transitive set  living in the universe $\U$ of the $\Omega$ model $\emme$,
and $A_i^I \subseteq \Delta$.
We can prove by structural induction that, for all the concepts $C$ built from the concept names in $K$,
for the variable substitution $\beta= [u/x, \overline{v}/ \overline{x}]$, and
for all $w \in \Delta$:
\begin{align}\label{correspondence_C_C*_al}
w \in C^I  \mbox{ if and only if }  w \in (C^S)^\emme_\beta  
\end{align}
This equivalence can be used to prove that the $\alc^\Omega$ interpretation $I$ is a model of $K$, 
which falsifies the inclusion $C \sqsubseteq D$. Hence, $K \not \models_{\alc^\Omega} C \sqsubseteq D$.


($\Leftarrow$) By contraposition,
let  $I=(\Delta, \cdot^I)$ be $\al^\Omega$ model of $K$, falsifying the inclusion $C \sqsubseteq D$.
By construction, $\Delta$ is a transitive set living in the universe $\U$ of an $\Omega$ model $\emme$.
We show that $\emme$ falsifies the formula:
\begin{align} \label{formula_set_th_4.2_testo}
\forall x  (  Trans(x) \rightarrow \forall x_1, \ldots, \forall x_n ( \bigwedge \mathit{ABox}_{\cal A} \wedge   \bigwedge \mathit{TBox}_{\cal T}  \rightarrow C^S \cap x \subseteq D^S))
\end{align}
Let $\beta$ be the variable substitution $\beta= [u/x,  \overline{v}/ \overline{x}]$,
where: $u= \Delta$ and  $v_j= A_j^I $,  for all $j=1,\ldots,n$.
%
We can prove, by induction on the structure of the concept $C$, that for all the concepts $C$ built from the concept names in $K$,
and for all $d \in \Delta$:
\begin{align*}
d \in C^I  \mbox{ if and only if }  d \in (C^S)^\emme_\beta  
\end{align*}
This equivalence 
can be used to prove that  the formula (\ref{formula_set_th_4.2_testo}) is falsified in $\emme$, 
by showing that: 
$ ( \bigwedge \mathit{ABox}_{\cal A} \wedge   \bigwedge \mathit{TBox}_{\cal T})^\emme_\beta$
is satisfied in $\emme$ and that 
$(C^S \cap x \subseteq D^S)^\emme_\beta$ is falsified in $\emme$.
\hfill $\Box$
\end{proof}
A similar correspondence result can be proved for  {\em instance checking},
by replacing the inclusion $C^S  \cap x \subseteq D^S$ in Proposition \ref{Prop:encoding} with  $C^S  \in D^S \cap x$.

As we can see from the translation above, the power-set construct in $\al^\Omega$ 
is defined precisely as the set-theoretic power-set. 
From the translation it is clear that only the part of the power-set which is in the set $x$ (the domain $\Delta$) is relevant when evaluating the axioms in $K$ or  a  query.
In particular,  the axioms in the knowledge base
are only required to be satisfied over the elements of the transitive set $x$. 
Notice that it is the same in the set-theoretic translation of $\alc$ in Section \ref{sec:tr_ALC to sets}: knowledge base axioms are required to be satisfied on the elements of $x$.

Observe also that,  in both the translations of $\alc$ and of  $\al^\Omega$, $\top$ is interpreted as the transitive set $x$.
It would not be correct to interpret $\top$ as the universe $\U$ of a model of $\Omega$, as 
$\U$ is not a set. 
 In fact, $\texttt{Pow}(\top)$ is in the language of concepts 
and $(\texttt{Pow}(\top))^I= Pow(\top^I) \cap \Delta$. However, $Pow(\top^I)$ is not defined for $\top^I= {\cal U}$,
as ${\cal U}$ is not a set. \normalcolor

	\subsection{Translating $\alc^\Omega$ by encoding into $\al^\Omega$ } \label{sec:encoding}

In this section 
we  show that  $\al^\Omega$ has the same expressive power as $\alc^\Omega$,
as universal and existential restrictions of the language $\alc^\Omega$ (as well as role assertions)
can be encoded into $\al^\Omega$.
This encoding, together with the set-theoretic translation of $\al^\Omega$ given in the previous section, 
determines  a set-theoretic translation for $\alc^\Omega$,  in which roles are---ultimately---translated as in the polymodal translation in \cite{DAgostino1995}, and the power-set construct is translated accordingly.

Given an $\alc^\Omega$ knowledge base  $K= ({\cal T}, {\cal A})$,
let $R_1,\ldots,R_k$ be the role names occurring in $K$, $A_1, \ldots, A_n$ the concept names occurring in $K$,
and $a_1, \ldots, a_r$ the individual names occurring in $K$.
We introduce $k$ new concept names  $U_1, \ldots, U_k$ in the language, one for each role $R_i$.
These concepts (which are not concept names in $K$) will be used to encode universal restrictions $\forall R_i. C$ as well as the power-set concept $\texttt{Pow}(C)$ of $\alc^\Omega$ into $\al^\Omega$.
We further introduce  a new concept name $B_i$ for each individual name $a_i$ occurring in $K$,
a new concept name $F_{h,j}^i$ for each role assertion $R_i(a_h,a_j)$  
and a new concept name $G_{C_h,C_j}^i$ each role membership axiom $R_i(C_h,C_j)$. 

For an  $\alc^\Omega$ concept $C$,  
the encoding $C^E$ in $\al^\Omega$ can be defined by recursively replacing:
every named individual $a_i$ with the new concept name $B_i$, 
every subconcept $\forall R_i.C$  with $(\forall R_i.C)^E$ and 
every subconcept  $\texttt{Pow}(C)$ with  $(\texttt{Pow}(C))^E$, as defined below,
while the encoding $E$ commutes with concept constructors in all other cases. In particular, we let:

$\bullet$  $a_i^E= B_i$, for all $i=1,\ldots,r$;

$\bullet$ 
    $A^E= A$, for all concept names $A$ occurring in $K$;    
 
$\bullet$        
    $(\forall R_i.C)^E= \texttt{Pow}(\neg  U_i  \sqcup  \texttt{Pow}(C^E))$ 
    
$\bullet$     
    $(\texttt{Pow}(C))^E = \texttt{Pow}(  U_1 \sqcup \ldots \sqcup U_k   \sqcup  C^E)$ 

%
%
\noindent
The encoding  of $\forall R_i.C$ is based on the same idea as the set-theoretic encoding of $\forall R_i.C$ in Section \ref{sec:tr_ALC to sets}.
For each $(\forall R_i.C)^E$-element $y$ we require that, for all $y' \in y$ which are in  $U_i$, all the elements $z$ in $ y'$ are $C^E$-elements.
For the power-set,
for each $(\texttt{Pow}(C))^E$-element $y$, we require that all its elements $y' \in y$, which are not $U_1 \sqcup \ldots \sqcup U_k$-elements, are $C^E$-elements.  
We cannot simply define $(\texttt{Pow}(C))^E$ as $\texttt{Pow}(C^E)$, as it is necessary to keep the encodings of  $\forall R_i.C$ and $\texttt{Pow}(C)$ (both based on the set-theoretic power-set) independent of each other. 

Given an $\alc^\Omega$ knowledge base $K$, and a query $F$ (over the language of $K$), we need to define the encoding $K^E$ of $K$, 
and the encoding $F^E$ of the query $F$ in $\al^\Omega$. $K^E$ contains:

-  an inclusion axiom $C^E \sqcap \neg (U_1 \sqcup \ldots \sqcup U_k) \sqsubseteq D^E$, for each $C \sqsubseteq D \in K$;\footnote{Inclusion axioms are only required to hold on domain elements which are not $U_i$-elements.}

-  a membership axiom $C^E \in D^E$ for each  $C\in D$  in $K$; 

-  a membership axiom $a_i^E \in C^E$ for each $C(a_i)$  in $K$; 

-  membership axioms $F_{h,j}^i \in a_h^E$,  $a_j^E \in F_{h,j}^i$ and $F_{h,j}^i \in U_i$  for all $R_i(a_h,a_j)$  in $K$;

-  axioms $G_{C_h,C_j}^i \in C_h^E$,  $C_j^E \in G_{C_h,C_j}^i$ and $G_{C_h,C_j}^i \in U_i$  for all $R_i(C_h,C_j)$  in $K$.\footnote{To translate the assertion  $R_i(a_h,a_j)$, we need an element $u$ of $U_i$ such that  $u$ is an instance of $a_h^E$ and $a_j^E$ is an instance of $u$. We call such an element $F_{h,j}^i$. Similarly for role membership axioms $R_i(C_h,C_j)$.
A more direct encoding of role assertions would  be possible in the presence of nominals (i.e., if nominals were admitted in the languages of $\al^\Omega$ and $\alc^\Omega$)
as, for instance, $R_i(a_h,b_j)$ 
could be equivalently written as $a_h \in \exists R_i.\{a_j\}$.
However, this would require moving to a set theory with singleton operators.}

\noindent
The following additional axioms are also needed  in $K^E$:

 $A_i \sqsubseteq \neg (U_1 \sqcup \ldots \sqcup U_k)$, one for each concept name $A_i$ in $K$;
 
 $B_i \in \neg (U_1 \sqcup \ldots \sqcup U_k)$, one for each individual name $a_i$ in $K$;
 
 $C^E \in \neg (U_1 \sqcup \ldots \sqcup U_k) $, one for each $C$ 
 on the l.h.s  of a membership in $K$ or in the query; 
 
$ \neg (U_1 \sqcup \ldots \sqcup U_k) \sqsubseteq \texttt{Pow}(\neg (U_1 \sqcup \ldots \sqcup U_k)  \sqcup \texttt{Pow}(\neg (U_1 \sqcup \ldots \sqcup U_k) )) $\\
The first three axioms avoid that concept names $A_i$ and  $B_i$, and the concepts $C$ occurring on the l.h.s. of membership axioms, are interpreted as elements of $U_j$, for some $j$.
The last axiom enforces the property that: for each $z  \in \Delta \backslash (U_1 \sqcup \ldots \sqcup U_k)^I$, if $u \in z$, for an instance $u$ of some $U_i$, and $v \in u$, then 
$v \in  \Delta \backslash (U_1 \sqcup \ldots \sqcup U_k)^I$ (we call this axiom $Trans^2( \Delta \backslash (U_1 \sqcup \ldots \sqcup U_k)^I$).


For a query $F$ over the language of the knowledge base $K$, if $F$ is an inclusion $C  \sqsubseteq D$, its translation is $C^E  \sqsubseteq D^E$;  
if $F$ is an assertion $C(a_i)$, its translation is $a_i^E \in C^E$;
if $F$ is a membership axioms $C\in D$, its translation is $C^E \in D^E$.

\begin{example}
To see an example of the encoding above,
let us consider a variant of  knowledge base in Example \ref{exa:Eagle}.
Let $K= ({\cal T} ,{\cal A})$ where:
\\
${\cal T}= \{ \mathit{RedListSpecies \sqsubseteq \texttt{Pow}(CannotHunt), \ Eagle \sqsubseteq \forall hasMother. Eagle,  }$

$\;$ \ \ $ \mathit{RedListSpecies \sqsubseteq \forall hasScientificName. Name } \}$  
\\
${\cal A}=$ $\{\mathit{Eagle(harry), Eagle \in RedListSpecies, \; PolarCreature \sqcap Bear \in RedListSpecies,}$

$\;$ \ \  $\mathit{(PolarCreature \sqcap Bear, Eagle) \in moreEndangered}\} $

\noindent
By the translation above, we obtain the following $\al^\Omega$ knowledge base $K^E$:

$\mathit{RedListSpecies \sqsubseteq \texttt{Pow}(U_{hasMother} \sqcup U_{hasSciName} \sqcup CannotHunt)}$,

$ \mathit{Eagle \sqsubseteq \texttt{Pow}( \neg U_{hasMother}  \sqcup   \texttt{Pow}( Eagle)) }$,

$ \mathit{RedListSpecies \sqsubseteq \texttt{Pow}( \neg U_{hasSciName}  \sqcup  \texttt{Pow}(  Name ))}$,

$\mathit{B_{harry} \in Eagle}$, \
$\mathit{Eagle \in RedListSpecies}$, \
$\mathit{PolarCreature \sqcap Bear \in RedListSpecies}$, \

$\mathit{Eagle \in G_{PB,E}^{mE} }$, \ $\mathit{G_{PB,E}^{mE} \in PolarCreature \sqcap Bear}$,

$ \mathit{Eagle \sqsubseteq \neg (U_{hasMother}  \sqcup  U_{hasSciName} ) }$, \ and the same for the other concept names

$ \mathit{B_{harry} \in \neg (U_{hasMother}  \sqcup  U_{hasSciName} ) }$, 

$ \mathit{PolarCreature \sqcap Bear \in \neg (U_{hasMother}  \sqcup  U_{hasSciName} ) }$, 


\noindent
where $G_{PB,E}^{mE}$ is the concept name specifically introduced for encoding $\mathit{(PolarCreature}$
$\mathit{ \sqcap Bear, Eagle) \in}$ $\mathit{  moreEndangered}$.
The transitivity axiom $Trans^2$ is omitted.
\end{example}
We can prove  the soundness and completeness of the encoding of $\alc^\Omega$ into $\al^\Omega$.

\begin{proposition}[Soundness and Completeness of the encoding of $\alc^\Omega$ in $\al^\Omega$]\label{Prop:encoding}
 \begin{center}
$K \models_{\alc^\Omega} F  \mbox{ if and only if } K^E \models_{\al^\Omega} F^E$
\end{center}
\end{proposition}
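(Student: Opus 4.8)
The plan is to prove both implications by contraposition, turning a model falsifying one side into a model falsifying the other, in the same style as the completeness argument of Proposition~\ref{Prop:Completeness-translation}. Everything reduces to one \emph{correspondence lemma}, established by structural induction on concepts, of the shape ``$x \in C^I$ iff its image satisfies $(C^E)^J$''; its only nontrivial cases are $\forall R_i.C$ and $\texttt{Pow}(C)$, and it is there that the auxiliary concepts $U_1,\dots,U_k$ and the separation and $Trans^2$ axioms added to $K^E$ earn their keep.

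For $K \not\models_{\alc^\Omega} F \Rightarrow K^E \not\models_{\al^\Omega} F^E$ I would start from an $\alc^\Omega$ model $I=\langle \Delta,\cdot^I\rangle$ of $K$ falsifying $F$ and build an $\al^\Omega$ model $J=\langle \Lambda,\cdot^J\rangle$ inside a (not necessarily extensional) model of $\Omega$. As in Proposition~\ref{Prop:Completeness-translation}, I attach to every $d\in\Delta$ a hyperset $M(d)$ whose genuine members are the $M(d')$ for $d'\in d$, and I add to $M(d)$, for each role edge $(d,v)\in R_i^I$, a fresh \emph{witness} $w^i_{d,v}$ with $M(v)\in w^i_{d,v}$ and $w^i_{d,v}\in U_i^J$. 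Two constraints are forced by the $\forall R_i$ case below: each witness must be the exact singleton $w^i_{d,v}=\{M(v)\}$ (no extra element may hide inside it) and must belong to $U_i$ and to no other $U_j$; the resulting clashes between extensionally equal witnesses are absorbed by the freedom of a non-extensional $\Omega$-model. Putting $A^J=\{M(d)\tc d\in A^I\}$, $B_i^J=M(a_i^I)$, and reading the role-assertion and role-membership witnesses as the concept names $F^i_{h,j}$, $G^i_{C_h,C_j}$, the separation axioms $A_i,B_i,C^E\sqsubseteq\neg(U_1\sqcup\dots\sqcup U_k)$ and the $Trans^2$ axiom hold by construction, and the correspondence lemma is proved by induction.

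Conversely, for $K^E\not\models_{\al^\Omega} F^E \Rightarrow K\not\models_{\alc^\Omega} F$ I would take an $\al^\Omega$ model $J$ of $K^E$ falsifying $F^E$ and extract an $\alc^\Omega$ model as a \emph{reduct}. Its domain collapses the genuine elements $\Lambda\setminus(U_1\sqcup\dots\sqcup U_k)^J$ via a map $N$ that retains only the genuine members of each set, while every $U_i$-witness $w\in x$ is re-read as role edges $(N(x),N(z))\in R_i^I$ for the $z\in w$; concept names and individuals are transported by $N$, the same atom-duplication device taking care of individuals whose image has no genuine member. The separation axioms guarantee that $A_i^J$, $B_i^J$ and the interpretations of left-hand sides of memberships are genuine, so that $N$ is defined on them; the collapse makes the recovered domain transitive automatically, while $Trans^2$ is what makes the role reading well defined, forcing the members of a witness inside a genuine set to be genuine domain elements. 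The same correspondence lemma, now $N(x)\in C^I$ iff $x\in(C^E)^J$, is again proved by induction.

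The inductive step is routine for the Boolean connectives. For $\texttt{Pow}(C)$ it is immediate: since $(\texttt{Pow}(C))^E=\texttt{Pow}(U_1\sqcup\dots\sqcup U_k\sqcup C^E)$, the disjunct $U_1\sqcup\dots\sqcup U_k$ swallows exactly the witnesses, so only the genuine members are tested against $C^E$, matching $\mathit{Pow}(C^I)\cap\Delta$. The genuine obstacle is the backward half of $\forall R_i.C$: from $(\forall R_i.C)^E=\texttt{Pow}(\neg U_i\sqcup\texttt{Pow}(C^E))$ one must infer $w\subseteq(C^E)^J$ for a witness $w\in U_i$, knowing only that the \emph{genuine} members of $w$ lie in $(C^E)^J$, since the inner $\texttt{Pow}(C^E)$ carries no $U_i$-absorption and thus checks every member of $w$. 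This is exactly what $Trans^2$ supplies: a $U_i$-witness sitting inside a genuine set has only genuine members, so no non-genuine element can survive in $w$. With the lemma in hand, satisfaction of all TBox inclusions, of the concept and role membership axioms (which reduce to the identities $(C^E)^J=M(C^I)$ and their role analogues, valid because such a $C^I$ is itself an element of $\Delta$), and of the ABox assertions, together with the falsification of $F$ and $F^E$, are read off directly, closing both directions.
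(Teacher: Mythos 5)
Your proposal is correct and follows essentially the same route as the paper's proof: both directions by contraposition, a hyperset collapse in a non-extensional $\Omega$-model in which role edges $(d,v)\in R_i^I$ become singleton witnesses $\{M(v)\}$ placed in $U_i$ (the paper's $u^i_{s,t}$ nodes, with duplication handling extensional clashes), a reduct in the converse direction reading $U_i$-witnesses back as role edges, and one correspondence lemma proved by structural induction whose only delicate case is the backward half of $\forall R_i.C$, where—exactly as in the paper—the $Trans^2$ axiom guarantees that members of a witness inside a genuine element are themselves genuine. The sole point you leave implicit, the appeal to the finite model property to justify solving the (finite) systems of set-theoretic equations in $\HF^{1/2}(\mathbb{A})$, is covered by your reference to the construction of Proposition \ref{Prop:Completeness-translation}.
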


\begin{proof}
($\Leftarrow$) The soundness is proved by contraposition.
Assume that $K \not \models_{\alc^\Omega} F $, then, there is a model $I=(\Delta, \cdot^I)$ of $K$ such that 
$F$ is falsified in $I$.

For the finite model property of $\alc^\Omega$, we can assume without loss of generality that the model $I$ is finite.
To build from $I$ a finite $\al^\Omega$ model $J=(\Delta', \cdot^J)$ of $K^E$ which falsifies $F^E$, 
we define a graph $G=(N , E)$ where: 
$N= \Delta \cup D_1 \cup \ldots \cup D_k$ 
and  $D_i = \{u^i_{s,t} \mid s,t \in \Delta \wedge (s,t) \in R_i^I \}$.
$E$ is defined as follows: 
\begin{center}
$E =  \bigcup_{i=1}^k \{(s, u^i_{s,t}), (u^i_{s,t},t) \mid s,t \in \Delta \wedge (s,t) \in R_i^I\}  \cup \{(s, t) \mid s,t \in \Delta \wedge t \in s\}$
\end{center}
We define an injection $\pi$ from the leaves of $N$, i.e. nodes without any successor, to  $\mathbb{A}$
and, for any given $ d\in N $, we define the following hyperset $ M(d) $: 
\begin{align*}
	M(d) & = \left\{\begin{array}{ll}
						\pi(d) & \mbox{ if } d  \mbox{ is a leaf of } N,\\
						\left\{M(d') \tc (d,d') \in E \right\} & \mbox{ otherwise. } 
					\end{array}\right.
\end{align*} 
The above definition uniquely determines hypersets in $ \HF^{1/2}(\mathbb{A}) $. 

Let $\Lambda=\{ M(d) \mid d \in N\}$, possibly extended  by duplicating M(d)'s to represent extensionally-equal (bisimilar) sets corresponding to pairwise distinct elements in $N$.
As a consequence, as in previous cases,  for $d,d' \in N$, $d=d'$ if and only if $M(d)=M(d')$, i.e., there are distinct sets in $\Lambda$ for pairwise distinct elements of $N$.

Observe that, by definition of $\Lambda$,  if $(s,t) \in R_i^I$, for $s,t \in \Delta$, then 
there is some $d \in D_i$, such that $M(d) \in M(s)$ and $M(t) \in M(d)$ (and, in particular, $d= u_{s,t}^i$); and vice-versa.

We define $ J= \langle \Delta', \cdot^J\rangle $ as follows:
\begin{quote}
	\ \ \ -  $\Delta'=\Lambda $;   
	  
	\ \ \ -  $A^{J}=\{ M(d) \mid d \in A^I\} $  for all $A \in N_C$,  in the language of $\alc^\Omega$;

	\ \ \ -  $B_i^J = M(a_i^I)= \pi(a_i^I)$, $i=1, \ldots, r$;
	         
	\ \ \ - $U_i^J= \{ M(u_{s,t}^i) \mid s,t \in \Delta \mbox{ and } (s,t) \in R_i^I\}$; 

	\ \ \ - $(F_{h,j}^i)^J = M(u_{a_h^I,a_j^I}^i) $;
	
	\ \ \ - $(G_{C_h,C_j}^i)^J = M(u_{s,t}^i)$, for  $s=C_h^I$ and $t= C_j^I$.
\end{quote}
\noindent
By construction,  $\Delta'$ is transitive set in a model $\emme$ of $\Omega$.
Notice that  $B_i^J = M(a_i^I) \in \mathbb{A}$, and hence $B_i^J$ has no elements.
Notice also that, in the definition of $(G_{C_h,C_j}^i)^J$, $s$ and $t$ are elements of $\Delta$ and $(s,t) \in R_i^I$, so that $u_{s,t}^i \in E$.
In fact, $s=C_h^I$ and $t= C_j^I$ and $R_i(C_h, C_j)$ is in ${\cal A}$.
Therefore, as $I$ satisfies  the ABox ${\cal A}$, $(C_h^I,C_j^I) \in R_i^I \subseteq \Delta \times \Delta$,
and $C_h^I,C_j^I \in \Delta$.
In the following, we let $M(D_i) = \{ M(u_{s,t}^i) \mid s,t \in \Delta \mbox{ and } (s,t) \in R_i^I\} $.

It can be shown by induction on the structural complexity of concepts, that, for all $d\in \Delta$,  for all concepts $C$ occurring in $K$ (or $F$):
\begin{align*}
d \in C^I \mbox{ if and only if } M(d) \in (C^E)^J, 
\end{align*}
which can be used to prove that $J$ is a model of $K^E$ that falsifies $F^E$, 
so that $K^E \not \models_{ \al^\Omega} F^E$.

\medskip

($\Rightarrow$) (Sketch) By contraposition,
assume that $K^E \not \models_{\al^\Omega} F^E $, then, there is an $\al^\Omega$ model $J=(\Delta, \cdot^J)$ of $K^E$ such that 
$F^E$ is falsified in $J$.

For the finite model property of $\al^\Omega$ (which is a fragment of $\alc^\Omega$), we can assume without loss of generality that the model $J$ is finite.
We build from $J$ an $\alc^\Omega$ model $I=(\Delta', \cdot^I)$ of $K$ which falsifies $F$, 
defining $\Delta'$ as a transitive set
in the universe $\HF^{1/2}(\mathbb{A})$ consisting of all the hereditarily finite rational hypersets
built from atoms in $\mathbb{A}=\{{\bf a_0}, {\bf a_1},\ldots  \}$.

We start from the graph $G=(N , E)$, with nodes $N= \Delta \backslash (U_1^J \cup \ldots \cup U_k^J )$, whose arcs are defined as follows:
$E=   \{ (d_1, d_2) \mid   d_1, d_2  \in N \wedge  d_2 \in d_1\}$.
$ G$ is finite. 
Observe that, for each $a_i$ in $K$, $B_i^J \in N$, by axiom $B_i \in \neg (U_1 \sqcup \ldots \sqcup U_k)$.
Similarly, for each $A_i$ in $K$, $A_i^J \subseteq N$.

We define an injection $\pi$ from the leaves of $N$ (i.e. nodes without any successor) plus the elements $B_1^I, \ldots, B_r^I\in N$ 
to  $\mathbb{A}$.
For any given $ d\in N $, we define the following hyperset $ M(d) $:  
\begin{align}\label{def_M(d)}
	M(d) & = \left\{\begin{array}{ll}
						\pi(d) & \mbox{ if } d  \mbox{ is a leaf of } N \mbox{ or } d=B_j^J \mbox{ for some $j$ },\\
						\left\{M(d') \tc (d,d') \in E \right\} & \mbox{ otherwise. } 
					\end{array}\right.
\end{align} 
The above definition uniquely determines hypersets in $ \HF^{1/2}(\mathbb{A}) $. 

$\Delta'=\{M(d) \tc d \in N\} $, possibly extended  by duplicating M(d)'s to represent extensionally-equal (bisimilar) sets corresponding to pairwise distinct elements in $N$.
We  complete the definition of $ I= \langle \Delta', \cdot^I\rangle $ as follows:

	
	-  $A^{I}=\{M(d)  \tc M(d) \in \Delta'  \wedge d \in A^J\}$,  for all $A \in N_C$;
	
	-  $R_i^{I}= \{(M(d),M(d')) \tc M(d), M(d') \in \Delta' \wedge \exists u \in U_i^J ( u \in d \wedge d' \in u) \}$, 
	
	   $ \mbox{  \ } $ for all roles $R_i$ occurring in $K$;  $\mbox{ }$ $R_i^{I}=\emptyset$ for all other roles $R\in N_R$;
		
	 - $a_i^I= M(B_i^J)= \pi(B_i^J)$  for all  named individuals  $a_i$ occurring in $K$;
	 
	       $ \mbox{  \ } $   $a^{I}=M(B_1^J)$ for all other $a \in N_I$.
	         	          
\noindent
By construction,  $\Delta'$ is a transitive set in a model $\emme$ of $\Omega$.
As $A^J \subseteq \Delta \backslash (U_1^J \cup \ldots \cup U_k^J )$, $A_i^I \subseteq \Delta'$.
To complete the proof it can be shown that, for all $M(d) \in \Delta'$, and $C$ in $K$ (or in $F$):
\begin{align*}
 M(d) \in C^I   \mbox{ if and only if }    d \in (C^E)^J 
\end{align*}
which can be used to prove that $J$ is a model of $K$ that falsifies $F$. 
\hfill $\Box$
\end{proof}

Combining the above encoding  and the set-theoretic translation for $\al^\Omega$ of Section \ref{sec:tr_ALC^Omega-R to sets},
we obtain a set-theoretic translation for $\alc^\Omega$.

Let $R_1,\ldots,R_k$  and  $A_1, \ldots, A_n$ be, respectively, the roles and  the concept names occurring in the knowledge base  $K= ({\cal T}, {\cal A})$ (or in the query).
Given a concept $C$ of $\alc^\Omega$, built from the concept, role  and individual names in $K$, 
its set-theoretic translation $(C^E)^S$ is a set-theoretic term $C^*(x,y_1,\ldots, y_k, x_1,$ $\ldots, x_{n+m})$, 
where we let $U_i^S=y_i$, and we let the variables $x_{n+1}, \ldots, x_{n+m}$ to be the set-theoretic translation of the additional concept names ($B_j$, $F_{h,j}^i$  and 
$G_{C_h,C_j}^i$) introduced to encode assertions. 
$C^*$ is defined inductively as follows:
\medskip

 $\;$ \ \ \ \ \ \   $\top^*=x$;   \ \ \ \ \ \ \ \ \ \ \ \ \ \ \ \ \ \ \ \ \ \ \ \ \ \ \ \ \ \ \  \ \ \ \ \  \ \ \ \ \   \ \ \ \ \  \ \ \ \ \ \ \   \ \ \ \ \   $\bot^*=\emptyset$;
   
$\;$ \ \ \ \ \ \    $A_i^*= x_i$ , for $A_i $  in $K$;     \ \ \ \ \ \ \  \ \ \ \ \ \   \ \ \ \ \  \ \ \ \ \  \ \ \ \ \ \ \  \ \ \ \ \   $(\neg C)^*=x \backslash C^S$;
   
$\;$ \ \ \ \ \ \    $(C \sqcap D)^*=C^S \cap D^S$;  \ \ \ \ \ \ \  \ \ \ \ \   \ \ \ \ \   \ \ \ \ \  \ \ \ \ \ \ \  \ \ \ \ \   $(C \sqcup D)^*=C^S \cup D^S$;
  
$\;$ \ \ \ \ \ \   $(\forall R_i.C)^*= Pow((x  \backslash y_i ) \cup Pow(C^*))$; 
       \ \ \ \ \ \ \  \ \  $ \texttt{Pow}(C)^*= Pow( y_1 \cup \ldots \cup y_k  \cup C^*)$.
 
\medskip

The translation of an $\alc^\Omega$ knowledge base $K$ can be defined accordingly, exploiting the encoding $E$ and $S$ of the KB. %
In particular,  let $\mathit{TBox}^*_{\cal T}$ and $\mathit{ABox}^*_{\cal A}$ be the set-theoretic translation of ${\cal T}$ and ${\cal A}$, respectively.
Observe that $Trans^2((x \backslash ( y_1 \cup \ldots \cup y_k)))$ is in $ \mathit{TBox}^*_{\cal T}$.
Also, for each $C_1 \sqsubseteq C_2  \in {\cal T}$, $ C_1^* \cap (x \backslash ( y_1 \cup \ldots \cup y_k)) \subseteq C_2^* $ is in $ \mathit{TBox}^*_{\cal T}$.

 $\mathit{ABox}^*_{\cal A}$ contains $C_1^* \in C_2^* \cap  (x  \backslash ( y_1 \cup \ldots \cup y_k))$, for each $C_1 \in C_2  \in {\cal T}$
(from axioms $C_1^E \in C_2^E$ and  $C^E \in \neg (U_1 \sqcup \ldots \sqcup U_k) $ in $K^E$ )
 and, in addition, 
$B_i^*  \in  (x  \backslash ( y_1 \cup \ldots \cup y_k))$, for each individual name $a_i$ occurring in ${\cal A}$.

A set-theoretic translation for subsumption in $\alc^\Omega$ follows from the encoding $E$ and the set-theoretic translation $S$ for $\al^\Omega$ in Section \ref{sec:tr_ALC^Omega-R to sets} (see Proposition \ref{Prop:encoding}):

\begin{corollary}\label{corollary:tr_ALC^Omega}
$K \models_{\alc^\Omega} C \sqsubseteq D  \text{ if and only if }$ 

$\Omega \models  \forall x, \forall y_1, \ldots, \forall y_k (  Trans(x)  
\rightarrow $

$\;$ \ \ \ \ \ \ \ \ \ \ \ \ \ \ \ \  
$ \forall x_1, \ldots, \forall x_{n+m} ( \bigwedge \mathit{ABox}^*_{\cal A} \wedge   \bigwedge \mathit{TBox}^*_{\cal T} $
$\rightarrow C^* \cap  (x \backslash ( y_1 \cup \ldots \cup y_k)) \subseteq D^*))$
\end{corollary}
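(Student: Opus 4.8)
The plan is to obtain the statement as a straightforward composition of the two soundness-and-completeness results already established: the encoding of $\alc^\Omega$ into $\al^\Omega$ (Proposition \ref{Prop:encoding}) and the set-theoretic translation of $\al^\Omega$ into $\Omega$ (Proposition \ref{set-th-translation}). First I would instantiate Proposition \ref{Prop:encoding} with the query $F$ taken to be the subsumption $C \sqsubseteq D$, obtaining
\[
K \models_{\alc^\Omega} C \sqsubseteq D \quad\text{iff}\quad K^E \models_{\al^\Omega} (C \sqsubseteq D)^E .
\]
Then I would feed the $\al^\Omega$ knowledge base $K^E$ and the encoded subsumption into Proposition \ref{set-th-translation}, which rewrites the right-hand side as $\Omega$-derivability of a universally quantified implication whose antecedent is $Trans(x) \wedge \bigwedge \mathit{ABox}_{{\cal A}^E} \wedge \bigwedge \mathit{TBox}_{{\cal T}^E}$ and whose consequent is the inclusion of the set-theoretic translations of the two sides. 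The whole content of the corollary is then that this composite formula coincides, term by term, with the displayed one.

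The second step is the bookkeeping that makes the two formulas literally equal. I would prove by structural induction that $(C^E)^S = C^*$ for every $\alc^\Omega$ concept $C$, adopting the convention $U_i^S = y_i$ for the fresh role-concepts. The Boolean and atomic cases are immediate since both $E$ and $S$ commute with the corresponding constructors; the two informative cases are exactly the ones isolated in the definition of $C^*$. For universal restriction,
\[
(\forall R_i.C)^E = \texttt{Pow}(\neg U_i \sqcup \texttt{Pow}(C^E)),
\]
so applying $S$ gives $Pow((x \backslash y_i) \cup Pow((C^E)^S))$, which by the induction hypothesis equals $Pow((x \backslash y_i) \cup Pow(C^*)) = (\forall R_i.C)^*$; and for the power-set,
\[
(\texttt{Pow}(C))^E = \texttt{Pow}(U_1 \sqcup \ldots \sqcup U_k \sqcup C^E)
\]
yields $Pow(y_1 \cup \ldots \cup y_k \cup C^*) = \texttt{Pow}(C)^*$. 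In the same way I would check that the translation of $\mathit{TBox}_{{\cal T}^E}$ reproduces $\mathit{TBox}^*_{\cal T}$: each encoded inclusion $C_1^E \sqcap \neg(U_1 \sqcup \ldots \sqcup U_k) \sqsubseteq C_2^E$ translates to $C_1^* \cap (x \backslash (y_1 \cup \ldots \cup y_k)) \subseteq C_2^*$ (using $(\neg(U_1 \sqcup \ldots \sqcup U_k))^S = x \backslash (y_1 \cup \ldots \cup y_k)$), and the extra $\al^\Omega$ axiom becomes the required $Trans^2(x \backslash (y_1 \cup \ldots \cup y_k))$; the membership axioms of $K^E$ produce $\mathit{ABox}^*_{\cal A}$ in the analogous way.

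The point deserving the most care---and the one I expect to be the main obstacle---is the domain on which the final inclusion is asserted. The $\al^\Omega$ translation of Proposition \ref{set-th-translation} puts the variable $x$ in the role of the full $\al^\Omega$ domain $\Delta^J$, but under the encoding the genuine $\alc^\Omega$ domain corresponds only to the non-$U_i$ part of it, namely $\Delta^J \backslash (U_1^J \cup \ldots \cup U_k^J)$, i.e. $x \backslash (y_1 \cup \ldots \cup y_k)$. Consequently the subsumption $C \sqsubseteq D$ must be tested over this restricted set, which is why the consequent reads $C^* \cap (x \backslash (y_1 \cup \ldots \cup y_k)) \subseteq D^*$ rather than over all of $x$. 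Concretely, the encoded query must carry the guard $\neg(U_1 \sqcup \ldots \sqcup U_k)$ so that falsification is forced to occur at a set-element that survives the collapse back to an $\alc^\Omega$ model; this is precisely the role the restriction plays in the $(\Rightarrow)$ direction of Proposition \ref{Prop:encoding}, where the correspondence between $I$ and $J$ is available only for nodes in $N = \Delta \backslash (U_1^J \cup \ldots \cup U_k^J)$. Once this guard is accounted for, one has $(C^E \sqcap \neg(U_1 \sqcup \ldots \sqcup U_k))^S \cap x = C^* \cap (x \backslash (y_1 \cup \ldots \cup y_k))$, the two formulas match exactly, and the corollary follows from the chain of equivalences.
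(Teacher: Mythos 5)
Your proposal is correct and is essentially the paper's own derivation: the corollary is stated there as an immediate consequence of composing Proposition \ref{Prop:encoding} with Proposition \ref{set-th-translation}, together with exactly the bookkeeping you carry out, namely the identity $(C^E)^S = C^*$ under the convention $U_i^S = y_i$ and the matching of the translated TBox/ABox axioms (including $Trans^2(x \backslash (y_1 \cup \ldots \cup y_k))$) with $\mathit{TBox}^*_{\cal T}$ and $\mathit{ABox}^*_{\cal A}$. Your careful point that the encoded subsumption query must carry the guard $\neg(U_1 \sqcup \ldots \sqcup U_k)$, so that the final inclusion is relativized to $x \backslash (y_1 \cup \ldots \cup y_k)$, is also the paper's intended reading, as confirmed by its remark following the corollary that the relativized inclusions occur in the TBox translation ``and in the query''.
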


\noindent 
Rewriting $\mathit{TBox}^*_{\cal T} $ as $Trans^2((x \backslash ( y_1 \cup \ldots \cup y_k))) \wedge \mathit{TBox}^{*-}_{\cal T} $  and observing that we can factorise out $Trans^2((x \backslash ( y_1 \cup \ldots \cup y_k)))$, we can put: 

$\Omega \models  \forall x, \forall y_1, \ldots, \forall y_k (  Trans(x)  \wedge Trans^2((x \backslash ( y_1 \cup \ldots \cup y_k))) 
\rightarrow $

$\;$ \ \ \ \ \ \ \ \ \ \ \ \ \ \ \ \  
$ \forall x_1, \ldots, \forall x_{n+m} ( \bigwedge \mathit{ABox}^{*}_{\cal A} \wedge   \bigwedge \mathit{TBox}^{*-}_{\cal T} $
$\rightarrow C^* \cap  (x \backslash ( y_1 \cup \ldots \cup y_k)) \subseteq D^*))$

\noindent
which makes it more evident that this set-theoretic translation of $\alc^\Omega$ is a generalization of the translations given in Section  \ref{sec:tr_ALC to sets} and in  Section \ref{sec:tr_ALC^Omega-R to sets}.

When the power-set construct does not occur in the KB, and the language is restricted to the language of $\alc$,  it corresponds to the set-theoretic translation of $\alc$ in Section \ref{sec:tr_ALC to sets}.
Here, the set $x \backslash ( y_1 \cup \ldots \cup y_k)$ plays the role of $x$ in $\alc$ translation in Proposition  \ref{set-th-translation-alc}.  
Condition $Trans^2(x \backslash ( y_1 \cup \ldots \cup y_k))$ correspondes to condition $Trans^2(x)$. The inclusions $C_1^* \cap  (x \backslash ( y_1 \cup \ldots \cup y_k)) \subseteq C_2^*))$ in $ TBox^*_{\cal T}$ (and in the query) correspond to the inclusions
$C_1^* \cap  x \subseteq C_2^*)) \in TBox_{\cal T}$ (and in the query).
Condition $Trans(x)$ is useless (but harmless) in this case.

When there are no roles, no assertions and no universal and existential restrictions, the set-theoretic variables  $y_1, \ldots, y_n$  are useless.
Let us consider the case when,  in the translation above, the interpretation of $y_1, \ldots, y_n$ is the empty set.
In such a case, $x \backslash ( y_1 \cup \ldots \cup y_k) = x$ and the the set-theoretic encoding above collapses to the set-theoretic encoding of $\al^\Omega$ in Proposition \ref{set-th-translation}. In particular, condition $Trans^2(x \backslash ( y_1 \cup \ldots \cup y_k))$ becomes $Trans^2(x)$, which trivially follows from $Trans(x)$.

The correspondence above provides a set-theoretic translation for $\alc$, which is slightly different w.r.t. the translation in Section \ref{sec:tr_ALC to sets}, directly obtained from the application of the result for  normal polymodal logics in \cite{DAgostino1995} and from Schild's characterization of $\alc$ as a polymodal logic \cite{Schild91}.
The reason is that, in the encoding $E$ above, the concept names $U_i$, playing the role of the sets $y_i$ in the set-theoretic translation in Section \ref{sec:tr_ALC to sets}, are in the language of $\al^\Omega$ and hence are interpreted in $\Delta$ as all other concept names.

\section{Discussion} 
We  consider here some consequences of the above results in view of possible extensions of the correspondence  we introduced
 to deal with further DL constructs or set-theoretic operators.

First of all, observe that the complementary problem to subsumption 
 in $\alc^\Omega$ is satisfiability of a concept  $C$ 
with respect to a general knowledge base $K$  (see Section \ref{sec:ALC}). 
By the result in Corollary \ref{corollary:tr_ALC^Omega}, this problem  corresponds to the satisfiability of a formula in the existential fragment of $\Omega$,
i.e. the satisfiability of a formula of $\Omega$ of the form $\exists x, y_1, \ldots, y_k,  x_1, \ldots, x_{n+m}\; \phi$, where $\phi$ is unquantified in the basic language of $ \Omega $.
In fact, we can reformulate Corollary \ref{corollary:tr_ALC^Omega} as follows:

{\em  $K \not \models_{\alc^\Omega} C \sqsubseteq D $ 
($\text{ i.e., }$ $C \sqcap \neg D$ is satisfiable in $\alc^\Omega$ 
with respect to  $K$) 

$ \text{ iff }$ there is a model of $\Omega$ satisfying the formula:}

$\;$ \ \ \ \ $ \exists x, \exists y_1, \ldots, \exists y_k (  Trans(x)  
\wedge $

$\;$ \ \ \ \ \ \ \ \ \ \ 
$ \exists x_1, \ldots, \exists x_{n+m} ( \bigwedge \mathit{ABox}^*_{\cal A} \wedge   \bigwedge \mathit{TBox}^*_{\cal T} $
$\wedge \neg (C^* \cap  (x \backslash ( y_1 \cup \ldots \cup y_k)) \subseteq D^*)))$

\noindent
where the quantifiers $ \exists x_1, \ldots, \exists x_{n+m} $ can then be moved in front of  the first parenthesis, thus giving a formula in the existential fragment of (the language of) $\Omega$.

The problem of deciding the satisfiability of  existential formulae of the theory $\Omega$ (without extensionality and well-foundedness)
has not been studied so far, and our decidability result for subsumption in $\alc^\Omega$ comes from the translation of $\alc^\Omega$ into the description logic $\alcio$ (Propositions \ref{Prop:Soundness-translation} and \ref{Prop:Completeness-translation} in Section \ref{sec:ALC^Omega to ALCIO}). 
%
However, the satisfiability of  existential formulae with power-set relative to a set theory assuming extensionality and well-foundedness, has been proved to be decidable  by Cantone et al. in  \cite{CanFerSch85}.
As a consequence, 
the {\em same} set-theoretic translation considered above brings us naturally to 
a well-founded and extensional variant of  $\alc^\Omega$.
Let us elaborate on this point.

Start from the class of formulae ``Multilevel syllogistic extended by the powerset operator" ({\em MLS+Pow} for short) whose decidability has been studied  in \cite{CanFerSch85} under the assumptions of extensionality and well-foundedness of the underlying set theory. {\em MLS+Pow} consists of purely existential formulas with a matrix in the language of $\Omega$. Introduce (as we have done in Section \ref{sec:tr_ALC^Omega-R to sets}   with $\al^\Omega$)  a very simple description logic, $\al_{we}^\Omega$, which is (basically) a fragment of {\em MLS+Pow} with extensionality and well-foundedness.
$\al_{we}^\Omega$ has the same syntax of $\al^\Omega$ and,
from the semantic point of view,  models of $\al_{we}^\Omega$ can be defined  as  models of $\al^\Omega$, with the additional requirements 
of extensionality and well-foundedness.  The  decidability proof in \cite{CanFerSch85} also provides a finite-model result for $\al_{we}^\Omega$.

A description logic  $\alc_{we}^\Omega$, extending $\alc$ with well-founded/extensional sets and with the power-set operator, under the assumption of extensionality,
can then be defined and translated into its fragment $\al_{we}^\Omega$, along the lines of our translation of  $\alc^\Omega$ into $\al^\Omega$.
Indeed, the encoding in Section  \ref{sec:encoding} of $\alc$ roles by means of the membership operator (using the power-set to capture the universal restriction)
is still possible in the case of well-founded sets, as $\alc$ has the ``tree-model property" (as the polymodal logic $K_m$ \cite{BlackburnRV01}).  
From the semantic point of view, 
by well-foundednness, all circular membership relationships among concepts are ruled out;
by extensionality, any two concepts in a model of $\alc_{we}^\Omega$ having the same elements, and such that the same domain elements are accessible through the relations associated with roles $R_i$, have to be considered equal. 
For instance, if we have a knowledge base containing the axioms $Eagle\equiv Aquila$ and $Eagle \in RedListSpecies$, by extensionality we can conclude that $Aquila \in RedListSpecies$. Instead, in $\alc^\Omega$ (without extensionality) the concepts  $Eagle$ and $Aquila$ may be interpreted as different sets  in the models of the knowledge base (i.e., $Eagle^I \neq Aquila^I$), although they have the same elements. 
In this case,  as we have seen in Section \ref{sec:DL+Omega}, $Aquila \in RedListSpecies$ does not follow from the knowledge base.

Further decidable existential fragments of set-theory have been studied, such as the fragment with power-set and singleton operators  \cite{Can91,CanOmoUrs02},
which opens the way to a set-theoretic definition of other decidable extensions of $\alc$ with well-founded sets. 
A natural question arising is whether these description logics with power-set, well-founded sets and extensionality, can be translated as well into standard DLs
and, in particular, whether the extensionality and well-foundedness assumptions can be captured in standard DLs. We leave this investigation 
for future work.

Concerning the expressivity of $\alc^\Omega$ with respect to standard DLs, the fact that it has the finite model property already makes it evident that it cannot capture combinations of constructs of expressive DLs which do not satisfy this property.
On the one hand, one can consider the problem of identifying a description logic (if any) having the same expressivity as $\alc^\Omega$ and, on the other hand, one can face the problem of extending other description logics (including expressive ones) with power-set  and concept membership.

For the first point, 
it is not likely that the logic $\alcio$ can be translated into $\alc^\Omega$. Indeed, while inverse roles can be encoded set-theoretically as in Section \ref{sec:tr_ALC to sets}, their 
encoding does not seem to be easily turned into a prenex universal formula so that, with their addition,  concept satisfiability (w.r.t. a knowledge base) seems to fall outside the existential fragment of $\Omega$.

As regards extending expressive DLs with power-set and concept membership, we observe that 
their encoding using nominals and an inverse role of $\alcio$, can be exploited in any DL extending $\alc$ and including the above mentioned constructs. The proof of soundness of the translation  of $\alc^\Omega$ in $\alcio$  (Proposition \ref{Prop:Soundness-translation}) indeed generalizes to other DLs, when extended with the power-set construct and concept membership in a similar way. Instead, alternative techniques would be needed for proving completeness of the translation 
for such logics, as the proof of Proposition \ref{Prop:Completeness-translation} exploits the finite model property of $\alc^\Omega$ which, in general, is not a property of expressive DLs.

There are other useful constructs which could be borrowed from set-theory and added to $\alc^\Omega$.
One of them is the {\em unary union}, $\bigcup  C$, namely, the union of all the subsets of concept $C$. This construct could be introduced in $\alc^\Omega$
with the semantic condition $(\bigcup  C )^I =\{x \in y \mid y \in C^I \}$.
Let us consider again Example \ref{exa:readingGroup}.
\begin{example} 
We may want to introduce an association $\mathit{ACME}$ and state that all the members of   $\mathit{ACME}$
participate to the $\mathit{SummerMeeting}$ (in some group).
Notice that, using unary union, we can represent the set of all the participants to the $\mathit{SummerMeeting}$ as $\bigcup \mathit{SummerMeeting}$,
We can state that all the members of $\mathit{ACME}$ participate to the summer meeting by the inclusion: 
$$\mathit{ACME \sqsubseteq \bigcup SummerMeeting}$$
and that  the members of $\mathit{ACME}$ are all and the only participants to the summer meeting by $\mathit{ACME \equiv \bigcup SummerMeeting}$.
\end{example}
$\alc^\Omega$ can be easily extended with the construct of unary union $ \bigcup C$.
On the one hand, this construct has a natural translation into $\alcio$, by introducing, for each $\bigcup C$ occurring in the knowledge base $K$, a new concept name $ U_C$ together with the axiom $ U_C \equiv \exists e^-. C$, and replacing all the occurrences of $ \bigcup C$ in $K$ with $U_C$.
%
While the proof of soundness in Proposition \ref{Prop:Soundness-translation} extends to this case,
alternative techniques would be needed for proving completeness of the translation for the case with unary union, as the proof of Proposition \ref{Prop:Completeness-translation} exploits the finite model property of $\mathcal{ALC}^{\Omega}$ which we do not expect to hold in this case. 
%
%
%
Adapting the encoding of $\alc^\Omega$ into $\al^\Omega$ is not immediate and  it might require a change in the encoding of the power-set concept as well. We leave the study of this encoding for future investigation.

As we will see in the next section, a weakness of $\alc^\Omega$ with respect to other extensions of description logics dealing with metamodeling, is that 
$\alc^\Omega$ does not allow roles as instances of concepts.
For instance, one could want to define a concept $\mathit{Relatives}$ including the roles $\mathit{hasParent, hasCusin, hasSibling,}$ etc. and,
in the formalisms admitting roles as elements of concepts, such as those in \cite{Motik05,Kubincova2016}, one can indeed state that $\mathit{hasParent \in Relatives}$, $\mathit{hasCusin \in Relatives}$, $\mathit{hasSibling \in Relatives}$.
This is not possible in $\alc^\Omega$. 
The encoding of $\alc^\Omega$ into $\al^\Omega$ in Section  \ref{sec:encoding}, that associates
 a concept $U_i$ with each role $R_i$, might suggest a possible translation of membership axioms $R_i \in C$   into $\al^\Omega$ as $U_i \in C$.
Indeed, each element  $u \in U_i$ represents a set of pairs $(y, z)$ of domain elements such that  $z \in u \in y$ and, therefore, $(y, z) \in R_i$.
The feasibility of such 
an extension and the study of a possible translation of such membership relations into standard description logics are left for future work.

\section{Related work} 

The power-set construct allows to capture in a very natural way the interactions between concepts and metaconcepts, adding to the language of $\alc$ the expressivity of metamodelling.  
The issue of metamodelling has been analysed by Motik \cite{Motik05}, who proved that metamodelling in $\alc$-Full is already undecidable
due to the free mixing of logical and metalogical symbols. Two decidable semantics, a contextual $\pi$ semantics 
and a Hilog $\nu$-semantics, are introduced in \cite{Motik05} for a language extending $\shoiq$ with metamodelling, where concept names, role names and individual names are not disjoint.  The possibility of using the same name in different contexts is introduced in OWL 1.1  and then in OWL 2 through {\em punning}\footnote{https://www.w3.org/2007/OWL/wiki/Punning}. 
As a difference, in this paper, we consider concept names, role names and individual names to be disjoint,  we allow concepts (and not only concept names) to be instances of other concepts, by membership axioms, while we do not allow role names as instances.

As in \cite{Motik05}, DeGiacomo et al. \cite{DeGiacomo2011}  and Homola et al. \cite{HomolaDL14} employ an Hilog-style semantics
to define $\mathit{Hi(\shiq)}$ and $\mathit{{\cal TH}(\sroiq)}$, respectively.
While \cite{Motik05} and \cite{DeGiacomo2011} define untyped higher-order languages which, as $\alc^\Omega$, allow a concept to be an instance of itself,
 \cite{HomolaDL14} defines a typed higher-order extension of $\sroiq$ allowing for a hierarchy of concepts,
where concept names of order $t$ can only occur as instances of concepts of order $t+1$.
In $\mathit{{\cal TH}(\sroiq)}$ 
 \cite{HomolaDL14} 
there is a strict separation between concepts  and roles (as in $\alc^\Omega$) and decidability is proved by a polynomial 
first-order reduction into $\sroiq$, which generalizes the reduction in \cite{GlimmISWC2010} to an arbitrary number of orders. 
The translation in \cite{HomolaDL14}  introduces  axioms $\mathit{A' \equiv \exists instanceOf.\{c_{A'}\}}$, for each concept name $A'$,
axioms which are quite similar to our axiom (\ref{axiom:e_c}), that we need for the concepts $C$ occurring in the knowledge base on the left hand side of membership axioms. 

In $\mathit{Hi(\shiq)}$ \cite{DeGiacomo2011}, complex concept and role expressions can occur as instances of other concepts as in   $\alc^\Omega$. 
A polynomial translation of $\mathit{Hi(\shiq)}$ into $\shiq$  is defined and a study of the complexity of higher-order query answering is provided.

Kubincova et al. in \cite{KubincovaDL15} propose a Hylog-style semantics, dropping the ordering requirement in \cite{HomolaDL14} and allowing the $\mathit{instanceOf}$ role, 
with a fixed interpretation, to be used in axioms as any other role.  
The interpretation of role $\mathit{instanceOf}$ does not correspond exactly to the interpretation of  $e^-$ in our translation, 
as we do not introduce  axiom  (\ref{axiom:e_c}) for all the concept names in $N_C$, while we introduce it for all the
concepts occurring as instances in some membership axiom. 
In \cite{Kubincova2016} Kubincova et al. define the description logic  ${\cal HIR(\sroiq)}$,  an extension of $\sroiq$ with an HiLog-style semantics, which maintains basic separation between individuals, concepts, and roles, but allows for meta concepts and meta roles which are promiscuous (they can classify/relate any entities). 
The logic features a fixedly interpreted instanceOf role, modeling the instantiation relation.

Pan et al. in \cite{Pan2005} and Motz et al. in  \cite{Motz2015}  define extensions of OWL DL and of $\shiq$ (respectively), 
based on semantics interpreting concepts as well-founded sets.  
In particular, \cite{Motz2015} adds to $\shiq$ meta-modelling axioms equating individuals to concepts, without requiring that 
the instances of a concept need to stay in the same layer, and develop a  tableau algorithm as an extension of the one for $\shiq$. 

In  \cite{Gu2016} Gu introduces the language Hi(Horn-SROIQ),  an extension of Horn-SROIQ which allows classes and roles to be used as individuals
based on the $\nu$-semantics \cite{Motik05}.
$\nu$-satisfiability and conjunctive query answering are shown to be reducible to the corresponding problems in Horn-SROIQ.

Badea in \cite{Badea1997} first suggested a way of representing the power-set 
in a reified ${\cal ALCO}_\in$, using the universal restriction and two roles $\in$ and  $\ni$. 
 \cite{Badea1997} does not consider an higher-order semantics, 
but interprets ``quantified concept variables as ranging over (explicitly given) reified individuals",
and develops a calculus for checking consistency in reified ${\cal ALCO}_\in$.
As a difference,  here we show that a semantics quantifying over a transitive set in the universe of an $\Omega$-model can be mapped to standard DLs.
\normalcolor

A set-theoretic approach in DLs has been adopted by Cantone et al. in \cite{CantoneAS17} 
for determining the decidability of  higher order conjunctive  query answering in the description logic ${\cal DL}^{4,\times}_D$ 
(where concept and role variables may occur in queries),
as well as for developing a tableau based procedure 
for dealing with several well-known ABox reasoning tasks.

\section{Conclusions} 

In this paper we have shown that the similarities between Description Logics and Set Theory can be exploited to introduce in DLs  the new power-set construct and to allow for (possibly circular) membership relationships among arbitrary concepts.
We started from the description logic $\alc^\Omega$---combining $\alc$ with the set theory $\Omega$---whose interpretation domains are fragments of the domains of $\Omega$-models. 
$\alc^\Omega$ allows membership axioms among concepts as well as the power-set construct 
which, apart from \cite{Badea1997}, has not been considered for description logics before.
We show that an $\alc^\Omega$ knowledge base can be  polynomially translated into an $\alcio$ knowledge base,
providing, besides decidability, an \textsc{ExpTime} upper bound for satisfiability in $\alc^\Omega$.
We also develop a set-theoretic translation for the description logic $\alc^\Omega$  
into the set theory $\Omega$ exploiting a technique, originally proposed in \cite{DAgostino1995}, 
for translating normal modal and polymodal logics into $\Omega$.
The translation has been defined step by step, first defining a set-theoretic translation for $\alc$ with empty ABox, then for $\al^\Omega$,
the fragment of $\alc^\Omega$ without role names and individual names and, finally, providing an encoding of $\alc^\Omega$ into $\al^\Omega$.
The paper extends the preliminary results in \cite{ICTCS_2018}  and \cite{Jelia_2019}, which do not consider a set-theoretic encoding of role assertions and role membership axioms, and  exploit a slightly stronger semantics.

The set-theoretic translation, on the one hand, clarifies the nature of the power-set concept  (which indeed corresponds to the set-theoretic power-set,
provided the valuation of inclusions is restricted to the set corresponding to the domain $\Delta$) 
and, on the other hand, shows that the fragment of $\al^\Omega$ without roles and individual names 
is as expressive as  $\alc^\Omega$. 
The correspondence among fragments of  set theory and description logics may open to the possibility of transferring proof methods or decidability results across the two formalisms.

The set-theoretic translation of $\alc^\Omega$ can be extended to constructs of more expressive DLs, and this approach suggests a way to incorporate the power-set construct in more expressive DLs. As the proof techniques used in this paper exploit the finite model property of $\alc^\Omega$,
alternative techniques will be needed to deal with more expressive DLs. 
 Other possible directions of  future investigation 
are, as mentioned above, the study of variants of $\alc^\Omega$ semantics with well-foundedness and extensionality (and, specifically, of their translation to DLs)
and the treatment of roles as individuals, which has not been considered as an option in $\alc^\Omega$.

\normalcolor

\medskip
{\bf Acknowledgement:} 
This research is partially supported  by INDAM-GNCS 
Project 2019  "METALLIC \#2: METodi di prova per il ragionamento Automatico per Logiche non-cLassIChe"

\bibliographystyle{plain}

\bigskip

\begin{appendix}

\noindent
{\bf \Large Appendix}


\bigskip

\noindent
{\bf Proposition \ref{Prop:Soundness-translation} (Soundness of the translation)} 
{\em
The translation of an $\alc^\Omega$  knowledge base $K= ({\cal T} ,{\cal A})$  into $\alcio$ is sound,  that is, 
for any query $F$:
\begin{align*}
	K^T \models_{\alcio} F^T & \Rightarrow  K \models_{\alc^\Omega} F.
\end{align*}
}
\begin{proof}
By contraposition, assume  $K \not \models_{\alc^\Omega} F$ and let 
$I=\langle \Delta, \cdot^I \rangle$ be a model of $K$ in ${\alc^\Omega} $ that falsifies $F$.
$\Delta$ is a transitive set living in a model of $\Omega$ with universe ${\cal U}$.

We build an $\alcio$ interpretation $I'=\langle \Delta', \cdot^{I'} \rangle$, which is going to be a model of $K^T$ falsifying $F$ in $\alcio$, by letting:

-  $\Delta'=\Delta$;  

- for all $B \in N_C$, $B^{I'}=B^{I}$;    

-  for all roles $R \in N_R$, $R^{I'}= R^I$;

-  for all $x,y \in \Delta'$, $(x,y) \in e^{I'}$ if and only if $y \in x$;

-  for all (standard) individual name $a \in N_I$, $a^{I'}=a^I \in \mathbb{A} \cap \Delta$;

-  for all $e_C  \in N_I$, $e_C ^{I'}=C^I.  $ \\
The interpretation $I'$ is well defined. First, the interpretation $B^{I'}$ of a named concept $B$ is a subset of $\Delta'$ as expected. 
In fact,  for each $x\in B^{I'}$,  $x\in B^I \subseteq \Delta = \Delta'$.
Also, $a^{I'}=a^I \in \Delta= \Delta'$.
It is easy to see that the interpretation of constant $e_C$, $e_C^{I'}$ is in $\Delta'$. In fact, 
as the named individual $e_C$ has been added by the translation to the language of $K^T$, there must be some membership axiom $C \in D$ (or $(C,D) \in R$) in $K$, for some $D$ (respectively, for some $D$ and $ R$).
Considering the case that axiom $C \in D$ is in $K$, as $I$ is a model of $K$, $I$ satisfies $C \in D$, so that $C^I \in D^I$ must hold.
However, as $D^I \subseteq \Delta$, it must be $C^I \in \Delta$.
Hence, by construction, $e_C ^{I'} = C^I \in \Delta'=\Delta$.
In case $(C,D) \in R$, it must hold that $(C^I,D^I) \in R^I$. 
As $R^I \subseteq \Delta \times \Delta$, then $C^I, D^I \in \Delta$.
In particular, $e_C ^{I'} = C^I \in \Delta'=\Delta$.

We can prove by induction on the structural complexity of the concepts that, 
for all $x \in \Delta'$,
\begin{equation} \label{eq}
 x \in (C^T)^{I'} \mbox{ if and only if } x \in C^I
 \end{equation}
 
 For the base case, the property above holds  for $C= \top$ and  $C = \bot$,
as $ \top^T=\top$ and  $\bot^T=\bot $, and it also holds  by construction for all concept names $ B \in N_C$. 
 
 For the inductive step, 
 let $C= C_1 \sqcap C_2$ and let $x \in ((C_1 \sqcap C_2)^T) ^{I'} =(C_1^T \sqcap C_2^T) ^{I'}$,  for some $x \in \Delta'$. 
 As $I'$ is an $\alcio$ interpretation, $x \in (C_1^T) ^{I'} $ and $x \in (C_2^T) ^{I'} $ and since by induction (\ref{eq}) holds for concepts $C_1$ and $C_2$, we have $ x \in C_1^I $  and  $x \in C_2^I $.
Therefore, $ x \in (C_1^I  \cap  C_2^I)$ and, 
by definition of an ${\alc^\Omega}$ interpretation, $x \in (C_1 \sqcap C_2)^I$.
It is easy to see that  the vice-versa also holds, i.e., if $x \in (C_1 \sqcap C_2)^I$ 
then $x \in ((C_1 \sqcap C_2)^T) ^{I'} $.

For the case $C = \texttt{Pow}(D)$, let $x \in ((\texttt{Pow}(D))^T) ^{I'} $ $= (\forall e. D^T)^{I'}$,  for some $x \in \Delta'$.
As $I'$ is an $\alcio$ interpretation, for all $y \in \Delta'$, if $(x,y) \in e^{I'}$ then $y \in (D^T)^{I'}$.
By construction of $I'$, $(x,y) \in e^{I'}$ if and only if $y \in x $ and,
by inductive hypothesis, $y \in (D^T)^{I'}$ if and only if $y \in D^{I}$. Hence, 
for all $y\in \Delta$ such that $y \in x$, $y \in D^{I}$.
Therefore, $x \cap \Delta \subseteq D^I$. As $x \in \Delta'=\Delta$, and $\Delta$ is transitive, then  $x \cap \Delta =x$.
Therefore, $x \subseteq D^I$, and  $x \in \mathit{Pow}(D^I)$. As $x \in \Delta$,  $x \in  \mathit{Pow}(D^I) \cap \Delta=(\texttt{Pow}(D))^{I} $. 
The vice-versa can be proved similarly.  

%

Let us consider the case $C = \exists R.D$. Let $ x \in ((\exists R. D)^T)^{I'}$ for some $x \in \Delta'$.
As $ x \in (\exists R. D^T)^{I'}$ and $I'$ is an $\alcio$ interpretation, there is a $y \in \Delta'$ such that $(x,y) \in R^{I'}$ and $y \in (D^T)^{I'}$. 
By inductive hypothesis, $y \in D^{I} $. 
Furthermore, by construction of $I'$, it must be that $(x,y) \in R^{I}$ and  $x, y \in \Delta$.
Hence, $ x \in (\exists R. D)^{I}$. The vice-versa can be proved similarly as well as 
 all the other cases for the concept $C$.

\medskip

Using  (\ref{eq}) we can now check that all axioms and assertions in $K^T$ are satisfied in $I'$.

For an inclusion axiom $C^T \sqsubseteq D^T\in {\cal T}^T$, the corresponding inclusion axiom  $C \sqsubseteq D$ is in ${\cal T}$.
If $x \in (C^T)^{I'}$ for some $x \in \Delta'$, by  (\ref{eq}) $x \in C^{I}$  and, by the inclusion $C \sqsubseteq D \in {\cal T}$, $x \in D^{I}$.
Hence, again by (\ref{eq}), $x \in (D^T)^{I'}$.

Each assertion $D^T(a) \in {\cal A}^T$, is obtained from the translation of the assertion $D(a) \in{\cal A}$.
From the fact that $D(a)$ is satisfied by $I$, i.e. $a^I \in D^I$, given property (\ref{eq}),
it follows that $a^{I'} =a^I \in (D^T)^{I'}$.

For each assertion $D^T(e_C ) \in {\cal A}^T$ obtained from the translation of a membership axiom $C \in D$,
from the fact that $I$ is a model of $K$, we know that $C^I \in D^I$ holds.
By construction,  $e_C ^{I'}= C^I$ and we have seen that $C^I \in \Delta= \Delta'$.
From $C^I \in D^I$, it follows that $e_C ^{I'} \in D^I$ and,
by property (\ref{eq}), $e_C ^{I'} \in (D^T)^{I'}$.

%
For each assertion $R(e_C,e_D) \in {\cal A}^T$ obtained from the translation of a role membership axiom $(C,D) \in R$,
from the fact that $I$ is a model of $K$, 
we know that $(C^I  ,D^I) \in R^I$ and  $C^I, D^I \in \Delta=\Delta'$  hold.
We want to show that $(e_C ^{I'},e_D ^{I'}) \in R^{I'}$. As, by construction, $e_C ^{I'}= C^I \in \Delta= \Delta'$ 
and $e_D ^{I'}= D^I \in \Delta= \Delta'$
from $(C^I, D^I ) \in R^I$, it follows that $(e_C ^{I'},e_D ^{I'}) \in R^I$.
By the definition of role interpretation in $I'$, $(e_C ^{I'},e_D ^{I'})\in R^{I'}$. 

For each assertion $(\neg \exists e. \top)(a)$, for $a \in N_I$, it is easy to see that $a^{I'} \not \in (\exists e. \top)^{I'}$.
As $a^{I'} = a^I \in \mathbb{A}$ and an element of $\mathbb{A}$ in $\Delta$ is interpreted as an empty set, there is no $y$ such that $y \in a^I $.
Hence, by definition of $e^{I'}$ in the model $I'$, there is no $y \in \Delta'$ such that $(a^{I'},y) \in e^{I'}$.

We still need to show that  axiom $C^T \equiv \exists e^-.\{e_C \}$ is satisfied in $I'$ 
for all the concepts $C$ occurring in $K$ 
on the l.h.s. of membership axioms.
Let $x \in (C^T)^{I'}$. By property (\ref{eq}),  $x \in C^I$
and, by construction, $e_C ^{I'} =  C^I \in \Delta$.
We want to show that $x \in (\exists e^-.\{e_C \})^{I'}$, i.e. that
$(e_C ^{I'}, x) \in e^{I'}$.
By the definition of $e^{I'}$ in the $\alcio$  interpretation $I'$,
$(e_C ^{I'}, x) \in e^{I'}$ if and only if $x \in e_C ^{I'} $.
But $x \in e_C ^{I'} $ immediately follows from the previous conclusions that  $x \in C^I$, as $e_C ^{I'} =  C^I $ by construction.
The vice-versa can be proved similarly. 

To conclude the proof, it can be easily shown that, for any query $F$, $F^T$ is satisfied in $I'$ if and only if $F$ is satisfied in $I$.
\hfill $\Box$
\end{proof}

\bigskip


\noindent
{\bf Proposition \ref{Prop:Completeness-translation} (Completeness of the translation)} 
{\em
The translation of an $\alc^\Omega$  knowledge base $K= ({\cal T} ,{\cal A})$  into $\alcio$ is complete,  that is, 
for any query $F$:
\begin{align*}
	K \models_{\alc^\Omega} F & \Rightarrow K^T \models_{\alcio} F^T.
\end{align*}
}
\begin{proof}
We prove the completeness of the translation by contraposition. 
Let  $K^T \not \models_{\alcio} F^T$.  
Then there is a model 
$I=\langle \Delta, \cdot^I \rangle$ of $K^T$ in $\alcio$ such that $I$ falsifies $F$.
We show that we can build a model $J=\langle \Lambda, \cdot^J \rangle$ of $K$ in $\alc^\Omega$,
where
the domain $\Lambda$ is a transitive set
in the universe $\HF^{1/2}(\mathbb{A})$ consisting of all the hereditarily finite rational hypersets
built from atoms in $\mathbb{A}=\{{\bf a_0}, {\bf a_1},\ldots  \}$.
As a matter of fact, the domain $ \Lambda $ is to be extended possibly duplicating sets representing extensionally equal but pairwise distinct sets/elements in $\Delta$.

We define $ \Lambda $ starting from the graph\footnote{Strictly speaking the graph $ G $ introduced here is not really necessary: it is just mentioned to single out the membership relation $ \in $  from $ e^I $ more clearly.} $ G=\langle \Delta, e^I \rangle$, whose nodes are the elements of $ \Delta $ and whose arcs are the pairs $ (x,y)\in e^I $. Notice that, by Proposition \ref{finite}, the graph $ G $ can be assumed to be finite. 
Intuitively, an arc from $ x $
to $ y $ in $ G $ stands for the fact that $ y \in x $.

At this point, let $\Delta_0= \{ d_1, \ldots , d_m\} $ be the elements of $\Delta$ which, in the model $I=\langle \Delta, \cdot^I \rangle$, are not in relation $e^I$ with any other element in $\Delta$ and are non equal to the interpretation of any concept individual name $e_C$ 
(that is, $d_j \in \Delta_0 $ iff there is no $y$ such that $(d_j, y) \in e^I$ and there is no concept $C$ such that $d_j= e_C^I$).
For any given $ d\in \Delta $ we define the following hyperset $ M(d) $:
\begin{align}\label{def_M(d)}
	M(d) & = \left\{\begin{array}{ll}
						{\bf a_k} & \mbox{ if } d = d_k \in \Delta_0,\\
						\left\{M(d') \tc (d,d')\in e^I\right\} & \mbox{ otherwise. } 
					\end{array}\right.
\end{align} 
Observe that,  for the concepts $C$ occurring on the l.h.s. of membership axioms, as axiom $ C^T= \exists e^-.\{e_C\} $ is satisfied in the model $I$ of $K^T$, it holds that $d' \in (C^T)^I$ iff $(e_C^I, d') \in e^I$. Therefore, 
for $d=e_C^I$, $M(d)= M(e_C^I)= \left\{M(d') \tc (e_C^I,d')\in e^I\right\} $ $ = \left\{M(d') \tc d'\in (C^T)^I\right\}$.

The above definition uniquely determines hypersets in $ \HF^{1/2}(\mathbb{A}) $. This follows from the fact that all finite systems of (finite) set-theoretic equations have a solution in $ \HF^{1/2}(\mathbb{A}) $\footnote{More generally, when  $ e^I $ is a well-founded relation, $ M(\cdot) $ is a set-theoretic ``rendering'' of $ e^I $: the so-called \emph{Mostowski collapse} of $ e^I $ (see \cite{Jech:2003ly}).
As a consequence of the duplication of extensionally equal sets, not only we have the trivial property that, for $d,d' \in \Delta$, $d=d'$ implies $M(d)=M(d')$, but also the converse implication, i.e., 
$M(d)=M(d')$  implies $d=d'$.
}. 

Our task now is to complete the definition of $ J= \langle \Lambda, \cdot^J\rangle $ in such a way to prove that $ J $ is a model of $ K $ in $\alc^\Omega$ falsifying $ F $.
The definition is completed as follows:

	-  $\Lambda=\{M(d) \tc d \in \Delta\}$;    
	
	-  for all $B \in N_C$, $B^{J}=\{M(d) \tc  d \in B^I\};$
	
	-  for all roles $R \in N_R$ such that $ R \neq e $, 
		$R^{J}= \{(M(d),M(d')) \tc (d,d')\in R^I\};$
	
	-  
	          for all standard  named individuals $a \in N_I$ such that $a^I= d_k$, let $a^J= M(d_k) ={\bf a_k}\in \mathbb{A}$.\\
By construction,  $\Lambda$ is transitive set in a model $\emme$ of $\Omega$ 
(in fact, for all $M(d) \in \Lambda$, if $M(d') \in M(d)$, then $(d', d) \in e^I$ and then $d' \in \Delta$; therefore, $M(d') \in \Lambda$).
We can now prove, by induction on the structural complexity of concepts, that the following holds, for all $x \in \Delta$: 
\begin{align}\label{set-eq}
	M(x) \in C^J & \mbox{ if and only if } x \in (C^T)^I.
\end{align}

The base case for concept names, $ \top $, and $ \bot $ is trivial, as $\top^T= \top$, and $\bot^T = \bot$.

For the case $C=B \in N_C$,  by definition of $J$, $M(x) \in B^J $ iff $x \in B^I$.
As $B^T=B$, $M(x) \in B^J $ iff  $x \in (B^T)^I$. 

The inductive step in case $ C = C_1 \sqcap C_2 $ follows directly from the inductive hypothesis. 
If $M(x) \in (C_1 \sqcap C_2)^J$, then  $M(x) \in C_1^J$ and  $M(x) \in  C_2^J$. By inductive hypothesis, $x \in (C_1^T)^I$ and $x \in (C_2^T)^I$.
Hence, $x \in ((C_1 \sqcap C_2)^T)^I$. The vice-versa is proved similarly. 

The cases in which $ C=(\exists R.D) $ or $ C=(\forall R.D) $, are also straightforward. 
We only consider the case   $ C=(\exists R.D) $.
If $M(x) \in (\exists R.D)^J$, then there is a $M(d) \in \Lambda$ such that: $(M(x),M(d)) \in R^J$ and $M(d) \in D^J$.
By inductive hypothesis, $d \in (D^T)^I$ and, by definition of $J$,  $(x,d) \in R^I$. Hence,  $x \in ((\exists R.D)^T)^I$.
The vice-versa is proved similarly. 

For the case $ C=\texttt{Pow}(D)  $, by definition of translation, we have that: 
\begin{align*}
(C^T)^I & = ((\texttt{Pow}(D))^T)^I = (\forall e.D^T)^I  =\{x \in \Delta \tc \forall y( (x,y) \in e^I  \imp  y \in (D^T)^I\}
\end{align*}
and 
$C^J  =(\texttt{Pow}(D))^J=\mathit{Pow}(D^J) \cap \Lambda$.

Consider, for $ x \in \Delta $, $ M(x)\in \mathit{Pow}(D^J) \cap \Lambda $, which is as to say that $ M(x)\subseteq D^J$. 
All the elements of $ M(x) $ are of the form $ M(y) $ for some $ y\in \Delta $, therefore we have that: 
\begin{align*}
\forall M(y)(M(y)\in M(x) \rightarrow M(y) \in D^J), 
\end{align*}
which, by definition of $ M(\cdot) $ and by inductive hypothesis, means that: 
\begin{align*}
\forall y((x,y) \in e^I \rightarrow y \in (D^T)^I), 
\end{align*}
which means $x \in (\forall e.D^T)^I =((\texttt{Pow}(D))^T)^I$
and proves (\ref{set-eq}) in this case.


We can now use (\ref{set-eq}) to prove that axioms and assertions in $ K $ are satisfied in $ J $.


The cases $ C \sqsubseteq D $ and $ D(a) $, with $ C,D$ concepts of $\alc^\Omega$ and $ a \in N_I $, follow directly   
from (\ref{set-eq}), from the definition of $ M(\cdot) $ and
from the fact that $ C^T \sqsubseteq D^T $ and  $ D^T(a) $ (respectively) are satisfied in the model $I$ of $K^T$.

For each membership axiom $ C \in D $ in $K$, we have to show that $ C^J  \in D^J $.    
As the assertion $ D^T(e_C) $ is in $K^T$ and is satisfied in $I$, we have $e_C^I \in (D^T)^I$.
Hence, from (\ref{set-eq}),  $M(e_C^I) \in D^J$. As we have seen above, 
$M(e_C^I) = \left\{M(d') \tc d'\in (C^T)^I\right\}$  and, again from (\ref{set-eq}), $M(e_C^I) = C^J$.
Thus $ C^J \in D^J $. 

For each role membership axiom $ (C,D) \in R $ in $K$, we  show that $ (C^J , D^J) \in R^J $.
As the assertion $ R(e_C,e_D) $ is in $K^T$ and is satisfied in $I$, we have $(e_C^I,e_D^I) \in R^I$.
Hence, from the definition of $R^J$,  $(M(e_C^I), M(e_D^I)) \in R^J$. As we have seen above,
$M(e_C^I) = \left\{M(d') \tc d'\in (C^T)^I\right\}$ and, from (\ref{set-eq}), $M(e_C^I) = C^J$.
Similarly, $M(e_D^I) = D^J$.
Thus $ (C^J , D^J  ) \in R^J $. 
\hfill $\Box$
\end{proof}

\bigskip


\noindent
{\bf Proposition \ref{set-th-translation} (Soundness and Completeness of the translation of $\al^\Omega$)} 

{\em
For all concepts  $C$ and $D$ on the language of the knowledge base $K$: 
\begin{align*}
&K \models_{\al^\Omega} C \sqsubseteq D  \text{ if and only if } \\
&\Omega \models  \forall x  (  Trans(x)  
\rightarrow \forall x_1, \ldots, \forall x_n ( \bigwedge \mathit{ABox}_{\cal A} \wedge   \bigwedge \mathit{TBox}_{\cal T}  \rightarrow C^S \cap x \subseteq D^S))
\end{align*} 
where $Trans(x)$ stands for $\forall y (y \in x  \rightarrow y \subseteq x)$, that is, $x \subseteq \texttt{Pow}(x)$.
}

\begin{proof}

($\Rightarrow$) For the completeness, we proceed by contraposition. Suppose there is a model $\emme$ of $\Omega$, with universe $\U$ over $\mathbb{A}$, which falsifies the formula:

$\forall x  (  Trans(x) 
\rightarrow \forall x_1, \ldots, \forall x_n ( \bigwedge \mathit{ABox}_{\cal A} \wedge   \bigwedge \mathit{TBox}_{\cal T}  \rightarrow C^S \cap x \subseteq D^S))$\\
Then there must be some $u \in \U$, such that 
$Trans(x)$ $[u/x]$ is satisfied in $\emme$, while 
$( \forall x_1, \ldots,$ $ \forall x_n ( \bigwedge \mathit{ABox}_{\cal A} \wedge   \bigwedge \mathit{TBox}_{\cal T}  \rightarrow C^S \cap x \subseteq D^S)))[u/x]$ is falsified in $\emme$.

Hence, there must be  $v_1, \ldots, v_n$ in $\U$, such that 
$( \bigwedge \mathit{ABox}_{\cal A} \wedge   \bigwedge \mathit{TBox}_{\cal T} )[u/x,  \overline{v}/ \overline{x}]$
is satisfied in $\emme$, while $ (C^S \cap x \subseteq D^S) [u/x, \overline{v}/ \overline{x}]$ is falsified in $\emme$.
Let $\beta=[u/x, \overline{v}/ \overline{x}]$.

We define an $\al^\Omega$ interpretation $I=(\Delta, \cdot^I)$, as follows:

- $\Delta=u$;

- $A_i^I=  v_i \cap u$, for all $i=1, \ldots, n$ such that $A_i$ occurs in $K$; $A^I=  \emptyset$ for all other $A \in N_C$. 

\noindent
$I$ is well-defined.
By construction, $\Delta$ is a transitive set  living in the universe $\U$ of the $\Omega$ model $\emme$,
and $A_i^I \subseteq \Delta$.

We can prove by structural induction that, for all the concepts $C$ built from the concept names in $K$,
for the variable substitution $\beta= [u/x, \overline{v}/ \overline{x}]$, and
for all $w \in \Delta$:
\begin{align}\label{correspondence_C_C*_al}
w \in C^I  \mbox{ if and only if }  w \in (C^S)^\emme_\beta  
\end{align}
The proof is by induction on the structure of the concept $C$. 
We consider the two interesting cases of named concepts and the power-set concept.
Note that, $C^I=\{ w \mid \; w \in \Delta \mbox{ and }w \in (C^S)^\emme_\beta \}$.

\medskip
\noindent
Let $C=A_i$, for some $A_i \in N_C$ occurring in $K$.

$w  \in A_i ^I$  {\em iff}  $w  \in v_i \cap u$, with $v_i= (x_i)^\emme_\beta$ (by definition of $ A_i ^I$) 


$\;$ \ \ \ \ \ \ \ \ \ \ \ \ \  {\em iff} $w \in (x_i)^\emme_\beta$ (as $w \in  u =\Delta$)

$\;$ \ \ \ \ \ \ \ \ \ \ \ \ \  {\em iff} $w \in (A_i^S)^\emme_\beta$ 
(by the translation for named concepts)

\medskip
\noindent
Let $C=\texttt{Pow}(D)$.
By inductive hypothesis: $D^I = (D^S)^\emme_\beta$ 

$w  \in \texttt{Pow}(D) ^I$  {\em iff}  $w  \in Pow(D^I) \cap \Delta$, by the semantics of $\al^\Omega$

$\;$ \ \ \ \ \ \ \ \ \ \ \ \ \    {\em iff}  $w  \subseteq D^I$ \ \  and $w \subseteq \Delta$ \ \ (by transitivity of $\Delta$)


$\;$ \ \ \ \ \ \ \ \ \ \ \ \ \    {\em iff}  $w \subseteq  (D^S)^\emme_\beta $ and $w \subseteq \Delta$ \ \ (by inductive hypothesis) 

$\;$ \ \ \ \ \ \ \ \ \ \ \ \ \    {\em iff}  $w \subseteq  (D^S)^\emme_\beta $ \ \ (by transitivity of $\Delta$, as $w \in \Delta$)

$\;$ \ \ \ \ \ \ \ \ \ \ \ \ \    {\em iff}  $w \in  ({Pow}(D^S))^\emme_\beta$

$\;$ \ \ \ \ \ \ \ \ \ \ \ \ \    {\em iff}  $w \in  ((\texttt{Pow}(D))^S)^\emme_\beta$ 
\normalcolor
\medskip

The equivalence (\ref{correspondence_C_C*_al}) can be used to prove that the $\alc^\Omega$ interpretation $I$ is a model of $K$, 
i.e. it satisfies all axiom inclusions and membership inclusions in $K$, and that $I$  falsifies the inclusion $C \sqsubseteq D$. 
From this, it follows that, $K \not \models_{\alc^\Omega} C \sqsubseteq D$.
We prove that $I$ is a model of $K$.

For inclusion axioms, let $C \sqsubseteq D$ in $K$, we show that, 
for all $w \in \Delta$, if $w \in C^I$ then $w \in D^I$.
The inclusion $C^S \cap x \subseteq D^S$ is in $\mathit{TBox}_{\cal A}$.
We know that $( \bigwedge \mathit{TBox}_{\cal A} )^\emme_\beta$
is satisfied in $\emme$.
Hence, $(C^S)^\emme_\beta \cap u \subseteq (D^S)^\emme_\beta$ holds in $\emme$.
Suppose that $w \in C^I$. From (\ref{correspondence_C_C*_al}),  $w \in (C^S)^\emme_\beta$.
As $w \in \Delta=u $, $w \in (C^S)^\emme_\beta \cap u$. 
Therefore, $w \in (D^S)^\emme_\beta$.
Again from (\ref{correspondence_C_C*_al}), $w \in D^I$.

For membership axioms, let $C \in D$ in $K$.
We want to show that $C^I \in D^I$.
We know that  $C^S \in D^S \cap x$ is in $\mathit{ABox}_{\cal A}$
and that $( \bigwedge \mathit{ABox}_{\cal A} )^\emme_\beta$
is satisfied in $\emme$.
Hence, $(C^S)^\emme_\beta \in (D^S)^\emme_\beta \cap u$ holds in $\emme$.
Thus $(C^S)^\emme_\beta \in (D^S)^\emme_\beta$ and $(C^S)^\emme_\beta \in u$.
As $(C^S)^\emme_\beta \in \Delta$, by (\ref{correspondence_C_C*_al}), we get $(C^S)^\emme_\beta \in D^I$.
To show that $(C^S)^\emme_\beta= C^I$, from which $C^I \in D^I$ follows,
observe that: $(C^S)^\emme_\beta \in \Delta$ and  by transitivity of $\Delta$, $(C^S)^\emme_\beta \subseteq u$.
From (\ref{correspondence_C_C*_al}),
$C^I = (C^S)^\emme_\beta \cap u $, and thus $C^I = (C^S)^\emme_\beta$.

\medskip

($\Leftarrow$) For the soundness of the translation, we proceed, again, by contraposition.
Let  $I=(\Delta, \cdot^I)$ be $\al^\Omega$ model of $K$, falsifying the inclusion $C \sqsubseteq D$.
By construction, $\Delta$ is a transitive set living in the universe $\U$ of an $\Omega$ model $\emme$.

We show that $\emme$ falsifies the formula:
\begin{align} \label{formula_set_th_4.2}
\forall x  (  Trans(x) \rightarrow \forall x_1, \ldots, \forall x_n ( \bigwedge \mathit{ABox}_{\cal A} \wedge   \bigwedge \mathit{TBox}_{\cal T}  \rightarrow C^S \cap x \subseteq D^S))
\end{align}
Let $\beta$ be the variable substitution $\beta= [u/x,  \overline{v}/ \overline{x}]$,
where: $u= \Delta$ and  $v_j= A_j^I $,  for all $j=1,\ldots,n$.

We can prove that, for all the concepts $C$ built from the concept names in $K$,
and for all $d \in \Delta$:
\begin{align}\label{correspondence_C_C*_al_4.2}
d \in C^I  \mbox{ if and only if }  d \in (C^S)^\emme_\beta  
\end{align}
The proof is by induction on the structure of the concept $C$. 
Let $d \in \Delta$. We consider the two cases of named concepts and the power-set concept.

\medskip
\noindent
Let $C=A_i$, for some $A_i \in N_C$ occurring in $K$. 

$d  \in A_i ^I$  {\em iff}  $d \in v_i$,  by definition of $v_i$ 


$\;$ \ \ \ \ \ \ \ \ \ \ \ \ \  {\em iff} $d \in (x_i)^\emme_\beta$ 

$\;$ \ \ \ \ \ \ \ \ \ \ \ \ \  {\em iff} $d \in (A_i^S)^\emme_\beta$ 
(by the translation for named concepts).

\medskip
\noindent
Let $C=\texttt{Pow}(D)$.

$d  \in \texttt{Pow}(D) ^I$  {\em iff}  $d  \in Pow(D^I) \cap \Delta$, by the semantics of $\al^\Omega$

$\;$ \ \ \ \ \ \ \ \ \ \ \ \ \    {\em iff}  $d  \in Pow(D^I) $,  \ as $d \in \Delta$

$\;$ \ \ \ \ \ \ \ \ \ \ \ \ \    {\em iff}  $d  \subseteq D^I$ and $d \subseteq \Delta$, \ \ by transitivity of $\Delta$

$\;$ \ \ \ \ \ \ \ \ \ \ \ \ \    {\em iff}  $d \subseteq  (D^S)^\emme_\beta $ and $d \subseteq \Delta$, \ \ by inductive hypothesis

$\;$ \ \ \ \ \ \ \ \ \ \ \ \ \    {\em iff}  $d \in  ({Pow}(D^S))^\emme_\beta$, \ \ by transitivity of $\Delta$, as $d \in \Delta$

$\;$ \ \ \ \ \ \ \ \ \ \ \ \ \    {\em iff}  $d \in  ((\texttt{Pow}(D))^S)^\emme_\beta$, by the translation of the power-set.
\normalcolor

\noindent
Property (\ref{correspondence_C_C*_al_4.2}) can be used to prove that  the formula (\ref{formula_set_th_4.2}) is falsified in the model $\emme$ of $\Omega$.
 It is enough to prove that: 
$ ( \bigwedge \mathit{ABox}_{\cal A} \wedge   \bigwedge \mathit{TBox}_{\cal T})\beta$
is satisfied in $\emme$ and that 
$(C^S \cap x \subseteq D^S) \beta$ is falsified in $\emme$.

To prove that $(\bigwedge \mathit{TBox}_{\cal T})\beta$ holds in $\emme$,
let the inclusion $C^S \cap x \subseteq D^S$ be in $\mathit{TBox}_{\cal T}$.
Then,  $C\sqsubseteq D$ is in $K$, and is satisfied in $I$.
To show that, $(C^S)^\emme_\beta \cap u \subseteq (D^S)^\emme_\beta$ holds in $\emme$,
let $d \in (C^S)^\emme_\beta \cap u$.
By  (\ref{correspondence_C_C*_al_4.2}), $d \in C^I$. Then, $d \in D^I$ and,
again by  (\ref{correspondence_C_C*_al_4.2}), $d \in (D^S)^\emme_\beta$.

To prove that $ ( \bigwedge \mathit{ABox}_{\cal A})\beta$ holds in $\emme$,
let the inclusion $C^S \in D^S  \cap x$ be in $\mathit{ABox}_{\cal A}$.
Then,  $C\in D$ is in $K$, and is satisfied in $I$,
i.e., $C^I \in D^I$. As $D^I \subseteq \Delta$, $C^I \in \Delta$.
Let $d= C^I$. By (\ref{correspondence_C_C*_al_4.2}), $d \in (D^S)^\emme_\beta$ and,
as $u= \Delta$, $d \in (D^S)^\emme_\beta \cap u =(D^S \cap x)^\emme_\beta$.
Again by (\ref{correspondence_C_C*_al_4.2}), 
$(C^S)^\emme_\beta= C^I$.
Thus, $(C^S)^\emme_\beta \in (D^S \cap x)^\emme_\beta$.

In a similar way we can show that the inclusion $(C^S \cap x \subseteq D^S)\beta$ is falsified in $\emme$.
Indeed, $C \sqsubseteq D$ is falsified in $I$, i.e., for some $d \in \Delta$, $d \in C^I$ and $d \not \in  D^I$.
Clearly, $d \in u$ and, by (\ref{correspondence_C_C*_al_4.2}), $d \in (C^S)^\emme_\beta$.
Hence, $d \in (C^S \cap x)^\emme_\beta$. 
As $d \not \in  D^I$,  $d \not \in (D^S)^\emme_\beta$.
\hfill $\Box$
\end{proof}

\bigskip

\noindent
{\bf Proposition \ref{Prop:encoding} (Soundness and Completeness of the encoding of $\alc^\Omega$ in $\al^\Omega$)} 
{\em
 \begin{center}
$K \models_{\alc^\Omega} F  \mbox{ if and only if } K^E \models_{\al^\Omega} F^E$
\end{center}
}

\begin{proof}
($\Leftarrow$) The soundness is proved by contraposition.
Assume that $K \not \models_{\alc^\Omega} F $, then, there is a model $I=(\Delta, \cdot^I)$ of $K$ such that 
$F$ is falsified in $I$.

For the finite model property of $\alc^\Omega$, we can assume without loss of generality that the model $I$ is finite.
To build from $I$ a finite $\al^\Omega$ model $J=(\Delta', \cdot^J)$ of $K^E$ which falsifies $F^E$, 
we define a graph $G=(N , E)$ where: 
$N= \Delta \cup D_1 \cup \ldots \cup D_k$ 
and  $D_i = \{u^i_{s,t} \mid s,t \in \Delta \wedge (s,t) \in R_i^I \}$.
$E$ is defined as follows: 
\begin{center}
$E =  \bigcup_{i=1}^k \{(s, u^i_{s,t}), (u^i_{s,t},t) \mid s,t \in \Delta \wedge (s,t) \in R_i^I\}  \cup \{(s, t) \mid s,t \in \Delta \wedge t \in s\}$
\end{center}
We define an injection $\pi$ from the leaves of $N$, i.e. nodes without any successor, to  $\mathbb{A}$
and, for any given $ d\in N $, we define the following hyperset $ M(d) $: 
\begin{align*}
	M(d) & = \left\{\begin{array}{ll}
						\pi(d) & \mbox{ if } d  \mbox{ is a leaf of } N,\\
						\left\{M(d') \tc (d,d') \in E \right\} & \mbox{ otherwise. } 
					\end{array}\right.
\end{align*} 
The above definition uniquely determines hypersets in $ \HF^{1/2}(\mathbb{A}) $. This follows from the fact that all finite systems of (finite) set-theoretic equations have a solution in $ \HF^{1/2}(\mathbb{A}) $. 

Let $\Lambda=\{ M(d) \mid d \in N\}$,  possibly extended  by duplicating M(d)'s to represent extensionally-equal (bisimilar) sets corresponding to pairwise distinct elements in $N$.
As a consequence, as in previous cases,  for $d,d' \in N$, $d=d'$ if and only if $M(d)=M(d')$, i.e., there are distinct sets in $\Lambda$ for pairwise distinct elements of $N$.

Observe that, by definition of $\Lambda$,  if $(s,t) \in R_i^I$, for $s,t \in \Delta$, then 
there is some $d \in D_i$, such that $M(d) \in M(s)$ and $M(t) \in M(d)$ (and, in particular, $d= u_{s,t}^i$); and vice-versa.

Our task now is to complete the definition of $ J= \langle \Delta', \cdot^J\rangle $ in such a way to prove that $ J $ is a model of $ K^E $ in $\al^\Omega$ falsifying $ F^E$.
The definition is completed as follows:

	-  $\Delta'=\Lambda $;   
	  
	-  $A^{J}=\{ M(d) \mid d \in A^I\} $  for all $A \in N_C$,  in the language of $\alc^\Omega$;

	-  $B_i^J = M(a_i^I)= \pi(a_i^I)$, $i=1, \ldots, r$;
	         
	- $U_i^J= \{ M(u_{s,t}^i) \mid s,t \in \Delta \mbox{ and } (s,t) \in R_i^I\}$; 

	- $(F_{h,j}^i)^J = M(u_{a_h^I,a_j^I}^i) $;
	
	- $(G_{C_h,C_j}^i)^J = M(u_{s,t}^i)$, for  $s=C_h^I$ and $t= C_j^I$.

\noindent
By construction,  $\Delta'$ is transitive set in a model $\emme$ of $\Omega$.
Notice that  $B_i^J = M(a_i^I) \in \mathbb{A}$, and hence $B_i^J$ has no elements.
Notice also that, in the definition of $(G_{C_h,C_j}^i)^J$, $s$ and $t$ are elements of $\Delta$ and $(s,t) \in R_i^I$, so that $u_{s,t}^i \in E$.
In fact, $s=C_h^I$ and $t= C_j^I$ and $R_i(C_h, C_j)$ is in ${\cal A}$.
Therefore, as $I$ satisfies  the ABox ${\cal A}$, $(C_h^I,C_j^I) \in R_i^I \subseteq \Delta \times \Delta$,
and $C_h^I,C_j^I \in \Delta$.
In the following, we let $M(D_i) = \{ M(u_{s,t}^i) \mid s,t \in \Delta \mbox{ and } (s,t) \in R_i^I\} $.
%
%

It can be shown that, for all $d\in \Delta$, for all concepts $C$ in $K$ (or $F$),
\begin{align}\label{encodingE_soundness}
d \in C^I \mbox{ if and only if } M(d) \in (C^E)^J, 
\end{align}
We prove ( \ref{encodingE_soundness}) by induction on the structural complexity of concepts.
Let $d \in \Delta$. We consider the cases of named concepts and the power-set concept.

\medskip
\noindent
Let $C'=A_i$, for some $A_i \in N_C$.

$M(d)  \in (A_i^E)^J$  {\em iff}  $M(d) \in A_i^J$ 
(by the encoding for named concepts,  $A_i^E= A_i$)

$\;$ \ \ \ \ \ \ \ \ \ \ \ \ \  {\em iff} 
$d  \in A_i^I$ (by definition of $ A_i ^J$) 

\medskip
\noindent
Let $C'=\texttt{Pow}(C)$.

$M(d)  \in ((\texttt{Pow}(C))^E) ^J$  {\em iff} 

$\;$ \ \ \ \    {\em iff} $M(d)  \in ( \texttt{Pow}(  U_1 \sqcup \ldots \sqcup U_k   \sqcup  C^E)) ^J$ (by the encoding $E$)

$\;$ \ \ \ \    {\em iff} $M(d)  \in {Pow}((  U_1 \sqcup \ldots \sqcup U_k   \sqcup  C^E) ^J) \cap \Delta'$ (semantics of $\al^\Omega$)

$\;$ \ \ \ \     {\em iff}  $M(d)  \subseteq (  U_1 \sqcup \ldots \sqcup U_k   \sqcup  C^E) ^J$ and $M(d) \in \Delta'$ \footnote{
We omit condition $M(d) \in \Delta'$ in the subsequent equivalences, as it holds from the hypothesis that $d \in \Delta$}

$\;$ \ \ \ \   {\em iff}  $M(d)  \subseteq   U_1^J \cup \ldots \cup U_k^J   \cup (C^E) ^J$ 

$\;$ \ \ \ \   {\em iff}  $M(d)  \subseteq   M(D_1) \cup \ldots \cup M(D_k)   \cup (C^E) ^J$

$\;$ \ \ \ \ {\em iff}  $ \forall M(d') \in M(d)$, $M(d')  \in M(D_i)$ for  $i \in \{1,\ldots, k\}$, 
or  $M(d') \in  (C^E) ^J$ 


$\;$ \ \ \ \   {\em iff}   $ \forall d' \in d$, $d'= u_{st}^i$ 
 for some $i$ and $s,t \in \Delta$, or  $M(d') \in  (C^E) ^J$  

$\;$ \ \ \ \ \ \ \ \ (definition of $D_i$)

$\;$ \ \ \ \   {\em iff}   $ \forall d' \in d$, if $d' \neq u_{st}^i$, for all  $i \in \{1,\ldots, k\}$, then   $M(d') \in  (C^E) ^J$  

$\;$ \ \ \ \   {\em iff}   $ \forall d' \in d$, if $d' \in \Delta$,  then  $M(d') \in  (C^E) ^J$  

$\;$ \ \ \ \   {\em iff}   $ \forall d' \in d$, if $d' \in \Delta$,  then  $d'\in  C^I$  (by inductive hypothesis)

$\;$ \ \ \ \   {\em iff}   $ \forall d' \in d$,   $d'\in  C^I$   (as $d \in \Delta$ and $\Delta$ is transitive, $d' \in \Delta$)

$\;$ \ \ \ \   {\em iff}   $d \subseteq  C^I$  

$\;$ \ \ \ \   {\em iff}   $d \in  Pow(C^I)$  

$\;$ \ \ \ \   {\em iff}   $d \in  Pow(C^I) \cap \Delta$  (as $d \in \Delta$)

$\;$ \ \ \ \   {\em iff}   $d \in ( \texttt{Pow}(C))^I$  (semantics of $\alc^\Omega$)

\medskip

\noindent
Let $C'= \forall R_i. C$ 

$M(d)  \in ((\forall R_i. C)^E) ^J$  {\em iff} 

$\;$ \ \ \ \    {\em iff} $M(d)  \in ( \texttt{Pow}(\neg  U_i  \sqcup  \texttt{Pow}(C^E))) ^J$ (by the encoding $E$)

$\;$ \ \ \ \    {\em iff} $M(d)  \in {Pow}((\neg  U_i  \sqcup  \texttt{Pow}(C^E)) ^J) \cap \Delta'$ (semantics of $\al^\Omega$)

$\;$ \ \ \ \     {\em iff}  $M(d)  \subseteq ( \neg  U_i \sqcup   \texttt{Pow}(C^E) ) ^J$ and $M(d) \in \Delta'$


$\;$ \ \ \ \     {\em iff}  $M(d)  \subseteq  (\neg  U_i)^J \cup  (\texttt{Pow}(C^E))^J$ \ \  \  ($M(d) \in \Delta'$ is omitted as it holds from  $d \in \Delta$)

$\;$ \ \ \ \     {\em iff}  $M(d)  \subseteq  (\Delta' \backslash  U_i^J) \cup   (\texttt{Pow}(C^E))^J$ 

$\;$ \ \ \ \     {\em iff}  $M(d)  \subseteq (\Delta' \backslash  M(D_i)) \cup   (\texttt{Pow}(C^E))^J$ 

$\;$ \ \ \ \ {\em iff}  $ \forall M(d') \in M(d)$, $M(d')  \not \in M(D_i)$ 
or  $M(d') \in   {Pow}((C^E )^J) \cap \Delta'$ 

$\;$ \ \ \ \ {\em iff}  $ \forall M(d') \in M(d)$, if $M(d')  \in M(D_i)$, then  $M(d') \subseteq   (C^E )^J \cap \Delta'$

$\;$ \ \ \ \ {\em iff}  $ \forall M(d') \in M(d)$, if $M(d')  \in M(D_i)$, then  $M(d') \subseteq   (C^E )^J $

$\;$ \ \ \ \ \ \ \ \ \ \ \ \ \ \ \ \ \   (by transitivity of $\Delta'$, $M(d') \in \Delta'$)

$\;$ \ \ \ \ {\em iff}  $ \forall M(d') \in M(d)$, if $M(d')  \in M(D_i)$, then    
$\forall M(d'') \in M(d')$,  $M(d'') \in (C^E )^J$ 


$\;$ \ \ \ \ {\em iff}  $ \forall d' \in N$, if $M(d') \in M(d)$ and $M(d')  = M(u_{s,t}^i)$, for some $u_{s,t}^i \in D_i$,

$\;$ \ \ \ \ \ \ \ \ \ \ \ \ \ \ \ \ \  \ \ \ \ \ \ \ \ \ \ \ \ \ \ \ \ \ \ \ \ \ \  \ \ \ \ \ \ \ \ \ \ \  \ \ \ \ \ \ \ \ \ \ \ \ \ \ \ \ \ \ \ \ \ \ then  $\forall M(d'') \in M(d')$,  $M(d'') \in (C^E )^J$ 

$\;$ \ \ \ \ \ \ \ \ \ \ \ \ \ \ \ \ \  (by definition of $M(D_i) $)


%

$\;$ \ \ \ \ {\em iff}  $ \forall d' \in N$, if  $M(d')  = M(u_{s,t}^i)$,  for some $u_{s,t}^i \in D_i$,  
and $(d,d')\in E$,

$\;$ \ \ \ \ \ \ \ \ \ \ \ \ \ \ \ \ \  \ \ \ \ \ \ \ \ \ \ \ \ \ \ \ \ \ \ \ \ \ \  \ \ \ \ \ \ \ \ \ \ \  \ \ \ \ \ \ \ \ \ \ \ \ \ \ \ \ \ \ \ \ \ \ then  $\forall M(d'') \in M(d')$,  $M(d'') \in (C^E )^J$

$\;$ \ \ \ \ \ \ \ \ \ \ \ \ \ \ \ \ \   (by propositional reasoning and the definition of hypersets in $\Lambda$ from graph $G$)

$\;$ \ \ \ \   {\em iff}   $ \forall d' =u_{s,t}^i \in D_i$ if $(d,u_{s,t}^i)\in E$,  then   
$\forall M(d'') \in M(d')$,  $M(d'') \in (C^E )^J$ 


$\;$ \ \ \ \ \ \ \ \ \ \ \ \ \ \ \ \ \ (as there are distinct sets in $\Lambda$ for pairwise distinct elements of $N$) 

\normalcolor

$\;$ \ \ \ \   {\em iff}   $ \forall u_{s,t}^i \in D_i$ if $(d,u_{s,t}^i)\in E$,  then   
$\forall d'' \in N$, such that  $(u_{s,t}^i,d'')\in E$, $M(d'') \in (C^E )^J$ 

$\;$ \ \ \ \   {\em iff}   $ \forall u_{d,t}^i \in D_i$ if $(d,u_{d,t}^i)\in E$,  and $(u_{d,t}^i,t)\in E$, then $M(t) \in (C^E )^J$ 

$\;$ \ \ \ \ \ \ \ \ \ \ \ \ \ \ \ \ \    (by definition of $E$, $s=d$ and $d''=t$)

$\;$ \ \ \ \   {\em iff}   $ \forall u_{d,t}^i \in D_i$ if $(d,u_{d,t}^i)\in E$ and $(u_{d,t}^i,t)\in E$, then   $t \in C^I$  
(inductive hypothesis)

$\;$ \ \ \ \   {\em iff}   $ \forall t \in \Delta$ if $(d,t) \in R^I_i$, then   $t \in C^I$  
(definition of $E$)

$\;$ \ \ \ \   {\em iff}   $d \in  (\forall  R_i. C)^I$  (semantics of $\alc^\Omega$)

\medskip

\noindent
Equivalence (\ref{encodingE_soundness}) can be used to prove that $J$ is a model of $K^E$ that falsifies $F^E$,
thus showing that $K^E \not \models_{ \al^\Omega} F^E$.

Let us prove that $J$ is a model of $K^E$. We consider the interesting cases.

For $C^E \sqcap \neg (U_1 \sqcup \ldots \sqcup U_k) \sqsubseteq D^E$ in $K^E$, the inclusion axiom $C  \sqsubseteq D$ is in $K$,
and is satisfied in $I$,
that is, for all $d \in \Delta$, if  $d \in C^I$ then $d \in D^I$.
For $M(d) \in \Delta'$, let $M(d) \in (C^E \sqcap \neg (U_1 \sqcup \ldots \sqcup U_k))^J$. 
Then, $M(d) \in (C^E)^J$ and   $M(d) \not \in (U_1)^J \cup \ldots \cup (U_k)^J$, 
i.e. $M(d) \not \in M(D_j)$, for all $j= 1, \ldots, k$. Hence, $d \in \Delta$. 
By (\ref{encodingE_soundness}),
$d \in C^I$, and then  $d \in D^I$.
Again by  (\ref{encodingE_soundness}), $M(d) \in (D^E)^J$.

For the membership axioms, let $C^E \in D^E$ be in $K^E$.
The membership axiom  $C \in D$ in $K$ and is satisfied in $I$,
i.e. $C^I \in D^I$. As $D^I \subseteq \Delta$, $C^I \in \Delta$ and,
by (\ref{encodingE_soundness}),
$M(C^I) \in (D^E)^J$.
Again by (\ref{encodingE_soundness}),
$M(C^I)= \{ M(d) \mid d \in C^I\} = (C^E)^J$,
Hence, $(C^E)^J \in (D^E)^J$.

For each assertion $R_i(C_h,C_j)$  in $K$, we have to show that
the membership axioms $G_{C_h,C_j}^i \in C_h^E$,  $C_j^E \in G_{C_h,C_j}^i$ and $G_{C_h,C_j}^i \in U_i$
added to $K^E$ by encoding $R_i(C_h,C_j)$ 
are satisfied in $J$, that is:
  $(G_{C_h,C_j}^i)^J \in (C_h^E)^J$,  $(C_j^E)^J \in (G_{C_h,C_j}^i)^J$ and $(G_{C_h,C_j}^i)^J \in U_i^J$.
As assertion $R_i(C_h,C_j)$  is satisfied in $I$, that is $(C_h^I,C_j^I) \in R_i^I$.
By construction of $E$ there is $u_{s,t}^i \in D_i \subseteq N$ such that $s=C_h^I$ and $t=C_j^I$,
with $(s,u_{s,t}^i), (u_{s,t}^i ,t) \in E$. 
By definition of the model $J$,  $(G_{C_h,C_j}^i)^J = M(u_{s,t}^i)$. 
Also, $ M(u_{s,t}^i) \in M(s)$ and $ M(t) \in M(u_{s,t}^i)$ hold in $J$.
Replacing $s$ and $t$ with their definitions and $M(u_{s,t}^i)$ with $(G_{C_h,C_j}^i)^J $ we get:
$(G_{C_h,C_j}^i)^J \in M(C_h^I)$ and $M(C_j^I) \in (G_{C_h,C_j}^i)^J$.
 Finally, by construction, $u_{s,t}^i \in D_i $, and $U_i^J=M(D_i)$,
 than $M(u_{s,t}^i) \in M(D_i)= U_i^J$. Therefore, $(G_{C_h,C_j}^i)^J \in U_i^J$.

It is easy to see that the axioms  $A_i \sqsubseteq \neg (U_1 \sqcup \ldots \sqcup U_k)$,  $B_i \in \neg (U_1 \sqcup \ldots \sqcup U_k)$, 
 $C^E \in \neg (U_1 \sqcup \ldots \sqcup U_k) $, 
and axiom  
$ \neg (U_1 \sqcup \ldots \sqcup U_k) \sqsubseteq \texttt{Pow}(\neg (U_1 \sqcup \ldots \sqcup U_k)  \sqcup \texttt{Pow}(\neg (U_1 \sqcup \ldots \sqcup U_k) )) $
are all satisfied in $J$ by construction.

\medskip

In a similar way, we can prove that $F$ is falsified in $I$, considering the different cases for $F$,
and given the hypothesis that $F^E$ is falsified by $J$.

%
%

\medskip

($\Rightarrow$) We sketch the proof of completeness. The proof is by contraposition.
Assume that $K^E \not \models_{\al^\Omega} F^E $, then, there is an $\al^\Omega$ model $J=(\Delta, \cdot^J)$ of $K^E$ such that 
$F^E$ is falsified in $J$.

For the finite model property of $\al^\Omega$ (which is a fragment of $\alc^\Omega$), we can assume without loss of generality that the model $J$ is finite.
We build from $J$ an $\alc^\Omega$ model $I=(\Delta', \cdot^I)$ of $K$ which falsifies $F$, 
defining $\Delta'$ as a transitive set
in the universe $\HF^{1/2}(\mathbb{A})$ consisting of all the hereditarily finite rational hypersets
built from atoms in $\mathbb{A}=\{{\bf a_0}, {\bf a_1},\ldots  \}$.

We start from the graph $G=(N , E)$, with nodes $N= \Delta \backslash (U_1^J \cup \ldots \cup U_k^J )$, whose arcs are defined as follows:
$E=   \{ (d_1, d_2) \mid   d_1, d_2  \in N \wedge  d_2 \in d_1\}$.

$ G$ is finite. 
Observe that, for each $a_i$ in $K$, $B_i^J \in N$, by axiom $B_i \in \neg (U_1 \sqcup \ldots \sqcup U_k)$.
Similarly, for each $A_i$ in $K$, $A_i^J \subseteq N$, by axiom $A_i \sqsubseteq \neg (U_1 \sqcup \ldots \sqcup U_k)$.

We define an injection $\pi$ from the leaves of $N$ (i.e. nodes without any successor) plus the elements $B_1^I, \ldots, B_r^I\in N$ 
to  $\mathbb{A}$.
For any given $ d\in N $, we define the following hyperset $ M(d) $:
\begin{align}\label{def_M(d)}
	M(d) & = \left\{\begin{array}{ll}
						\pi(d) & \mbox{ if } d  \mbox{ is a leaf of } N \mbox{ or } d=B_j^J \mbox{ for some $j$ },\\
						\left\{M(d') \tc (d,d') \in E \right\} & \mbox{ otherwise. } 
					\end{array}\right.
\end{align} 
The above definition uniquely determines hypersets in $ \HF^{1/2}(\mathbb{A}) $. This follows from the fact that all finite systems of (finite) set-theoretic equations have a solution in $ \HF^{1/2}(\mathbb{A}) $.

 $\Delta'=\{M(d) \tc d \in N\} $, possibly extended  by duplicating M(d)'s to represent extensionally-equal (bisimilar) sets corresponding to pairwise distinct elements in $N$.
We have to complete the definition of $ I= \langle \Delta', \cdot^I\rangle $ in such a way to prove that $ I $ is a model of $ K $ in $\alc^\Omega$ falsifying $ F $. The definition is completed as follows:

	
	-  $A^{I}=\{M(d)  \tc M(d) \in \Delta'  \wedge d \in A^J\}$,  for all $A \in N_C$;
	
	-  $R_i^{I}= \{(M(d),M(d')) \tc M(d), M(d') \in \Delta' \wedge \exists u \in U_i^J ( u \in d \wedge d' \in u) \}$, 
	
	   $ \mbox{  \ } $ for all roles $R_i$ occurring in $K$;  $\mbox{ }$ $R_i^{I}=\emptyset$ for all other roles $R\in N_R$;
		
	 - $a_i^I= M(B_i^J)= \pi(B_i^J)$  for all  named individuals  $a_i$ occurring in $K$;
	 
	       $ \mbox{  \ } $   $a^{I}=M(B_1^J)$ for all other $a \in N_I$.
	         	          
\noindent
By construction,  $\Delta'$ is a transitive set in a model $\emme$ of $\Omega$.
As $A^J \subseteq \Delta \backslash (U_1^J \cup \ldots \cup U_k^J )$, $A_i^I \subseteq \Delta'$.
To complete the proof it can be shown that, for all $M(d) \in \Delta'$, and $C$ in $K$ (or in $F$):
\begin{align}\label{encodingE_completeness}
 M(d) \in C^I   \mbox{ if and only if }    d \in (C^E)^J 
\end{align}
which can be used to prove that $J$ is a model of $K$ that falsifies $F$. 

We prove ( \ref{encodingE_completeness}) by induction on the structural complexity of concepts.
Let $M(d) \in  \Delta'$. We consider the cases of named concepts, the power-set concept and the universal restriction.

\medskip
\noindent
Let $C'=A_i$, for some $A_i \in N_C$.

$M(d)  \in A_i^I$  {\em iff}  $d \in A_i^J$ 
(by definition of $ A_i ^I$) 

$\;$ \ \ \ \ \ \ \ \ \ \ \ \ \  {\em iff} 
$d  \in (A_i^E)^J$ (as $ A_i^E= A_i$) 

\medskip
\noindent
Let $C'=\texttt{Pow}(C)$.

$d \in ((\texttt{Pow}(C))^E) ^J$  {\em iff} 

$\;$ \ \ \ \    {\em iff} $d  \in ( \texttt{Pow}(  U_1 \sqcup \ldots \sqcup U_k   \sqcup  C^E)) ^J$ (by the encoding $E$)

$\;$ \ \ \ \    {\em iff} $d  \in {Pow}((  U_1 \sqcup \ldots \sqcup U_k   \sqcup  C^E) ^J) \cap \Delta$ (semantics of $\al^\Omega$)

$\;$ \ \ \ \     {\em iff}  $d  \subseteq (  U_1 \sqcup \ldots \sqcup U_k   \sqcup  C^E) ^J$ 

$\;$ \ \ \ \   {\em iff}  $d  \subseteq   U_1^J \cup \ldots \cup U_k^J   \cup (C^E) ^J$ 

$\;$ \ \ \ \ {\em iff}  $ \forall d' \in d$, $d'  \in (U_1^J \cup \ldots \cup U_k^J)$ ,  or  $d' \in  (C^E) ^J$ 

$\;$ \ \ \ \ {\em iff}  $ \forall d' \in d$, if $d'  \not \in (U_1^J \cup \ldots \cup U_k^J)$,  then  $d' \in  (C^E) ^J$ 

$\;$ \ \ \ \ {\em iff}  $ \forall d' \in \Delta$, if $(d,d') \in E$,  then  $d' \in  (C^E) ^J$ 

$\;$ \ \ \ \ {\em iff}  $ \forall d' \in \Delta$, if $(d,d') \in E$,  then  $M(d') \in  C^I$ (by inductive hypothesis)

$\;$ \ \ \ \ {\em iff}  $ \forall M(d') \in \Delta'$, if $M(d')\in M(d)$,  then  $M(d') \in  C^I$ 

$\;$ \ \ \ \ {\em iff}   $M(d) \subseteq  C^I$

$\;$ \ \ \ \ {\em iff}   $M(d) \in   Pow(C^I)$

$\;$ \ \ \ \   {\em iff}   $M(d) \in  Pow(C^I) \cap \Delta'$  (as $M(d) \in \Delta'$)

$\;$ \ \ \ \   {\em iff}   $M(d) \in ( \texttt{Pow}(C))^I$  (semantics of $\alc^\Omega$)

\medskip

\noindent
Let $C'= \forall R_i. C$ 

$d \in ((\forall R_i. C)^E) ^J$  {\em iff} 

$\;$ \ \ \ \    {\em iff} $d  \in ( \texttt{Pow}(\neg  U_i  \sqcup  \texttt{Pow}(C^E))) ^J$ (by the encoding $E$)

$\;$ \ \ \ \    {\em iff} $d  \in {Pow}((\neg  U_i  \sqcup  \texttt{Pow}(C^E)) ^J) \cap \Delta$ (semantics of $\al^\Omega$)

$\;$ \ \ \ \     {\em iff}  $d  \subseteq ( \neg  U_i \sqcup   \texttt{Pow}(C^E) ) ^J$ 

$\;$ \ \ \ \     {\em iff}  $d \subseteq ( \neg  U_i^J \cup   (\texttt{Pow}(C^E) ) ^J$ 

$\;$ \ \ \ \     {\em iff}  $d  \subseteq (\Delta \backslash  U_i^J) \cup   (\texttt{Pow}(C^E))^J$ 

$\;$ \ \ \ \ {\em iff}  $ \forall d' \in d$, $d'  \not \in U_i^J$  or  $d' \in   {Pow}((C^E )^J) \cap \Delta$ 

$\;$ \ \ \ \ {\em iff}  $ \forall d' \in d$, $d'  \not \in U_i^J$  or  $d' \in   {Pow}((C^E )^J)$ (by transitivity of $\Delta$, $d' \in \Delta$)

$\;$ \ \ \ \ {\em iff}  $ \forall d' \in d$, if $d'  \in U_i^J$, then  $d' \subseteq  (C^E )^J $

$\;$ \ \ \ \ {\em iff}  $ \forall d' \in d$, if $d'  \in U_i^J$, then    $\forall d'' \in d'$,  $d'' \in (C^E )^J$ 

$\;$ \ \ \ \ \ \ \ \ \ \ \ (and by $Trans^2(\Delta \backslash (U_1 \sqcup \ldots \sqcup U_k ))$, $d'' \in N$)

$\;$ \ \ \ \   {\em iff}   $ \forall d' \in d$, if $d' \in U_i^J$,  then   $\forall d'' \in d'$,  $M(d'') \in C^I$  (by inductive hypothesis)

$\;$ \ \ \ \   {\em iff}   $ \forall d' , d'' \in N$, if $d' \in U_i^J$ and $ d' \in d$ and  $d'' \in d'$, then $M(d'') \in C^I$ 

$\;$ \ \ \ \   {\em iff}   $ \forall M(d'') \in \Delta'$, if $(M(d),M(d'')) \in R_i^I$,  $M(d'') \in C^I$ (by definition of $R_i^I$)

$\;$ \ \ \ \   {\em iff}   $M(d) \in  (\forall  R_i. C)^I$ 

\medskip

The equivalence (\ref{encodingE_completeness}) can be used to prove that $I$ is a model of $K$ that falsifies $F$,
thus showing that $K \not \models_{ \alc^\Omega} F$.

Let us prove that $I$ is a model of $K$.

For the inclusion axioms, let $C \sqsubseteq D$ be in $K$.
Then $C^E  \sqcap \neg (U_1 \sqcup \ldots \sqcup U_k ) \sqsubseteq D^E$ is in $K^E$, and is satisfied in $J$,
that is, for all $d \in \Delta$, if  $d \in (C^E)^J$ and $d \not \in U_i^J$ (for all $j=1,k$), then $d \in (D^E)^J$.
Let $M(d) \in C^I$. 
By (\ref{encodingE_completeness}),
$d \in (C^E)^J$ and, as $d \in N$, $d \not \in U_i^J$ (for all $j=1,k$).  Hence,  $d \in (D^E)^J$.
Again by  (\ref{encodingE_completeness}), $M(d) \in D^I$.

For the membership axioms, let $C \in D$ in $K$.
The membership axioms $C^E \in D^E$ and $C^E \in \neg (U_1 \sqcup \ldots \sqcup U_k )$
 are in $K^E$ and are satisfied in $J$.
i.e., $(C^E)^J \in (D^E)^J$ and $(C^E)^J  \not \in (U_1 \sqcup \ldots \sqcup U_k )^J$.
Thus 
$(C^E)^J  \in N$ and,
by (\ref{encodingE_completeness}), $M((C^E)^J) \in D^I$ .
Again by (\ref{encodingE_completeness}),
$C^I= \{ M(d) \mid d \in (C^E)^J\} = M(C^E)^J)$,
thus $C^I \in D^I$.

For each assertion $C(a_i)$ in $K$, the membership axiom $a_I^E \in C^E$ is in $K^E$.
Therefore, $(a_i^E)^J \in (C^E)^J$.
By definition of the encoding,  $B_i^J \in (C^E)^J$ and $B_i^J \in \neg (U_1 \sqcup \ldots \sqcup U_k )^J$.
Thus, $B_i^J \in N$.
By  (\ref{encodingE_completeness}), $M(B_i^J) \in C^I$.
Hence, $a_i^I \in C^I$ (by definition of the interpretation of $a_i$ in $I$).

For each assertion $R_i(a_h,a_j)$  in $K$,
we have to show that $(a_h^I,a_j^I) \in R_i^I$.
The membership axioms $F_{h,j}^i \in B_h$,  $B_j \in F_{h,j}^i$ and $F_{h,j}^i \in U_i$ are in $K^E$,
and are satisfied in $J$.
Thus,  $(F_{h,j}^i)^J \in B_h^J$,  $B_j^J \in (F_{h,j}^i)^J$ and $(F_{h,j}^i)^J \in U_i^J$.
Let $d = (F_{h,j}^i)^J \in \Delta$.
Given that $d \in U_i^J$, from $d \in B_h^J$ and $B_j^J \in d$, by definition of $R_i^I$,
and $B_h^J$ and $B_j^J$ are in $N$ (by axioms $B_h^J \in \neg (U_1 \sqcup \ldots \sqcup U_k )^J$)
and $B_j^J \in \neg (U_1 \sqcup \ldots \sqcup U_k )^J$) )
we have $(M(B_h^J), M(B_j^J)) \in R_i^I$.
By definition of $I$, $a_i^I=M(B_i^J)$ and  $a_j^I=M(B_j^J)$, therefore:  
$(a_h^I, a_j^I)) \in R_i^I$.

For each assertion $R_i(C_h,C_j)$  in $K$,
we have to show that $(C_h^I,C_j^I) \in R_i^I$.
The membership axioms $G_{C_h,C_j}^i \in C_h^E$,  $C_j^E \in G_{C_h,C_j}^i$ and $G_{C_h,C_j}^i \in U_i$ are in $K^E$,
and are satisfied in $J$.
Thus,  $(G_{C_h,C_j}^i)^J \in (C_h^E)^J$,  $(C_j^E)^J \in (G_{C_h,C_j}^i)^J$ and $(G_{C_h,C_j}^i)^J \in U_i^J$.
Let $d = (G_{C_h,C_j}^i)^J \in \Delta$.
Given that $d \in U_i^J$, from $d \in (C_h^E)^J$ and $(C_j^E)^J \in d$.
As $C_h^E \in \neg (U_1 \sqcup \ldots \sqcup U_k )$. and $C_j^E \in \neg (U_1 \sqcup \ldots \sqcup U_k )$ are in $K^E$,
$(C_h^E)^J,(C_j^E)^J in N$ and,  by definition of $R_i^I$,
$(M((C_h^E)^J), M((C_j^E)^J)) \in R_i^I$.
By  (\ref{encodingE_completeness}),  
$C_h^I= \{ M(d) \mid d \in (C_h^E)^J\} = M(C_h^E)^J$,
and similarly $C_j^I=  M(C_j^E)^J$.
Hence,
$(C_h^I, C_j^I)) \in R_i^I$.

\medskip

In a similar way, we can prove that $F$ is falsified in $I$, considering the different cases for $F$,
and given the hypothesis that $F^E$ is falsified by $J$.
\hfill $\Box$
\end{proof}

\end{appendix}

\end{document}